\theoremstyle{plain}
\newtheorem{theorem}{Theorem}[section]
\newtheorem{lemma}[theorem]{Lemma}
\newtheorem{corollary}[theorem]{Corollary}
\newtheorem{proposition}[theorem]{Proposition}
\theoremstyle{definition}
\newtheorem{definition}[theorem]{Definition}
\theoremstyle{remark}
\newtheorem{remark}[theorem]{Remark}
\numberwithin{equation}{section}
\newenvironment{algorithm2e}{%
  \begin{algorithm}%
}{%
  \end{algorithm}%
}
\newtheorem{fact}{Fact}
\def\A{\mathcal{A}}
\def\C{\mathcal{C}}
\def\D{\mathcal{D}}
\def\F{\mathcal{F}}
\def\S{\mathbb{S}}
\newcommand*{\N}{{\mathbb{N}}}
\newcommand*{\Z}{{\mathbb{Z}}}
\newcommand*{\R}{{\mathbb{R}}}
\let\eps\epsilon
\let\phi\varphi
\DeclareMathOperator*{\pr}{\mathbb{P}}
\DeclareMathOperator*{\E}{\mathbb{E}}
\DeclareMathOperator*{\var}{Var}
\let\hat\widehat
\DeclareMathOperator{\poly}{poly}
\DeclareMathOperator{\sign}{sign}
\DeclareMathOperator{\proj}{proj}
\DeclareMathOperator{\ind}{\mathbbm{1}}
\newcommand{\cube}[1]{\{\pm 1\}^{#1}}
\newcommand{\ignore}[1]{} 
\newcommand*{\w}{\mathbf{w}}
\newcommand*{\vv}{\mathbf{v}}
\newcommand*{\vu}{\mathbf{u}}
\newcommand*{\x}{\mathbf{x}}
\newcommand*{\z}{\mathbf{z}}
\newcommand*{\Dgeneric}{\D}
\newcommand*{\Dtrain}{\D}
\newcommand{\Dtest}{\D'}
\newcommand{\train}{\mathrm{train}}
\newcommand{\test}{\mathrm{test}}
\newcommand{\concept}{f}
\newcommand{\copt}{\concept^*}
\newcommand{\Slabelled}{S}
\newcommand{\Sunlabelled}{X}
\newcommand{\Strain}{\Slabelled_\train}
\newcommand{\Stest}{\Sunlabelled_\test}
\newcommand{\Gauss}{\mathcal{N}}
\newcommand{\mslack}{\Delta}
\newcommand{\mindex}{\alpha}
\newcommand{\mtrain}{m_{\train}}
\newcommand{\mtest}{m_{\test}}
\newcommand{\T}{\mathcal{T}}
\newcommand{\bias}{\eta}
\newcommand{\convset}{\mathcal{K}}
\newcommand{\subspace}{\mathcal{U}}
\newcommand{\truesubspace}{\mathcal{W}}
\newcommand{\esterror}{\gamma}
\newcommand{\nondegen}{\beta}
\newcommand{\threshold}{T}
\newcommand{\taus}{\mathcal{T}}
\newcommand{\mdegree}{r}
\newcommand{\covar}{M}
\newcommand{\tol}{\phi}
\newcommand{\mrec}{m_{\A}}
\title{Learning Intersections of Halfspaces with Distribution Shift: Improved Algorithms and SQ Lower Bounds}
\author{
     Adam R. Klivans\thanks{\texttt{klivans@cs.utexas.edu}. Supported by NSF award AF-1909204 and the NSF AI Institute for Foundations of Machine Learning (IFML).} \\
	 UT Austin
	 \and Konstantinos Stavropoulos\thanks{\texttt{kstavrop@cs.utexas.edu}. Supported by NSF award AF-1909204, the NSF AI Institute for Foundations of Machine Learning (IFML) and by scholarships from Bodossaki Foundation and Leventis Foundation.} \\
	 UT Austin
     \and
    Arsen Vasilyan\thanks{\texttt{vasilyan@mit.edu}. Supported in part by NSF awards CCF-2006664, DMS-2022448, CCF-1565235, CCF-1955217,\\ CCF-2310818, Big George Fellowship and Fintech@CSAIL. Work done in part while visiting UT Austin.} \\
	 MIT
}
\begin{document}

\maketitle

\begin{abstract}%
  Recent work of Klivans, Stavropoulos, and Vasilyan initiated the study of {\em testable learning with distribution shift} (TDS learning), where a learner is given labeled samples from training distribution ${\cal D}$, unlabeled samples from test distribution ${\cal D'}$, and the goal is to output a classifier with low error on ${\cal D'}$ whenever the training samples pass a corresponding test.  Their model deviates from all prior work in that no assumptions are made on ${\cal D'}$.  Instead, the test must accept (with high probability) when the marginals of the training and test distributions are equal. 

  Here we focus on the fundamental case of intersections of halfspaces with respect to Gaussian training distributions and prove a variety of new upper bounds including a $2^{(k/\eps)^{O(1)}} \mathsf{poly}(d)$-time algorithm for TDS learning intersections of $k$ homogeneous halfspaces to accuracy $\epsilon$ (prior work achieved $d^{(k/\eps)^{O(1)}}$). 
 We work under the mild assumption that the Gaussian training distribution contains at least an $\eps$ fraction of both positive and negative examples ($\eps$-balanced).  We also prove the first set of SQ lower-bounds for any TDS learning problem and show (1) the $\eps$-balanced assumption is necessary for $\mathsf{poly}(d,1/\eps)$-time TDS learning for a single halfspace and (2) a $d^{\tilde{\Omega}(\log 1/\eps)}$ lower bound for the intersection of two general halfspaces, even with the $\eps$-balanced assumption.

 Our techniques significantly expand the toolkit for TDS learning.  We use dimension reduction and coverings to give efficient algorithms for computing a {\em localized} version of discrepancy distance, a key metric from the domain adaptation literature.

\end{abstract}

\vspace{1em}

\section{Introduction}
Distribution shift continues to be a major barrier for deploying AI models, especially in the health and bioscience domains.  By far the most common approach to modeling distribution shift (or domain adaptation) is to bound the performance of a classifier in terms of some notion of distance between the training and test distributions \cite{ben2006analysis,mansour2009domadapt}.  These distances, however, are computationally intractable to estimate, as they are defined in terms of an enumeration over all classifiers from some class.  As such, learners constrained to run in polynomial-time obtain no guarantees on the performance of a classifier (without making strong assumptions on the test distribution).

A recent work of Klivans, Stavropoulos, and Vasilyan \cite{klivans2023testable} departs from this paradigm and defines a model of {\em testable learning with distribution shift} (TDS learning).  In this model, a learner first runs a test on labeled samples drawn from training distribution ${\cal D}$ and {\em unlabeled} samples drawn from test distribution ${\cal D'}$.  No assumptions are made on ${\cal D'}$.  If the test accepts, the learner outputs a classifier that is guaranteed to have low error with respect to ${\cal D'}$.  Further, the test must accept (with high probability) whenever the marginal of ${\cal D}$ equals the marginal of ${\cal D'}$.  It is clear that this model generalizes the traditional PAC model of learning (where ${\cal D}$ always equals ${\cal D'}$), and, as described in \cite{klivans2023testable}, obtaining efficient algorithms seems considerably more challenging.  Giving positive results for TDS learning with running times that match known results in the traditional PAC model is therefore a best-case scenario. 
\subsection{Our Results}
Here we focus on the intensely studied problem of learning intersections of halfspaces (or halfspace intersections) with respect to Gaussian distributions, where large gaps exist between the best known algorithms for TDS learning versus ordinary PAC learning.  Our main contribution is a set of new positive results all of which greatly improve on prior work in TDS learning and in some cases match the best known bounds for PAC learning (see Tables 1 and 2 for precise statements of bounds).  Our algorithm assumes that the training distribution contains at least an $\eps$ fraction of both positive and negative examples ($\eps$-balanced), which turns out to be necessary, as we describe below.

Indeed, we provide the first set of SQ lower bounds for {\em any} problem in TDS learning (that was not already known in the traditional PAC model of learning).  We show that no polynomial-time SQ algorithm can TDS learn a single halfspace unless the training distribution is $\eps$-balanced.  Further, we prove that no polynomial-time SQ algorithm can TDS learn the intersection of two general halfspaces, even if we assume the training distribution is $\eps$-balanced.  Taken together, these results considerably narrow the gap between efficient TDS learnability and PAC learnability for halfspace-based learning. 

\begin{table*}[!htbp]\begin{center}
\begin{tabular}{c c c  c c} 
 \hline
 & \textbf{Type of Intersection} & \textbf{Run-time} &\textbf{Test Set Size} &\textbf{Reference}      \\ \midrule
1 & \begin{tabular}{c} Homogeneous \end{tabular} & $\poly(d) 2^{\poly(\frac{k}{\eps})}$ & $\poly(dk/\eps)$ & \begin{tabular}{c} \Cref{corollary:tds-homogeneous-results-main} \end{tabular} \\ \midrule
2 & \begin{tabular}{c}  Homogeneous \end{tabular} & $(\frac{dk}{\eps})^{O(k)} + d (\frac{k}{\eps})^{O(k^2)}$ & $\poly(dk/\eps)$ & \begin{tabular}{c} \Cref{corollary:tds-homogeneous-results-main} \end{tabular} \\ \midrule 
3 & \begin{tabular}{c}  General  \end{tabular} & $d^{\poly(k/\eps)}$ & $d^{\poly(k/\eps)}$  & \begin{tabular}{c} \cite{klivans2023testable} \end{tabular} \\ \midrule
4 & \begin{tabular}{c}  General  \end{tabular} & \begin{tabular}{c} $d^3 2^{\poly(k/\eps)} +$ \\ $ d^{O(\log(\frac{k}{\eps}))}(\frac{k}{\eps})^{O(k^2)}$ \end{tabular} & $d^{O(\log(\frac{k}{\eps}))}$  & \begin{tabular}{c} \Cref{corollary:tds-general-results-main} \end{tabular} \\ \midrule
5 & \begin{tabular}{c}  Homogeneous \\ Non-Degenerate \end{tabular} & \begin{tabular}{c} $ \poly(d)(\frac{k}{\eps})^{O(k^2)}$ \end{tabular} & $\poly(dk/\eps)$  & \begin{tabular}{c} \Cref{corollary:homogeneous-results} \end{tabular} \\
 \bottomrule
\end{tabular}
\end{center}
\caption{Upper Bounds for TDS Learning $\eps$-Balanced Intersections of $k$ Halfspaces under $\Gauss_d$. All bounds here improve on the best previous bound in row three.  For {\em noise-free PAC learning} intersections of $k$ halfspaces can be learned in time $(dk/\eps)^{O(k)}$ \cite{vempala2010random} and is the best known bound for small $k$. We nearly match this bound in row two above and provide an incomparable result in row four. In row five, we improve on all of these bounds under a non-degeneracy assumption on the intersection of halfspaces; see the Related Work section for a discussion.}
\label{table:upper-bounds}
\end{table*}

\begin{table*}[!htbp]\begin{center}
\begin{tabular}{c c c c} 
 \hline
 & \textbf{Halfspace Type} & \textbf{Assumption on Intersection} &\textbf{SQ Complexity}  \\ \midrule
1 & Homogeneous & Arbitrary  & $\poly(d/\eps)$, for $k=1$ \\ \midrule
2 & Homogeneous & Arbitrary  & $d^{\omega_\eps(1)}$, for $k\ge 2$ \\ \midrule
3 & Homogeneous & $\eps$-Balanced & $\poly(d/\eps)$, for $k=\Theta(1)$  \\ \midrule
4 & General & Arbitrary & $d^{\tilde{\Theta}(\log(1/\eps))}$, for $k= 1$  \\ \midrule
5 & General &  \begin{tabular}{c} $\eps$-Balanced \&\\ $\Theta(1)$-non-degenerate \end{tabular} & $d^{\tilde{\Theta}(\log(\frac{1}{\eps}))}$, for $k\ge 2$, $k=\Theta(1)$ \\ 
 \bottomrule
\end{tabular}
\end{center}
\caption{Statistical Query complexity (upper and lower) bounds for TDS Learning $k$-Halfspace Intersections under $\Gauss_d$. No prior SQ lower bounds for any TDS learning problem were known. For the balance assumption, see \Cref{definition:bounded-bias-non-degeneracy}. For the non-degeneracy assumption, see \Cref{definition:non-degeneracy}. Row 1 and the upper bound of row 4 are from \cite{klivans2023testable}. All other results are from this work: \Cref{theorem:sq-two-homogeneous} (row 2), \Cref{corollary:homogeneous-results} (row 3), \Cref{theorem: TDS learning requires d to the log} (row 4), \Cref{theorem: TDS learning requires d to the log even balanced} (row 5, lower bound),  \Cref{corollary:general-results} (row 5, upper bound). The lower bounds of rows 4, 5 hold for $d=O(\eps^{-1/4})$.}
\label{table:sq-bounds}
\end{table*}


\subsection{Techniques}

\paragraph{TDS Learning through Covering the Solution Space.} Our upper bounds are based on the idea of constructing a set of candidate output hypotheses that has three properties: (1) it has small size, (2) it contains one hypothesis with low test error and (3) all of the hypotheses in the set have low training error. Once such a cover is constructed, a small set of unlabeled data from the test distribution is sufficient to ensure that all of the members of the cover have low training error. This is possible by estimating the discrepancy distance between the test marginal and the Gaussian, but only with respect to the members of the cover, i.e., estimating the maximum probability of disagreement between pairs of elements of the cover under the test marginal. Since the cover is small (by (1)), this can be done efficiently and since all of the hypotheses have low training error (by (3)), the test should accept in the absence of distribution shift. If the test accepts, then all of the members have low disagreement with one hypothesis with low test error (by (2)) and they, hence, have low test error as well. The learner may then output any member of the cover.

\paragraph{Constructing Covers for Halfspace Intersections.} Our method for covering the solution space for TDS learning halfspace intersections is based on two main ingredients. The first ingredient is access to an algorithm that uses training data and retrieves a low-dimensional subspace that is guaranteed to approximately contain (in terms of angular distance) each of the normal vectors that define the ground truth intersection. See the Related Work section for a more detailed discussion on subspace recovery algorithms. The second ingredient is a local halfspace disagreement tester, namely, a tester that takes as input a vector (and unlabelled test data) and certifies that all of the vectors that are geometrically close to the input define halfspaces with low disagreement to the one defined by the input under the test distribution. Such testers have been proposed in the literature of testable learning \cite{gollakota2023efficient,gollakota2023tester} and TDS learning \cite{klivans2023testable}, but, we provide an additional one for the case of general halfspaces. Equipped with both of these ingredients, we use a Euclidean cover for the sphere in the low-dimensional subspace retrieved and run the disagreement tester on each vector in the cover. We form a cover of the solution space with the desired properties by forming all possible intersections of halfspaces with normals in the Euclidean cover and keeping only those with low training error.

For general halfspaces, we also use an additional moment-matching tester which ensures that halfspaces with very high bias can be safely omitted from the output hypothesis, because the test distribution is certified to be sufficiently concentrated in every direction. This is important, because the training data does not reveal enough information for such halfspaces and, hence, it is not guaranteed that their normals will be approximately contained in the retrieved subspace.

\paragraph{SQ Lower Bounds for TDS Learning from Lower Bounds for NGCA.} We prove our statistical query (SQ) lower bounds by reducing appropriate distribution testing problems to TDS learning. The distribution testing problems we consider fall in the category of Non-Gaussian Component Analysis (NGCA) where a distinguisher has access to an unknown distribution and is asked to distringuish whether the distribution is Gaussian or it is Gaussian in all but one hidden direction where the marginal satisfies certain problem-specific conditions. \cite{diakonikolas2023sq} provide SQ lower bounds for various instantiations of the problem. 

We show that a TDS learner for general halfspaces can distinguish the Gaussian from any distribution that has some non-negligible mass far from the origin along some hidden direction. We then construct a distribution that is Gaussian in all but one direction along which the marginal (1) exactly matches moments with the standard Gaussian up to some degree and (2) assigns non-negligible mass far from the origin. To show approximate moment matching, we use a mass transportation argument and for exact moment matching, we use an argument based on the theory of Linear Programming from \cite{pmlr-v195-diakonikolas23b-sq-mixtures}. Under these conditions, a generic tool from \cite{diakonikolas2023sq} implies an SQ lower bound for the distinguishing problem we constructed and hence an SQ lower bound for TDS learning. A similar construction gives a lower bound for intersections of two general halfspaces. For intersections of two homogeneous halfspaces, we reduce the problem of anti-concentration detection (whose SQ lower bound is given in \cite{diakonikolas2023sq}) to the corresponding TDS learning problem.

\subsection{Related Work}
\paragraph{Intersections of Halfspaces} Learning intersections of halfspaces continues to be an important benchmark for algorithm design in learning theory with a long history of prior work \cite{LongW94,BlumK97,KOSOS04,klivans2009cryptographic,klivans2009baumsalgo,KOS08,vempala2010random,vempala2010learning,GKM12,KKM13,DKS18a}.  Finding a fully polynomial-time algorithm for learning the intersection of $k$ halfspaces in $d$ dimensions to accuracy $\epsilon$ remains a notorious open problem, even in the case of noise-free PAC learning with respect to Gaussian marginals. 

The most relevant works here are \cite{vempala2010random} and \cite{vempala2010learning} which both attempt to recover the subspace spanned by the $k$ normals of the relevant halfspaces.  This type of subspace recovery is a crucial ingredient for our work here, as we describe in the Techniques subsection above.    In \cite{vempala2010random}, an algorithm with running time and sample complexity $(dk/\eps)^{O(k)}$ is given for noise-free PAC learning with respect to log-concave marginals.  In a follow-up work \cite{vempala2010learning} claims an improved bound of $(k/\eps)^{O(k)} \mathsf{poly}(d)$.  Unfortunately, this proof has a gap. 
In Appendix \ref{section:appendix-recovery-general} we provide a complete proof of a weaker result using the approach of \cite{vempala2010learning}, namely we obtain a $2^{O(k^2/\eps^2)} \mathsf{poly(d,k)}$ time algorithm for intersections of homogeneous halfspaces.  If we take a non-degeneracy assumption on the ground truth intersection (see Appendix \ref{section:appendix-recovery-non-degenerate}), we prove that the gap can be fixed and we recover the  $(k/\eps)^{O(k)} \mathsf{poly}(d)$ bound. 

For large values of $k$, the best known bound of $d^{\tilde{O}(\log k/\eps^2)}$ for PAC or agnostic learning is due to \cite{KOS08}, obtained using the Gaussian surface area/Hermite analysis approach.  For TDS learning, \cite{klivans2023testable} gave an algorithm with running time $d^{\tilde{O}(k^{6}/\eps^2)}$ that is improper and outputs a polynomial threshold function as the final hypothesis.  In addition to improving their bounds on run-time (as described in Table 1), the algorithm we present here is proper: our learner gives an intersection of $k$ halfspaces as its output hypothesis. 

\paragraph{Distribution Shift/Domain Adaptation}
The field of domain adaptation considers problems very similar to the model introduced here.  A learner is presented with labeled training samples, unlabeled test samples, and is required to output a classifier with low test error.  The learner in traditional domain adaptation, however, is not allowed to reject.  The area is too broad for us to survey here, and we refer the reader to \cite{redko2020survey} and references therein.  We highlight the works of \cite{ben2006analysis} and \cite{mansour2009domadapt}, which provide sample complexity upper bounds for domain adaptation in terms of {\em discrepancy distance}. It is proved in \cite{klivans2023testable} that the notion of discrepancy distance also provides sample complexity guarantees for TDS learning.  The first set of efficient algorithms for domain adaptation without taking strong assumptions on the test distribution were given by \cite{klivans2023testable}.  We also note related work due to \cite{goldwasser2020beyond,kalai2021efficient,goel2023adversarial} on PQ learning, a model formally shown to be harder than TDS learning in \cite{klivans2023testable}. 

\paragraph{Testable Learning}
Although both the Testable Learning framework due to \cite{rubinfeld2022testing} and TDS learning allow a learner to reject unless a training set passes a test, the models address very different issues and are formally incomparable.  In testable learning, the goal is to certify that an {\em agnostic} learner has succeeded (or reject).  In particular, (1) testable learning is trivial in the realizable (noise-free) framework (recall in this paper we work exclusively in a noise-free setting) and (2) testable learning does not allow for distribution shift.  For a further comparison of the models see \cite{klivans2023testable}. We do make use of some general techniques from testable learning, as we describe in the Techniques section.  

\subsection{Preliminaries}

For $\vv\in\R^d,\tau\in \R$, we call a function of the form $\x\mapsto \sign(\vv\cdot \x)$ a homogeneous halfspace and a function of the form $\x\mapsto \sign(\vv\cdot \x + \tau)$ a general halfspace over $\R^d$. An intersection of halfspaces is a function from $\R^d$ to $\cube{}$ of the form $\x\mapsto 2\wedge_{i\in [k]}\ind\{\w^i\cdot \x+\tau^i\ge 0\}-1$, where $\w^i$ are called the normals of the intersection and $\tau^i$ the corresponding thresholds. Let $\Gauss_d$ be the standard Gaussian in $d$ dimensions. For a subspace $\subspace$, let $\proj_\subspace(\w)$ be the orthogonal projection of a vector $\w$ on the subspace $\subspace$.

\paragraph{Learning Setup.} We focus on the framework of \textbf{testable learning with distribution shift (TDS learning)} defined by \cite{klivans2023testable}. In particular, for a concept class $\C\subseteq\{\R^d\to\cube{}\}$, the learner $\A$ is given $\eps,\delta\in(0,1)$, a set $\Strain$ of labelled examples of the form $(\x,\copt(\x))$, where $\x\sim\Dtrain = \Gauss_d$ and $\copt\in\C$, as well a set $\Stest$ of unlabelled examples from an arbitrary test distribution $\Dtest$ and is asked to output a hypothesis $h:\R^d\to \{\pm 1\}$ with the following guarantees.
\begin{enumerate}[label=\textnormal{(}\alph*\textnormal{)}]
    \item (Soundness.) With probability at least $1-\delta$ over the samples $\Strain,\Stest$ we have: 
    
    If $\A$ accepts, then the output $h$ satisfies $\pr_{\x\sim\Dtest}[\copt(\x)\neq h(\x)] \leq \eps$.
    \item (Completeness.) Whenever $\Dtest = \Gauss_d$, $\A$ accepts w.p. at least $1-\delta$ over $\Strain,\Stest$.
\end{enumerate}
If the learner $\A$ enjoys the above guarantees, then $\A$ is called an $(\eps,\delta)$-TDS learner for $\C$ w.r.t. $\Gauss_d$. Since the probability of success can be amplified through repetition (see \cite[Proposition C.1]{klivans2023testable}), in what follows, we will provide algorithms with constant failure probability.

\section{Proper TDS learners for Halfspace Intersections}\label{section:tds-learners}

\subsection{Warm-up: Intersections of Homogeneous Halfspaces}\label{section:algo-homogeneous-intersections}

Our first main result concerns the problem of TDS learning intersections of homogeneous halfspaces with respect to the Gaussian distribution. For a single homogeneous halfspace \cite{klivans2023testable} showed that there is a fully polynomial-time TDS learner under Gaussian marginals. The learner crucially relied on the approximate recovery of the normal vector corresponding to the ground truth halfspace in terms of angular distance using training data. After obtaining a vector that is geometrically close to the ground truth, the learner used unlabelled test data to certify that any halfspace near the recovered one (and, hence, also the ground truth) does not significantly disagree with the recovered halfspace on the test distribution.  Such a certificate can be obtained through appropriate localized testers that rely on low-degree moment estimation (introduced in the testable learning literature, see \cite{gollakota2023efficient,gollakota2023tester}).

We significantly generalize this approach beyond the case of a single halfspace and obtain improved TDS learners for intersections of any number of homogeneous halfspaces (as well as general halfspaces in \Cref{section:algo-general-intersections}). Our approach is once more to recover some information about the ground truth that can be measured in geometric terms. In particular, the appropriate notion of geometric recovery for the case of halfspace intersections is approximate subspace retrieval, namely, recovering a subspace that approximately contains all of the normals to the ground truth intersection, as defined below.

\begin{definition}[Approximate Subspace Retrieval for Homogeneous Halfspaces]\label{definition:subspace-retrieval}
    We say that algorithm $\A$ $(\eps,\delta)$-retrieves the relevant subspace for $\C$ (whose elements are homogeneous halfspace intersections) under $\Gauss_d$ if $\A$, upon receiving at least $\mrec$ examples of the form $(\x,\copt(\x))$, where $\x\sim\Gauss_d$ and $\copt\in\C$, outputs, w.p. at least $1-\delta$ a subspace $\subspace$ such that for any normal $\w$ of $\copt$ we have $\|\proj_\subspace \w\|_2 \ge 1 - \eps$.
\end{definition}

It turns out that the idea of approximate subspace retrieval has been explored in the literature of standard PAC learning, as it can be used to provide strong PAC learning guarantees and proper algorithms. We may, therefore, use existing results on approximate subspace retrieval (see \Cref{section:appendix-pca}) as a first step of our TDS learning algorithm. Once we have obtained a low-dimensional subspace that approximately contains all the normals, we (1) generate a small cover of the candidate solution space, (2) acquire (using unlabeled test examples) a certificate that the cover contains a hypothesis with low test error and (3) bound the discrepancy distance (notion from domain adaptation) of the test marginal with the Gaussian, but only with respect to the candidate solution space. We obtain the following result, whose full proof can be found in \Cref{section:appendix-tds-homogeneous}.

\begin{theorem}[TDS Learning Intersections of Homogeneous Halfspaces]\label{theorem:tds-homogeneous-intersections}
    Let $\C$ be a class whose elements are intersections of $k$ homogeneous halfspaces on $\R^d$, $\eps\in (0,1)$ and $C\ge 1$ a sufficiently large constant. Assume that $\A$ $(\frac{\eps^3}{Ck^3}, 0.01)$-retrieves the relevant subspace for $\C$ under $\Gauss_d$ with sample complexity $\mrec$.
    Then, there is an algorithm (\Cref{algorithm:tds-homogeneous-intersections-appendix}) that $(\eps,\delta=0.02)$-TDS learns the class $\C$, using $\mrec+ \tilde{O}(\frac{dk^2}{\eps^2})$ labeled training examples and $\tilde{O}(\frac{dk^2}{\eps^2})$ unlabelled test examples, calls $\A$ once, and uses additional time $\tilde{O}(\frac{d^3k^2}{\eps^2})+d(k/\eps)^{O(k^2)}$.
\end{theorem}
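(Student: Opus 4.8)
The plan is to realize the ``cover the solution space'' strategy in three stages: subspace retrieval, construction of a small solution cover, and test-side certification of that cover. First I would run the assumed algorithm $\A$ once on $\mrec$ labeled training examples to obtain, with probability at least $0.99$, a subspace $\subspace$ of dimension $O(k)$ with $\|\proj_\subspace\w^i\|_2\ge 1-\eps'$ for every normal $\w^i$ of $\copt$, where $\eps'=\eps^3/(Ck^3)$. Since the component of $\w^i$ orthogonal to $\subspace$ then has norm at most $\sqrt{2\eps'}$, the unit vector $\tilde\w^i:=\proj_\subspace\w^i/\|\proj_\subspace\w^i\|_2\in\subspace$ satisfies $\|\tilde\w^i-\w^i\|_2\le 2\sqrt{\eps'}$; that is, every true normal is within angular distance $O(\sqrt{\eps'})$ of a unit vector lying inside the recovered low-dimensional subspace.

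Next I would take a Euclidean $\gamma$-net $\mathcal{N}$ of the unit sphere of $\subspace$ with $\gamma=\Theta(\eps/k)$, so that $|\mathcal{N}|=(k/\eps)^{O(k)}$, and let the candidate set $\H$ consist of all $\pm 1$-valued intersections $\x\mapsto 2\bigwedge_{i\in[k]}\ind\{\vv^{j_i}\cdot\x\ge 0\}-1$ with $(\vv^{j_1},\dots,\vv^{j_k})\in\mathcal{N}^k$; thus $|\H|=(k/\eps)^{O(k^2)}$, which is property (1). For each $i$ pick a net point $\vv^i\in\mathcal{N}$ within distance $\gamma$ of $\tilde\w^i$, so that $\vv^i$ is within angular distance $\gamma_0:=\gamma+2\sqrt{\eps'}=\Theta(\eps/k)$ of $\w^i$, and set $h^\star:=2\bigwedge_i\ind\{\vv^i\cdot\x\ge 0\}-1\in\H$. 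Using the exact identity $\pr_{\Gauss_d}[\sign(\vu\cdot\x)\ne\sign(\vv\cdot\x)]=\angle(\vu,\vv)/\pi$ together with a union bound over the $k$ halfspaces, $\pr_{\Gauss_d}[h^\star\ne\copt]=O(k\gamma_0)=O(\eps)$, so after drawing $\tilde O(dk^2/\eps^2)$ labeled training examples (enough for $\Theta(\eps)$-uniform convergence over the $k$-halfspace intersections in $\H$) the filtering step that retains exactly the $h\in\H$ with empirical training error $O(\eps)$ keeps $h^\star$; this yields property (3) for the retained set, and I would output any retained $h$.

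The test side has two checks, both run on one shared unlabeled test sample of size $\tilde O(dk^2/\eps^2)$. (i) For each $\vv\in\mathcal{N}$ I run the local homogeneous-halfspace disagreement tester of prior work \cite{gollakota2023efficient,gollakota2023tester,klivans2023testable} with center $\vv$, closeness parameter $\gamma_0$, and tolerance $\phi=\Theta(\eps/k)$; reject if any run rejects. When $\Dtest=\Gauss_d$ the Gaussian disagreement at scale $\gamma_0$ is $O(\gamma_0)\ll\phi$, so all runs accept with high probability; and whenever they accept, soundness of the tester certifies $\pr_{\Dtest}[\sign(\w^i\cdot\x)\ne\sign(\vv^i\cdot\x)]\le\phi$ for each $i$, hence $\pr_{\Dtest}[h^\star\ne\copt]\le k\phi=O(\eps)$, which is property (2). (ii) For every pair $h,h'\in\H$ I compare an empirical estimate of $\pr_{\Dtest}[h\ne h']$ (from the test sample) with an empirical estimate of $\pr_{\Gauss_d}[h\ne h']$ (from fresh Gaussian draws, since the training marginal is known), rejecting if any pair differs by more than $\Theta(\eps)$; because $\log|\H|=O(k^2\log(k/\eps))$, a union bound over the $|\H|^2$ pairs costs only $\tilde O(k^2/\eps^2)$ samples on each side. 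Step (ii) is exactly an estimate of the discrepancy distance of $\Dtest$ from $\Gauss_d$ localized to $\H$.

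For soundness I condition on the event (probability $\ge 1-\delta=0.98$ over $\Strain,\Stest$) that $\A$ succeeded and that all empirical quantities are within their tolerances. If the algorithm accepts, then $h^\star$ survived the filter, check (i) gives $\pr_{\Dtest}[h^\star\ne\copt]=O(\eps)$, and check (ii) applied to the pair $(h,h^\star)$ --- which is covered even though we cannot identify $h^\star$, since all pairs are checked --- gives $\pr_{\Dtest}[h\ne h^\star]\le\pr_{\Gauss_d}[h\ne h^\star]+O(\eps)\le\pr_{\Gauss_d}[h\ne\copt]+\pr_{\Gauss_d}[h^\star\ne\copt]+O(\eps)=O(\eps)$, since both $h$ and $h^\star$ have $O(\eps)$ empirical and hence true training error. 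The triangle inequality $\pr_{\Dtest}[h\ne\copt]\le\pr_{\Dtest}[h\ne h^\star]+\pr_{\Dtest}[h^\star\ne\copt]$ then gives error at most $\eps$ after fixing the hidden constants. Completeness is immediate: when $\Dtest=\Gauss_d$, $\A$ succeeds, all disagreement testers accept, both discrepancy estimates target the same Gaussian quantity so every pair passes, and $h^\star$ passes the filter, all with probability $\ge 1-\delta$. The running time is dominated by the $O(d^3)$-type linear algebra for handling $\subspace$ and the $|\mathcal{N}|$ tester runs, namely $\tilde O(d^3k^2/\eps^2)$, plus the $d\,(k/\eps)^{O(k^2)}$ cost of evaluating every member of $\H$ on the training sample; the $\poly(d)$ factors in the sample sizes come from the moment-estimation sample complexity of the local tester. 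The main obstacle is the chaining of accuracy parameters: the subspace only approximately contains the normals, leaving an irreducible angular gap $\sim\sqrt{\eps'}$ per normal that the disagreement tester must absorb, while the tester's tolerance $\phi$ must still be large enough that the tester is complete on the Gaussian; getting a local homogeneous-halfspace disagreement tester that honors exactly this closeness-versus-tolerance tradeoff (and runs in $\poly(d,1/\eps)$ time with $\poly(d,1/\eps)$ test samples) is the technical heart, and the cubic dependence in $\eps'=\eps^3/(Ck^3)$ tracks the combined loss of the net resolution, the $k$-fold union bound over the halfspaces, and the tester's conversion.
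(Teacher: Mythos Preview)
Your overall strategy is the same as the paper's: run $\A$ once to obtain the subspace, cover the unit sphere inside it, run the local halfspace disagreement tester at every cover point, form the finite candidate family, filter by training error, and finish with a pairwise disagreement check on the test sample. The structure is correct, and your closing remark correctly identifies the ``tester's conversion'' as one of the losses that forces the cubic $\eps'=\eps^3/(Ck^3)$.

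However, the body of your argument sets the parameters as if that conversion were linear, and it is not. The tester you cite (the one from \cite{gollakota2023efficient}, stated in the paper as the homogeneous local disagreement tester) guarantees only
\[
\pr_{\x\sim\Stest}[\sign(\w\cdot\x)\ne\sign(\vv\cdot\x)]\le C\,\gamma_0^{2/3}\qquad\text{whenever }\measuredangle(\w,\vv)\le\gamma_0,
\]
not $\le\phi$ for an independently chosen $\phi=\Theta(\eps/k)$. With your choice $\gamma_0=\Theta(\eps/k)$ the certified tolerance is $\Theta((\eps/k)^{2/3})$, and the union bound over the $k$ halfspaces yields $k\cdot(\eps/k)^{2/3}=k^{1/3}\eps^{2/3}$, which is not $O(\eps)$. (Your completeness sentence ``$O(\gamma_0)\ll\phi$'' is also internally inconsistent since you set both to $\Theta(\eps/k)$.) The fix is to refine the net to angular resolution $\gamma=\Theta((\eps/k)^{3/2})$ so that $\gamma_0=\gamma+O(\sqrt{\eps'})=\Theta((\eps/k)^{3/2})$; then the tester certifies disagreement $\Theta(\eps/k)$, the union bound gives $O(\eps)$, and the net size stays $(k/\eps)^{O(k)}$. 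This is exactly why the paper takes the angle parameter $\eps'=\eps^{3/2}/(Ck^{3/2})$ and the cover granularity $\eps''=\Theta(\eps^6/k^7)$ (the latter so that, after the $(k\eps'')^{1/4}$ loss of the sparse-cover lemma, the net still resolves angles at scale $(\eps/k)^{3/2}$).

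A second, minor difference: your check (ii) compares $\pr_{\Dtest}[h\ne h']$ to $\pr_{\Gauss_d}[h\ne h']$ over \emph{all} pairs in $\H$, whereas the paper simply rejects if $\max_{f_1,f_2\in\F}\pr_{\x\sim\Stest}[f_1(\x)\ne f_2(\x)]>\eps/2$, restricted to the \emph{filtered} set $\F$. Both versions are valid; the paper's is slightly cleaner because the training-error filter already forces every pair in $\F$ to have small Gaussian disagreement, so no fresh Gaussian draws or discrepancy comparison is needed.
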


\begin{algorithm2e}
\caption{Proper TDS Learner for Homogeneous Halfspace Intersections}\label{algorithm:tds-homogeneous-intersections}
\KwIn{Labelled set $\Strain$, unlabelled set $\Stest$, parameter $\epsilon$}
Set $\eps' = \frac{\eps^{3/2}}{Ck^{3/2}}$ and $\eps'' = \frac{\eps^6}{Ck^{7}}$ for some sufficiently large universal constant $C\ge 1$.\\
Run algorithm $\A$ on the set $\Strain$ and let $(\vv^1,\dots,\vv^k)$ be its output. \\
Let $\subspace$ be the subspace spanned by $(\vv^1,\dots,\vv^k)$ and consider the following sparse cover of $\subspace$: 
$\subspace_{\eps''} = \{\frac{\vu}{\|\vu\|_2}: \vu = \eps''\sum_{i=1}^k j_i \vv^i, j_i\in\Z\cap[-\frac{1}{\eps''},\frac{1}{\eps''}],\|\vu\|_2\neq 0\}$ \\
\textbf{Reject} and terminate if $\|\var_{\x\sim\Sunlabelled}(\x)\|_2 \ge 2$. \\
\For{$\vu\in \subspace_{\eps''}$}{
    \textbf{Reject} and terminate if $\pr_{\x\sim\Sunlabelled}[|\vu\cdot \x|\le 2\eps'^{2/3}] > 5\eps'^{2/3}$. \\
} 
Let $\F$ contain the concepts $\concept:\R^d\to\{\pm 1\}$ of the form $\concept(\x)=2\bigwedge_{i=1}^k \ind\{\vu^i\cdot \x \ge 0\} - 1$, where $\vu^1,\dots,\vu^k\in \subspace_{\eps''}$ and $\pr_{(\x,y)\sim \Strain}[y\neq\concept(\x)] \le \eps/5$. \\
\textbf{Reject} and terminate if $\max_{\concept_1,\concept_2\in \F} \pr_{\x\sim \Stest}[\concept_1(\x)\neq \concept_2(\x)] > \eps/2$.\\
\textbf{Otherwise,} output $\hat{\concept}:\R^d\to\cube{}$ for some $\hat\concept\in \F$.
\end{algorithm2e}

Before proving \Cref{theorem:tds-homogeneous-intersections}, we first describe how we can obtain the above algorithm $\A$.

\paragraph{Approximate Subspace Retrieval.} To approximately recover the relevant subspace, we apply results from PAC learning (see \cite{vempala2010learning,vempala2010random}), which we state in \Cref{section:appendix-pca}. For example, \cite{vempala2010learning} uses a Gaussian variance reduction lemma (see \Cref{lemma:variance-reduction}) which states that if we truncate the Gaussian on the positive region of some intersection of homogeneous halfspaces, then the variance of the resulting distribution along the directions that define the normals of the intersection is bounded away below $1$ (for directions orthogonal to the span of the normals, the variance is $1$). Unfortunately, in the original proof of \cite{vempala2010learning}, a (crucial) approximate version of the variance reduction lemma (similar to the last part of \Cref{lemma:variance-reduction}) is missing
and hence it is not clear whether the claimed approximate subspace retrieval result is true. We provide in \Cref{section:appendix-recovery-general,section:appendix-recovery-non-degenerate} a full proof of the subspace retrieval lemma, but with the following caveat: we either (1) incur complexity that is exponential in $\poly(k/\eps)$ (see \Cref{section:appendix-recovery-general}) or (2) require some non-degeneracy assumption (see \Cref{section:appendix-recovery-non-degenerate}).\\

We now give an overview of the proof of \Cref{theorem:tds-homogeneous-intersections}.

\paragraph{Stage I: Acquiring a Good Cover.} A \emph{good cover} is a list $\F$ of candidate hypotheses (i.e., halfspace intersections) that is guaranteed to contain some intersection with low test error \emph{and} only contains intersections with low training error. We construct such a cover as follows.
\begin{enumerate}
    \item Once we have obtained a(n orthonormal basis for a) subspace $\subspace$ such that every normal to the ground truth intersection is geometrically close to some vector in $\subspace$, we exhaustively cover the unit sphere in $\subspace$ (see \Cref{lemma:sparse-cover-angles}) to obtain a list $\subspace'$ of ($(\frac{k}{\eps})^{O(k)}$) candidate unit vectors that is guaranteed to contain, for each normal $\w$ of the ground truth intersection, some element $\vu$, such that the angle between $\w$ and $\vu$ is small.
    \item We then certify that for each element $\vu$ of $\subspace'$, all of the halfspaces whose normals are geometrically close to $\vu$ have low disagreement with the halfspace defined by $\vu$ on the \emph{test distribution}. Such a certificate can be obtained by using tools (\Cref{lemma:homogeneous-disagreement-tester}) from the literature of testable learning (see \cite{gollakota2023efficient,gollakota2023tester}); in fact we may use, here, the same tools that \cite{klivans2023testable} utilized to obtain TDS learners for single homogeneous halfspaces.
    \item We construct $\F$ by including all possible intersections, of at most $k$ elements from $\subspace'$, that have low training error. Note that there is one element $\concept$ in $\F$ such that its normals are (one-by-one) geometrically close to the normals of the ground truth. The previous test has ensured that $\concept$ has low test error, since the probability that any halfspace in $f$ disagrees with the corresponding true one is small.
\end{enumerate}

\paragraph{Stage II: Estimating Discrepancy Distance.} It remains to pick an element from $\F$ with low test error. However, we have only shown that there is one (unknown) element $\concept$ in $\F$ with low test error. Note that since all of the elements of $\F$ have low training error, then the disagreement between each pair of elements in $\F$ should be small under the training marginal (and the test marginal as well if there was no distribution shift). Therefore, as a last step, we test that the disagreement between any pair of hypotheses in $\F$ is small under test data; otherwise, it is safe to reject. If the test accepts, all of the elements in $\F$ should also have low test error (since they mostly agree with $\concept$ under test data). We stress that this last test corresponds to estimating the discrepancy distance  between the test marginal $\Dgeneric'$ and the Gaussian with respect to $\F$, i.e., the quantity
\[
    \mathrm{d}_{\mathrm{disc}}(\Dgeneric',\Gauss ; \F) = \sup_{\concept_1,\concept_2\in\F}\Bigr|\pr_{\x\sim\Dgeneric'}[\concept_1(\x)\neq\concept_2(\x)]-\pr_{\x\sim\Gauss_d}[\concept_1(\x)\neq\concept_2(\x)]\Bigr|
\]
The discrepancy distance is a standard notion in domain adaptation (see, e.g., \cite{mansour2009domadapt}), but involves an enumeration and it can be hard to compute. Since we only compute it with respect to a small set of candidate hypotheses, we can afford to brute force search over all pairs of functions.
Combining our \Cref{theorem:tds-homogeneous-intersections} with tools for approximate subspace retrieval (see \Cref{section:appendix-pca}), we obtain the following upper bounds. For a more detailed version of the bounds, see \Cref{corollary:homogeneous-results}.

\begin{corollary}\label{corollary:tds-homogeneous-results-main}
    The class of $\eps$-balanced intersections of $k$ homogeneous halfspaces on $\R^d$ can be $\eps$-TDS learned in time $\poly(d) 2^{\poly(k/\eps)}$ using $\poly(d) 2^{\poly(k/\eps)}$ training examples and $\poly(dk/\eps)$ test examples. Moreover, it can be $\eps$-TDS learned in time $(\frac{dk}{\eps})^{O(k)}+d(\frac{k}{\eps})^{O(k^2)}$ using $\tilde{O}(d) (\frac{k}{\eps})^{O(k)}$ training examples and $\poly(dk/\eps)$ test examples.
\end{corollary}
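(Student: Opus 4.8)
The plan is to derive \Cref{corollary:tds-homogeneous-results-main} by instantiating the reduction of \Cref{theorem:tds-homogeneous-intersections} with the two approximate subspace retrieval algorithms for $\eps$-balanced intersections of $k$ homogeneous halfspaces recorded in \Cref{section:appendix-pca}. Recall that \Cref{theorem:tds-homogeneous-intersections} is a black-box reduction: from any algorithm $\A$ that $(\frac{\eps^3}{Ck^3},0.01)$-retrieves the relevant subspace for $\C$ under $\Gauss_d$ with sample complexity $\mrec$, it produces an $(\eps,0.02)$-TDS learner using $\mrec+\tilde O(dk^2/\eps^2)$ labeled training examples, $\tilde O(dk^2/\eps^2)$ unlabeled test examples, one call to $\A$, and additional time $\tilde O(d^3k^2/\eps^2)+d(k/\eps)^{O(k^2)}$. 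So it suffices to plug in each retrieval guarantee at accuracy $\eps':=\frac{\eps^3}{Ck^3}$ and simplify; the key observation is that $\eps'=\poly(\eps/k)$, so substituting $\eps\mapsto\eps'$ in a bound of the form $\poly(d)\,2^{\poly(k/\eps)}$ or $(dk/\eps)^{O(k)}$ leaves it in the same form.

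For the first bound I would use the retrieval algorithm of \Cref{section:appendix-recovery-general}, whose running time and sample complexity are $\poly(d)\,2^{\poly(k/\eps')}$; since $1/\eps'=\poly(k/\eps)$ this is $\poly(d)\,2^{\poly(k/\eps)}$, and the extra terms contributed by \Cref{theorem:tds-homogeneous-intersections} ($\tilde O(dk^2/\eps^2)$ labeled samples and $\tilde O(d^3k^2/\eps^2)+d(k/\eps)^{O(k^2)}$ time) are absorbed, while the unlabeled sample count stays $\poly(dk/\eps)$. For the second bound I would instead use the random-projection-based retrieval stemming from \cite{vempala2010random}, whose guarantees (as recorded in \Cref{section:appendix-pca}) give $\mrec=\tilde O(d)(k/\eps')^{O(k)}$ labeled samples and running time $(dk/\eps')^{O(k)}$; since $(k/\eps')^{O(k)}=(k/\eps)^{O(k)}$ and $(dk/\eps')^{O(k)}=(dk/\eps)^{O(k)}$, adding the theorem's contributions yields total labeled sample complexity $\tilde O(d)(k/\eps)^{O(k)}$, total time $(dk/\eps)^{O(k)}+d(k/\eps)^{O(k^2)}$, and $\poly(dk/\eps)$ unlabeled samples. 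In both cases the $\eps$-balanced hypothesis on the training distribution is exactly what the retrieval step needs — without it the labeled data carries no usable geometric signal, which is consistent with our SQ lower bound — so the restriction in the corollary is harmless.

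I do not expect any real obstacle at the level of the corollary itself: \Cref{theorem:tds-homogeneous-intersections} has already isolated the difficult content, and what is left is parameter bookkeeping. The nontrivial ingredients being invoked live elsewhere: (i) the proof of \Cref{theorem:tds-homogeneous-intersections} in \Cref{section:appendix-tds-homogeneous} (Euclidean net on the recovered subspace, certification via the localized homogeneous-halfspace disagreement tester \Cref{lemma:homogeneous-disagreement-tester}, and brute-force estimation of the discrepancy distance over the cover), and (ii) the two subspace retrieval statements, the second of which must be treated carefully since the argument of \cite{vempala2010learning} has a gap — the fix in \Cref{section:appendix-recovery-general} reproves an \emph{approximate} Gaussian variance reduction lemma and pays the $2^{\poly(k/\eps)}$ overhead visible in the first bound. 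The only thing to check when assembling the corollary is that the accuracy loss $\eps\mapsto\eps^3/(Ck^3)$ does not asymptotically affect $2^{\poly(k/\eps)}$ or $(k/\eps)^{O(k)}$, which is immediate.
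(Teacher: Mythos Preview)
Your proposal is correct and matches the paper's approach exactly: the paper derives the corollary by combining \Cref{theorem:tds-homogeneous-intersections} with the two subspace retrieval guarantees in \Cref{section:appendix-pca} (the PCA-based one from \Cref{section:appendix-recovery-general} for the first bound and the Polar Planes algorithm of \cite{vempala2010random} from \Cref{section:appendix-recovery-polar} for the second), setting the balance parameter $\bias=\eps$ and the retrieval accuracy to $\eps^3/(Ck^3)$, and then simplifying exactly as you describe. The only cosmetic discrepancy is that the second retrieval procedure is the Polar Planes algorithm rather than a ``random-projection-based'' method, but you cite the right reference and the bookkeeping is identical.
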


\subsection{Intersections of General Halfspaces}\label{section:algo-general-intersections}

In the case of intersections of general halfspaces, we use a similar approach. However, the notion of approximate subspace retrieval of \Cref{definition:subspace-retrieval} is too strong in this case, as there might be halfspaces that have very high bias and, therefore, it is not possible to obtain enough information about them unless we use a vast amount of training data. We, therefore, define the following relaxed version of approximate subspace retrieval, also used for PAC learning (see \cite{vempala2010learning}).

\begin{definition}[Approximate Subspace Retrieval for General Halfspaces]\label{definition:subspace-retrieval-general}
    We say that the algorithm $\A$ $(\eps,\delta,\threshold)$-retrieves the relevant subspace for $\C$ (whose elements are halfspace intersections) under $\Gauss_d$ if $\A$, upon receiving at least $\mrec$ examples of the form $(\x,\copt(\x))$, where $\x\sim\Gauss_d$ and $\copt\in\C$, outputs, w.p. at least $1-\delta$ a subspace $\subspace$ such that for any normal $\w$ corresponding to a halfspace $\{\x:\w\cdot\x+\tau\ge 0\}$ of $\copt$ such that $\tau\le \threshold$, we have $\|\proj_\subspace \w\|_2 \ge 1 - \eps$.
\end{definition}

The notion of approximate subspace retrieval of \Cref{definition:subspace-retrieval-general} is sufficient to design efficient PAC learners, since the halfspaces with large thresholds can be omitted without incurring a significant increase on the error under the training distribution (which, for PAC learning, is the same as the test distribution). In TDS learning, however, the test marginal is allowed to assign non-negligible mass to the unseen region of a hidden halfspace. In fact, this is a source of lower bounds for TDS learning as we show in \Cref{theorem: TDS learning requires d to the log,theorem: TDS learning requires d to the log even balanced}. 

Prior work on TDS learning \cite{klivans2023testable} focusing on the case of a single general halfspace, used a moment matching tester to ensure that the test marginal does not assign considerable mass to the unseen region of significantly biased halfspaces (as is the case under the Gaussian). Such tests incur a complexity of $d^{\Theta(\log(\frac{1}{\eps}))}$, which is essentially unavoidable (see \Cref{theorem: TDS learning requires d to the log}). Note that by assuming that the ground truth is balanced (\Cref{definition:bounded-bias-non-degeneracy}), one can bypass the lower bound of \Cref{theorem: TDS learning requires d to the log} for TDS learning a single general halfspace. This is not the case, however, for intersections of even $2$ general halfspaces (see \Cref{theorem: TDS learning requires d to the log even balanced}), where the lower bound of $d^{\tilde{\Omega}(\log(1/\eps))}$ persists even under the balanced concepts assumption.

For TDS learning general halfspaces, we adopt a similar moment matching approach as the one used for a single general halfspace (see \cite{klivans2023testable}) to ensure that the normals of the ground truth that are not represented by any element of the retrieved subspace (due to high bias) are not important even under the test distribution. Moreover, in order to acquire a certificate that we have a good cover (as per the previous section), we design a local halfspace disagreement tester that works even for general halfspaces (see \Cref{lemma:general-disagreement-tester}). We obtain the following result (see \Cref{section:appendix-tds-general}).

\begin{theorem}[TDS Learning Intersections of General Halfspaces]\label{theorem:tds-general-intersections}
    Let $\C$ be a class whose elements are intersections of $k$ general halfspaces on $\R^d$, $\eps\in (0,1)$ and $C\ge 1$ a sufficiently large constant. Assume that $\A$ $(\frac{\eps^3}{Ck^3}, 0.01,3\log^{1/2}(\frac{10k}{\eps}))$-retrieves the relevant subspace for $\C$ under $\Gauss_d$ with sample complexity $\mrec$.
    Then, there is an algorithm (\Cref{algorithm:tds-general-intersections-appendix}) that $(\eps,\delta=0.02)$-TDS learns the class $\C$, using $\mrec+ \tilde{O}(\frac{dk^2}{\eps^2})$ labelled training examples and $d^{O(\log(k/\eps))}$ unlabelled test examples, calls $\A$ once and uses additional time $d^{O(\log(k/\eps))}(k/\eps)^{O(k^2)}$.
\end{theorem}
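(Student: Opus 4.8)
The plan is to run the same three-stage strategy as in \Cref{theorem:tds-homogeneous-intersections} — subspace retrieval, construction of a good cover $\F$, and estimation of discrepancy distance with respect to $\F$ — while inserting a moment-matching test that neutralizes the halfspaces of $\copt$ whose threshold exceeds $\threshold = 3\log^{1/2}(\frac{10k}{\eps})$ and which therefore need not be recovered by $\A$ (cf.\ \Cref{definition:subspace-retrieval-general}), and while replacing the homogeneous local disagreement tester by its general-halfspace analogue \Cref{lemma:general-disagreement-tester}.

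First I would run $\A$ on the $\mrec$ labelled examples to get a subspace $\subspace$ of dimension at most $k$ with $\|\proj_\subspace \w^i\|_2 \ge 1 - \eps^3/(Ck^3)$ for every normal $\w^i$ of $\copt=2\bigwedge_{i\in[k]}\ind\{\w^i\cdot\x+\tau^i\ge 0\}-1$ with $\tau^i\le\threshold$. From $\subspace$ I build a Euclidean $\eps''$-net $\subspace_{\eps''}$ of its unit sphere (as in \Cref{algorithm:tds-homogeneous-intersections}, of size $(k/\eps)^{O(k)}$ by \Cref{lemma:sparse-cover-angles}) together with an arithmetic grid $\taus=\{j\eps'' : j\in\Z,\, |j\eps''|\le\threshold\}$ of $(k/\eps)^{O(1)}$ candidate thresholds; so for every normal $\w^i$ with $\tau^i\in[-\threshold,\threshold]$ there is $(\vu,t)\in\subspace_{\eps''}\times\taus$ with $\vu$ at small angle to $\w^i$ and $|t-\tau^i|\le\eps''$. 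On the unlabelled test sample $\Stest$ I then run two tests. The \emph{moment-matching tester} estimates every moment of $\Dtest$ of degree at most $\mdegree=O(\log(k/\eps))$ (which uses $d^{O(\log(k/\eps))}$ samples and time) and rejects unless they are $\poly(\eps/k)$-close to the Gaussian moments; by a Markov bound applied to $(\vv\cdot\x)^\mdegree$ with $\mdegree$ even, this certifies $\pr_{\x\sim\Dtest}[\,|\vv\cdot\x|>\threshold\,]\le\eps/(10k)$ for every unit vector $\vv$ — exactly what is needed to argue that a halfspace with $\tau^i>\threshold$ is satisfied on all but an $\eps/(10k)$-fraction of $\Dtest$ (so it can be dropped from the intersection) and that a halfspace with $\tau^i<-\threshold$ is violated on all but an $\eps/(10k)$-fraction of $\Dtest$ (so $\copt\equiv -1$ up to error $\eps/10$ on $\Dtest$). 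Separately, for each $(\vu,t)\in\subspace_{\eps''}\times\taus$ I run the general-halfspace disagreement tester \Cref{lemma:general-disagreement-tester}, rejecting unless it certifies that every halfspace whose normal is within the relevant angular distance of $\vu$ and whose threshold is within $\eps''$ of $t$ disagrees with $\sign(\vu\cdot\x+t)$ on at most an $\eps'$-fraction of $\Dtest$. Both tests pass with high probability when $\Dtest=\Gauss_d$, which gives completeness of this stage.

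Now let $\F$ consist of the constant hypothesis $-1$ together with all intersections $\bigwedge_{i\in S}\ind\{\vu^i\cdot\x+t^i\ge 0\}$ with $S\subseteq[k]$ and $(\vu^i,t^i)\in\subspace_{\eps''}\times\taus$ whose empirical training error is at most $\eps/5$, so $|\F|\le(k/\eps)^{O(k^2)}$ and every member of $\F$ has small training error by construction. I claim $\F$ contains a hypothesis $f^*$ with test error at most $\eps/4$: if some $\tau^i<-\threshold$ then $f^*=-1$ works by the moment-matching conclusion (it also clears the training filter since $\copt\equiv -1$ up to $\eps/10$ under $\Gauss_d$); otherwise take $f^*$ by dropping the halfspaces of $\copt$ with $\tau^i>\threshold$ (costing at most $k\cdot\eps/(10k)=\eps/10$ on $\Dtest$ by moment matching) and replacing each remaining normal/threshold by the nearby element of $\subspace_{\eps''}\times\taus$ (costing at most $k\eps'$ on $\Dtest$ by the disagreement tester and a union bound over the $k$ halfspaces); with $\eps'$ set so that $k\eps'\le\eps/10$ and after checking — via a Chernoff bound and a union bound over the finitely many candidates — that $f^*$ also passes the training-error filter, we get $f^*\in\F$ with test error at most $\eps/4$. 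Finally I reject unless $\max_{f_1,f_2\in\F}\pr_{\x\sim\Stest}[f_1(\x)\ne f_2(\x)]\le\eps/2$, i.e.\ unless the empirical discrepancy distance of $\Dtest$ from $\Gauss_d$ with respect to $\F$ is small; since every member of $\F$ has small training error this passes with high probability when $\Dtest=\Gauss_d$, and if it accepts then every $f\in\F$ disagrees with $f^*$ on at most an $\eps/2$-fraction of $\Dtest$, so $\pr_{\x\sim\Dtest}[f(\x)\ne\copt(\x)]\le \eps/2+\eps/4<\eps$ and I may output any $f\in\F$. For the bookkeeping: the labelled count is $\mrec+\tilde{O}(dk^2/\eps^2)$ (the latter for the uniform training-error estimates over $\F$), the unlabelled test count is $d^{O(\log(k/\eps))}$ (dominated by the moment-matching and disagreement testers), and the additional running time is dominated by $d^{O(\log(k/\eps))}$ for the moment estimation times $(k/\eps)^{O(k^2)}=|\F|$ for building $\F$ and for the $|\F|^2$ pairwise disagreement estimates.

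The main obstacle is the interaction between the relaxed subspace-retrieval guarantee and an adversarial $\Dtest$: the high-threshold halfspaces of $\copt$ are invisible to $\A$, yet their ``unseen region'' could a priori carry large $\Dtest$-mass, and it is the moment-matching test that rules this out. Making it work requires choosing $\mdegree=O(\log(k/\eps))$ and $\threshold=3\log^{1/2}(\frac{10k}{\eps})$ so that a degree-$\mdegree$ moment-deviation bound yields the $\eps/(10k)$-tail bound in \emph{every} direction simultaneously while completeness under $\Gauss_d$ still holds; this balance is also the source of the $d^{\Theta(\log(1/\eps))}$ cost, which is essentially unavoidable (\Cref{theorem: TDS learning requires d to the log}). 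A secondary point requiring care is that \Cref{lemma:general-disagreement-tester} must certify low disagreement under simultaneous perturbation of the normal direction \emph{and} the threshold, which is why the homogeneous tester \Cref{lemma:homogeneous-disagreement-tester} no longer suffices.
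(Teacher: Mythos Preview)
Your proposal is correct and follows essentially the same three-stage architecture as the paper's proof (subspace retrieval, moment-matching plus local general-halfspace disagreement testing to certify a good cover $\F$, then discrepancy estimation over $\F$); the only cosmetic differences are that you include the constant $-1$ and intersections over subsets $S\subseteq[k]$ in $\F$, whereas the paper handles the same cases by taking $\theta^i=\pm\threshold$ and appealing to \Cref{lemma:moment-matching-concentration}. One small slip: the moment-matching tolerance must be $d^{-\Theta(\mdegree)}$ rather than $\poly(\eps/k)$, since $(\vv\cdot\x)^\mdegree$ expands into $d^{O(\mdegree)}$ monomials and you need the aggregate error to be small for every direction $\vv$---but your stated sample budget $d^{O(\log(k/\eps))}$ already accommodates this.
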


We once more combine our \Cref{theorem:tds-general-intersections} with results on approximate subspace retrieval (see \Cref{section:appendix-pca}), to obtain the following upper bounds (see also \Cref{corollary:general-results}).

\begin{corollary}\label{corollary:tds-general-results-main}
    The class of $\eps$-balanced intersections of $k$ general halfspaces on $\R^d$ can be $\eps$-TDS learned in time $d^3 2^{\poly(k/\eps)} + d^{O(\log(k/\eps))} (k/\eps)^{O(k^2)}$ using $ \tilde{O}(d) 2^{\poly(k/\eps)}$ training examples and $d^{O(\log(k/\eps))}$ test examples.
\end{corollary}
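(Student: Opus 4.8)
The plan is to obtain \Cref{corollary:tds-general-results-main} as a direct instantiation of \Cref{theorem:tds-general-intersections}: supply a concrete approximate subspace retrieval algorithm $\A$ for $\eps$-balanced intersections of general halfspaces with the parameters the theorem demands, and then add up complexities.

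\emph{Step 1: Subspace retrieval for balanced general intersections.} First I would invoke the approximate subspace retrieval result developed in \Cref{section:appendix-pca}, following the approach of \cite{vempala2010random,vempala2010learning} but made fully rigorous via the (approximate) Gaussian variance reduction lemma \Cref{lemma:variance-reduction}. The point is that for an $\eps$-balanced intersection the positive region has mass at least $\eps$ under $\Gauss_d$, so the Gaussian truncated to that region is a well-defined distribution whose covariance is strictly contracted along every direction in the span of the normals of halfspaces with threshold below $\threshold = 3\log^{1/2}(10k/\eps)$; iteratively peeling off the top-variance directions after whitening recovers an $O(k)$-dimensional subspace approximately containing all such normals. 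Tuning the retrieval accuracy to $\eps^3/(Ck^3)$ and the threshold parameter to $3\log^{1/2}(10k/\eps)$ matches exactly the hypotheses of \Cref{theorem:tds-general-intersections}. Because the missing piece in \cite{vempala2010learning} forces us to use the weaker but complete variant of \Cref{section:appendix-recovery-general}, this costs $\mrec = \tilde O(d)\,2^{\poly(k/\eps)}$ labeled examples and $d^3\,2^{\poly(k/\eps)}$ time (the $d^3$ from the repeated whitening/PCA steps on $d\times d$ empirical covariances, and $\poly(k/(\eps^3/Ck^3)) = \poly(k/\eps)$ in the exponent).

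\emph{Step 2: Plug in and bookkeep.} With this $\A$, \Cref{theorem:tds-general-intersections} immediately yields an $(\eps,0.02)$-TDS learner for $\C$ that calls $\A$ once and otherwise uses $\tilde O(dk^2/\eps^2)$ additional labeled examples, $d^{O(\log(k/\eps))}$ unlabeled test examples, and $d^{O(\log(k/\eps))}(k/\eps)^{O(k^2)}$ additional running time. Summing: the labeled sample complexity is $\mrec + \tilde O(dk^2/\eps^2) = \tilde O(d)\,2^{\poly(k/\eps)}$, the unlabeled complexity is $d^{O(\log(k/\eps))}$, and the total time is $d^3\,2^{\poly(k/\eps)} + d^{O(\log(k/\eps))}(k/\eps)^{O(k^2)}$, exactly as claimed. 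Finally, since the failure probability $0.02$ is a fixed constant, the standard repetition argument (\cite[Proposition C.1]{klivans2023testable}, recalled in the Preliminaries) boosts it to any desired constant, which is all that ``$\eps$-TDS learned'' requires.

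\emph{Main obstacle.} Essentially all of the real content sits in Step 1: establishing the approximate subspace retrieval guarantee for $\eps$-balanced intersections of general halfspaces at threshold $\threshold$ with the stated sample and time bounds. This needs the approximate form of the variance reduction lemma (the statement absent from \cite{vempala2010learning}) together with a careful argument that halfspaces whose threshold exceeds $3\log^{1/2}(10k/\eps)$ are so close to constant under $\Gauss_d$ that their normals may be ignored by the retrieval step --- consistently with the fact that such halfspaces are handled downstream, under the test marginal, by the moment-matching tester built into \Cref{theorem:tds-general-intersections}. Everything after that is a routine composition of complexities.
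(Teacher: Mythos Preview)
Your proposal is correct and follows essentially the same approach as the paper: instantiate \Cref{theorem:tds-general-intersections} with the PCA-based subspace retrieval guarantee of \Cref{lemma:subspace-retrieval-general} (applied with $\bias=\eps$, accuracy $\eps^3/(Ck^3)$, and $\threshold=3\log^{1/2}(10k/\eps)$), then sum the complexities; this is exactly how the paper derives \Cref{corollary:tds-general-results-main} (equivalently, part (a) of \Cref{corollary:general-results}). One minor inaccuracy: the retrieval procedure in the paper is a single PCA on the positive examples (Algorithm~\ref{algorithm:pca}), not an iterative whitening-and-peeling scheme, but this does not affect the stated sample and time bounds.
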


\section{Statistical Query Lower Bounds}\label{section:sq-lower-bounds}

We will now provide a number of lower bounds for TDS learning in the statistical query model originally defined by \cite{kearns1998efficient}, which has been a standard framework for proving computational lower bounds in machine learning, and is known to capture most commonly used algorithmic techniques like gradient descent, moment methods, etc. (see, for example, \cite{feldman2017statistical,feldman2017statisticalquery}.

\begin{definition}[Statistical Query Model]\label{definition:sq}
    Let $\tol>0$ and $\Dgeneric$ be a distribution over $\R^d$. We say that an algorithm $\A$ is a statistical query algorithm (SQ algorithm) with tolerance $\tol$ if $\A$ only has access to $\Dgeneric$ through making a number of (adaptive) bounded queries of the form $q:\R^d\to[-1,1]$, for each of which it receives a value $v\in\R$ with $|v-\E_{\x\sim\Dgeneric}[q(\x)]| \le \tol$.
\end{definition}

Our approach is to reduce appropriate distribution testing problems to TDS learning and then show that these problems cannot be efficiently solved in the SQ framework, by applying recent results from \cite{diakonikolas2023sq} on Non-Gaussian Component Analysis. 

\subsection{General Halfspaces: A Tight Lower Bound}\label{section:sq-general-halfspaces}

We prove the following theorem which gives a tight lower bound for TDS learning general halfspaces with respect to the Gaussian distribution in the SQ framework, since the lower bound matches the recent corresponding upper bound of \cite{klivans2023testable}.

\begin{theorem}[SQ Lower Bound for TDS Learning a Single Halfspace]
\label{theorem: TDS learning requires d to the log}
For $\epsilon>0$, set $d = \eps^{-1/4}$. Then, for all sufficiently small $\epsilon$, the following is true. Let $\mathcal{A}$ be a TDS learning algorithm for general halfspaces over $\R^{d}$ w.r.t. $\Gauss_d$, with accuracy parameter $\epsilon$ and success probability at least
$0.95$. Further, suppose that $\mathcal{A}$ obtains at most $d^{\frac{\log1/\epsilon}{\log\log1/\epsilon}}$ samples from the training distribution and accesses the testing distribution via $2^{d^{o(1)}}$ SQ queries of precision $\tol>0$ (the SQ queries
are allowed to depend on the training samples). Then, the tolerance $\tol$ has to be at most $d^{-\Omega(\frac{\log1/\epsilon}{\log\log1/\epsilon})}$.
\end{theorem}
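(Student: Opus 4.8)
The plan is to prove Theorem~\ref{theorem: TDS learning requires d to the log} by reducing a Non-Gaussian Component Analysis (NGCA) distinguishing problem to TDS learning a single halfspace, and then invoking the generic SQ lower bound machinery of \cite{diakonikolas2023sq}. First I would identify the ``hard'' NGCA instance: a family of distributions on $\R^d$ that are standard Gaussian in all directions except one hidden unit direction $\vv$, along which the one-dimensional marginal is some carefully chosen distribution $A$. The key requirements on $A$ are (1) $A$ matches the moments of the standard Gaussian $\Gauss_1$ up to degree roughly $m = \Theta(\log(1/\eps)/\log\log(1/\eps))$, so that the distribution is statistically indistinguishable from $\Gauss_d$ using few samples and so that $\chi^2$-type SQ bounds kick in; and (2) $A$ places non-negligible mass (at least, say, $2\eps$) at distance $\gg \sqrt{\log(1/\eps)}$ from the origin along $\vv$. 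Condition (2) is what makes distribution shift \emph{detectable by a halfspace learner}: the halfspace $\x \mapsto \sign(\vv\cdot\x - T)$ with threshold $T = \Theta(\sqrt{\log(1/\eps)})$ has tiny error under $\Gauss_d$ (so a TDS learner trained on Gaussian labels will, with few samples, be ``fooled'' into thinking this is the ground truth) but has error $\ge 2\eps$ under the shifted distribution.

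Second, I would make the reduction precise. Suppose $\A$ is a TDS learner as in the statement. Given a sample oracle for an unknown NGCA distribution $D$ (either $\Gauss_d$ or the hidden-direction distribution $D_\vv$), we feed $\A$ training samples drawn from $\Gauss_d$ \emph{labeled} by the halfspace $f_\vv(\x) = \sign(\vv\cdot\x - T)$ --- but here we need the training distribution to actually be $\Gauss_d$, which it is, and we need the labels, which requires knowing $\vv$; the standard fix is that $\A$'s training samples are simulated internally (the reduction knows the ``null'' part is Gaussian and the labeling function is part of the concept class instantiation we are lower-bounding, so for the purpose of the SQ lower bound we think of the training phase as providing $\A$ with at most $d^{m}$ labeled Gaussian samples, which the reduction can generate on its own for a \emph{fixed} choice of $\vv$ --- this is why the theorem quantifies ``the SQ queries are allowed to depend on the training samples''). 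Then $\A$'s test-phase SQ queries to $\Dtest$ are answered using the NGCA oracle. By completeness, if $D = \Gauss_d$ then $\A$ accepts (w.p.\ $\ge 0.95$) and outputs $h$ with $\pr_{\Gauss_d}[h \ne f_\vv] \le \eps$. By soundness, if $D = D_\vv$ and $\A$ accepts, then $\pr_{D_\vv}[h \ne f_\vv] \le \eps$; but any $h$ that is $\eps$-close to $f_\vv$ under $\Gauss_d$ must disagree with $f_\vv$ on a set of Gaussian-measure $\le \eps$, and by the moment-matching / anticoncentration structure of $D_\vv$ this same set can have $D_\vv$-measure at most (something like) $2\eps$ less than the $2\eps$ mass $f_\vv$ flips --- leading to $\pr_{D_\vv}[h\ne f_\vv] \gtrsim \eps$, a contradiction unless $\A$ rejects. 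Hence: $D = \Gauss_d \Rightarrow$ accept; $D = D_\vv \Rightarrow$ reject. So $\A$ plus a small amount of post-processing distinguishes $\Gauss_d$ from $D_\vv$ using $2^{d^{o(1)}}$ SQ queries of tolerance $\tol$.

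Third, I would invoke the SQ lower bound for NGCA from \cite{diakonikolas2023sq}: when the hidden one-dimensional marginal $A$ matches $m$ moments with $\Gauss_1$, distinguishing $\Gauss_d$ from the family $\{D_\vv : \vv \in \Sphere^{d-1}\}$ with $q$ SQ queries requires tolerance $\tol \le q^{O(1)} \cdot d^{-\Omega(m)}$ (roughly; the precise form bounds $\tol^2$ by $\mathrm{poly}(q) \cdot d^{-m}$ times the relevant $\chi^2$ norm of $A$, which we must also control). Plugging in $q = 2^{d^{o(1)}}$ and $m = \Theta(\log(1/\eps)/\log\log(1/\eps)) = \Theta(\log d / \log\log d)$ (using $d = \eps^{-1/4}$) gives $\tol \le d^{-\Omega(\log(1/\eps)/\log\log(1/\eps))}$, as claimed. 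I must also check the sample-complexity side: the $d^m$ training samples are harmless because they are from the true Gaussian null (or simulated), and any $D_\vv$ is also indistinguishable from $\Gauss_d$ with $d^{O(m)}$ \emph{samples} by a standard moment-matching argument --- but actually the cleanest route is to only exploit the SQ-query budget on the \emph{test} distribution and handle the training samples by the ``allowed to depend on training samples'' clause.

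The construction of the hidden-direction marginal $A$ is the main obstacle: I need a distribution on $\R$ that simultaneously (i) \emph{exactly} matches the first $m$ moments of $\Gauss_1$, (ii) places at least $2\eps$ mass beyond $\pm c\sqrt{\log(1/\eps)}$, and (iii) has a bounded $\chi^2$-divergence (or $\E[A^2/\Gauss_1]$-type quantity) so the \cite{diakonikolas2023sq} bound is non-vacuous. To get (i) together with (ii), I expect to first build, via a mass-transportation argument, a signed measure or an approximately-moment-matching distribution --- take $\Gauss_1$, move $\approx 2\eps$ of its mass from near the origin out to the tails at distance $\approx \sqrt{\log(1/\eps)}$, which perturbs the low-degree moments by only $O(\eps \cdot \log^{m/2}(1/\eps))$, small enough --- and then correct the residual moment errors \emph{exactly} using the linear-programming / Carath\'eodory-type argument of \cite{pmlr-v195-diakonikolas23b-sq-mixtures} (find a nearby distribution in the affine subspace cut out by the $m$ moment equations, perturbing the tail mass by a negligible amount). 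Verifying that this correction can be done while preserving both the tail-mass lower bound and a good $\chi^2$ bound, and tracking constants so that the threshold $T$ really gives a Gaussian-error $\le\eps$ halfspace, is where the careful work lies; everything downstream is a black-box application of the NGCA-to-SQ reduction.
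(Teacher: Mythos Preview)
Your overall strategy—reduce a hidden-direction NGCA distinguishing problem to TDS learning, build a one-dimensional moment-matching distribution with heavy tail via mass transportation plus the LP/exact-moment-matching argument of \cite{pmlr-v195-diakonikolas23b-sq-mixtures}, and then invoke the generic SQ lower bound of \cite{diakonikolas2023sq}—is exactly the paper's route. But the reduction step, as you describe it, has two genuine gaps.

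\textbf{Training data cannot be labeled by $f_\vv$.} You write that the reduction feeds $\A$ training samples labeled by $f_\vv(\x)=\sign(\vv\cdot\x-T)$, and then try to patch the obvious problem (the reduction does not know the hidden $\vv$) by saying the reduction ``can generate on its own for a fixed choice of $\vv$.'' This breaks the lower bound: the NGCA hardness result is over a \emph{random} hidden direction, and a distinguisher that already knows $\vv$ can trivially succeed with one query. The paper's fix is to label every training example by $-1$. Because the threshold $T$ is chosen so that $\pr_{\Gauss_d}[\x\cdot\vv\ge T]\le \frac{1}{100m}$ (this is precisely where the hypothesis $m\le d^{\log(1/\eps)/\log\log(1/\eps)}$ is used), the all-$(-1)$ dataset is within total variation $0.01$ of the true $f_\vv$-labeled dataset \emph{for every} $\vv$ simultaneously (over the randomness of the $m$ Gaussian draws). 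Thus the reduction simulates the training phase without any knowledge of $\vv$, and the TDS guarantees for $\A$ transfer up to a $0.01$ loss in success probability. You never make this observation, and without it the reduction does not compile into an NGCA distinguisher.

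\textbf{The accept/reject bit of $\A$ is not the distinguisher.} You argue that when $D=D_\vv$ and $\A$ accepts, you get a contradiction, forcing $\A$ to reject; hence ``$D=\Gauss_d\Rightarrow$ accept, $D=D_\vv\Rightarrow$ reject.'' The argument for the contradiction assumes that the output $h$ is $\eps$-close to $f_\vv$ \emph{under $\Gauss_d$}, but soundness only gives $\pr_{D_\vv}[h\ne f_\vv]\le\eps$; nothing prevents $h$ from equaling $f_\vv$ exactly (and hence satisfying soundness) while $\A$ still accepts. Your derivation that ``$\pr_{D_\vv}[h\ne f_\vv]\gtrsim\eps$'' does not follow, because the region where $D_\vv$ puts its extra $\ge 2\eps$ mass has negligible Gaussian measure, so closeness under $\Gauss_d$ places no constraint on $h$ there. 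The paper resolves this with one additional SQ query: after $\A$ accepts and outputs $h$, estimate $\hat\lambda\approx\pr_{D}[h(\x)=1]$ and reject iff $\hat\lambda>4\eps$. If $D=\Gauss_d$ then $h\approx -1$ under $\Gauss_d$ so $\hat\lambda\le 3\eps$; if $D=D_\vv$ and $\A$ accepted, soundness and $\pr_{D_\vv}[f_\vv=1]\ge 10\eps$ force $\hat\lambda\ge 8\eps$. This extra query is essential and is missing from your plan.

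The remainder of your outline—the two-stage construction of the hidden marginal (approximate moment matching by moving $O(\eps)$ mass out to $t\approx\log(1/\eps)$, then exact matching via the LP argument) and the final invocation of the moment-matching SQ lower bound—is correct and matches the paper. The $\chi^2$ condition you mention is not actually needed in the form the paper states the NGCA tool (Theorem~\ref{theorem: moment matching implies lowe bound}); exact moment matching to degree $k$ already yields tolerance $d^{-\Omega(k)}$.
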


We first define an appropriate distribution testing problem which can be reduced to TDS learning general halfspaces. In particular, the distribution testing problem we define amounts to testing whether a distribution to which we have sample access assigns too much mass to some halfspace compare to the mass assigned by the Gaussian.

\begin{definition}[Biased Halfspace Detection Problem]\label{definition:bias-halfspace-detection}
    Let $0\le \alpha\le \beta\le 1$ The $(\alpha,\beta)$-biased halfspace detection problem is the task
of distinguishing the $d$-dimensional standard Gaussian distribution
from any distribution $\Dgeneric$ over $\R^{d}$ for which there exist $\vv$
in $\R^{d}$ and $\tau$ in $\R$ satisfying 
\begin{align*}
\pr_{\x\sim \Dgeneric}[\x\cdot\vv\geq \tau] \geq\beta \;\;\text{ and }\;\; \pr_{\x\sim\Gauss_d}[\x\cdot\vv\geq \tau] \leq\alpha
\end{align*}
\end{definition}

The idea is that if one has a TDS learner for general halfspaces, then the TDS learner must also work when the training examples are drawn from a Gaussian and labelled by the constant hypothesis $-1$. In this case, the learner cannot extract any information about the training data, except from the fact that they correspond to a halfspace with very high bias (but the direction remains completely unspecified). If the test distribution assigns a lot of mass on the positive region of the halfspace, then the error would be large and the TDS learner will reject. On the other hand, if the test distribution is the Gaussian, the TDS learner will accept. Hence, the TDS learner would solve the biased halfspace detection problem. We obtain the following quantitative result, whose formal proof can be found in \Cref{section:appendix-sq-single-halfspaces}.

\begin{proposition}[Biased Halfspace Detection via TDS Learning]
\label{proposition: tds learning implies biased halfspace detection}
Let $\mathcal{A}$ be a TDS learning algorithm for general halfspaces
over $\R^{d}$ w.r.t. $\Gauss_d$ with accuracy parameter
$\eps$ and success probability at least $0.95$. Suppose $\mathcal{A}$
obtains at most $m$ samples from the training distribution and accesses
the test distribution via $N$ SQ queries of tolerance $\tol$
(the SQ queries are allowed to depend on the training samples). Then,
there exists an algorithm $(\frac{1}{100m},10\epsilon)$-biased halfspace
detection that uses $N+1$ SQ queries of tolerance $\min\left(\tol,\epsilon\right)$
and has success probability at least $0.8$. 
\end{proposition}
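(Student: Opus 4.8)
The plan is to reduce $(\tfrac{1}{100m},10\eps)$-biased halfspace detection to TDS learning of general halfspaces. Given SQ access to the unknown distribution $\Dgeneric$, the detection algorithm would simulate $\mathcal{A}$ as follows: it draws $m$ i.i.d.\ samples from $\Gauss_d$ \emph{by itself} and hands them to $\mathcal{A}$ as the training set, each labelled $-1$; each time $\mathcal{A}$ makes an SQ query about its test distribution, the detection algorithm forwards it to the $\Dgeneric$-oracle at tolerance $\min(\tol,\eps)$ (which is at most $\tol$, so it meets $\mathcal{A}$'s requirement). If $\mathcal{A}$ rejects, output ``far''. If $\mathcal{A}$ accepts and returns $h:\R^d\to\cube{}$, pose one more query $q(\x)=h(\x)\in[-1,1]$ to the $\Dgeneric$-oracle at tolerance $\min(\tol,\eps)$, obtaining a value $v$, and output ``far'' iff $v\ge -1+10\eps$. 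This makes $N+1$ queries at tolerance $\min(\tol,\eps)$ and uses no samples from $\Dgeneric$.

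The key step is to observe that the all-$(-1)$ training set reveals essentially nothing: it is close in total variation to a legitimate training set for a heavily biased halfspace. Fix any halfspace $\copt=\sign(\vv\cdot\x-\tau)$ with $\pr_{\Gauss_d}[\vv\cdot\x\ge\tau]\le\alpha:=\tfrac{1}{100m}$. Coupling a single draw $\x_1,\dots,\x_m\sim\Gauss_d$, the labelled samples $(\x_i,\copt(\x_i))$ and $(\x_i,-1)$ coincide unless some $\x_i$ lands in the positive region of $\copt$, which by a union bound happens with probability at most $m\alpha=\tfrac{1}{100}$. Hence our simulation of $\mathcal{A}$ differs from an honest TDS run with ground truth $\copt$ and test marginal $\Dgeneric$ by at most $\tfrac{1}{100}$ in total variation, so $\mathcal{A}$'s soundness and completeness guarantees hold with probability at least $0.95-\tfrac{1}{100}\ge 0.94$ over the simulation.

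It then remains to check the two cases, choosing $\copt$ appropriately each time. If $\Dgeneric=\Gauss_d$, pick $\copt$ with $\pr_{\Gauss_d}[\copt(\x)=1]$ negligible; there is no distribution shift, so completeness forces $\mathcal{A}$ to accept and soundness gives $\pr_{\Gauss_d}[h\ne\copt]\le\eps$, hence $\E_{\Gauss_d}[h]=2\pr_{\Gauss_d}[h=1]-1\le -1+3\eps$ and $v\le -1+4\eps<-1+10\eps$, so we correctly output ``Gaussian''. If $\Dgeneric$ is a ``yes'' instance with witness $(\vv,\tau)$, take $\copt=\sign(\vv\cdot\x-\tau)$, which indeed has $\pr_{\Gauss_d}[\vv\cdot\x\ge\tau]\le\alpha$; if $\mathcal{A}$ rejects we output ``far'' correctly, and if it accepts then soundness gives $\pr_{\Dgeneric}[h\ne\copt]\le\eps$, so $\pr_{\Dgeneric}[h=1]\ge 10\eps-\eps=9\eps$, giving $\E_{\Dgeneric}[h]\ge -1+18\eps$ and $v\ge -1+17\eps>-1+10\eps$, so we again correctly output ``far''. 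In both cases the only failure route is the event of probability at most $0.06$ that $\mathcal{A}$'s guarantee does not hold on the simulation, so the detection algorithm succeeds with probability at least $0.94\ge 0.8$, as required.

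The substance here is conceptual rather than technical: once one sees that feeding constant labels turns the TDS learner into a tester certifying that the test marginal is as concentrated as the Gaussian along every direction, the rest is parameter bookkeeping. The two things to get right are that the total-variation slack $m\alpha$ stays comfortably below the failure budget --- which fixes $\alpha=\tfrac{1}{100m}$ --- and that the margin $\beta=10\eps$ is wide enough that, after losing $\eps$ to the learner's error and $\eps$ to the query tolerance, the values $\E_{\Dgeneric}[h]\approx -1$ (Gaussian case) and $\E_{\Dgeneric}[h]\ge -1+18\eps$ (far case) fall cleanly on opposite sides of the threshold $-1+10\eps$.
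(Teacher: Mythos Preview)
Your proposal is correct and follows essentially the same approach as the paper: simulate $\mathcal{A}$ on an all-$(-1)$ training set drawn from $\Gauss_d$, forward its test SQ queries to $\Dgeneric$, and then use one final query on the output hypothesis to decide; the coupling argument bounding the total variation between the constant-labelled set and a legitimate training set for a highly biased halfspace is identical. The only differences are cosmetic (you query $\E_\Dgeneric[h]$ with threshold $-1+10\eps$ where the paper queries $\pr_\Dgeneric[h=1]$ with threshold $4\eps$), and a minor imprecision in the probability accounting for the Gaussian case (you need both completeness and soundness, so the failure budget is closer to $0.05+0.05+0.01$ than $0.06$), but this still comfortably clears the required $0.8$ success probability.
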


In order to complete the proof of \Cref{theorem: TDS learning requires d to the log}, it remains to show that the biased halfspace detection problem is hard in the SQ framework. To this end, we use a powerful tool from recent work on Non-Gaussian Component Analysis by \cite{diakonikolas2023sq}, which states that distinguishing the Gaussian from a distribution which is Gaussian in all but one hidden direction is hard for SQ algorithms, whenever the marginal in this direction is guaranteed to match the low degree moments of the Gaussian (see \Cref{theorem: moment matching implies lowe bound}). For our purposes, it is sufficient to construct a one-dimensional distribution that matches low degree moments with the standard Gaussian, but assigns non negligible mass far from the origin. We obtain the following result whose proof can be found in \Cref{section:appendix-sq-single-halfspaces}.

\begin{proposition}[SQ Lower Bound for Biased Halfspace Detection]
\label{proposition: detecting biased halfspaces is hard} For $\epsilon>0$,
set $d=\frac{1}{\epsilon^{1/4}}$. Then, for all
sufficiently small $\epsilon$, the following is true. Suppose that $\mathcal{A}$
is an SQ algorithm for $(d^{-\ln(1/\epsilon)},10\epsilon)$-biased
halfspace detection problem over $\R^{d}$, and $\mathcal{A}$ has
a success probability of at least $2/3$. Then, $\mathcal{A}$ either
has to use SQ tolerance of $d^{-\Omega(\frac{\log1/\epsilon}{\log\log1/\epsilon})}$,
or make $2^{d^{\Omega(1)}}$ SQ queries.
\end{proposition}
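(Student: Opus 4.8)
The plan is to reduce $(\alpha,\beta)$-biased halfspace detection, with $\alpha=d^{-\ln(1/\eps)}$ and $\beta=10\eps$, to a Non-Gaussian Component Analysis problem and then apply the generic SQ lower bound \Cref{theorem: moment matching implies lowe bound} of \cite{diakonikolas2023sq}. Set $m:=\Theta(\log(1/\eps)/\log\log(1/\eps))$, and fix a threshold $\tau$ with $\pr_{z\sim\Gauss_1}[z\ge\tau]\le\alpha$; since $\alpha=e^{-\Theta(\log^2(1/\eps))}$ we may take $\tau=\Theta(\log(1/\eps))$. The central object is a one-dimensional distribution $A$ over $\R$ that (i) matches the first $m$ moments of $\Gauss_1$, (ii) satisfies $\pr_{z\sim A}[z\ge\tau]\ge 10\eps$, and (iii) has $\chi^2(A,\Gauss_1)$ within the range admitted by the tool. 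Given such an $A$, for each unit $\vv\in\R^d$ let $\mathbf{P}_\vv$ be the distribution that is $A$ along $\vv$ and standard Gaussian on $\vv^\perp$. Then $\mathbf{P}_\vv$ is a genuine ``yes'' instance of $(\alpha,10\eps)$-biased halfspace detection, witnessed by the pair $(\vv,\tau)$, whereas $\Gauss_d$ is the ``no'' instance; hence any SQ algorithm for the detection problem distinguishes $\Gauss_d$ from $\{\mathbf{P}_\vv:\vv\in\Sphere^{d-1}\}$.

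I would build $A$ in two stages, as sketched in the Techniques section. First, a mass-transportation argument produces a distribution obtained from $\Gauss_1$ by displacing roughly $10\eps$ of its mass from a suitable near-origin profile out past $\tau$, arranged so that each of the first $m$ moments moves by only a small amount. The choice $m=\Theta(\log(1/\eps)/\log\log(1/\eps))$ is precisely the feasibility threshold: testing against a nonnegative degree-$m$ polynomial shows that $10\eps$ mass can sit at distance $\tau\gg\sqrt m$ only if $m\log(\tau/\sqrt m)\lesssim\log(1/\eps)$, which with $\tau=\Theta(\log(1/\eps))$ becomes $m\lesssim\log(1/\eps)/\log\log(1/\eps)$. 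Second, I would cancel the residual moment discrepancies \emph{exactly} by the linear-programming argument of \cite{pmlr-v195-diakonikolas23b-sq-mixtures}: the residual vector lies extremely close to the origin in the $m$-dimensional moment body of compactly supported near-origin measures, and since the moment vector of $\Gauss_1$ lies in its interior, the discrepancies are killed by adding a signed measure of tiny total variation supported near the origin; this perturbs neither property (ii) nor, beyond a benign factor, the chi-square divergence.

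With $A$ in hand, \Cref{theorem: moment matching implies lowe bound} applied to $\{\mathbf{P}_\vv\}$ says that any SQ algorithm distinguishing this family from $\Gauss_d$ with probability at least $2/3$ must either use $2^{d^{\Omega(1)}}$ queries or make a query of tolerance at most $d^{-\Omega(m)}$; substituting $m=\Theta(\log(1/\eps)/\log\log(1/\eps))$ yields the claimed bound, and the relation $d=\eps^{-1/4}$ enters only to convert between $\eps$- and $d$-exponents and to ensure $\alpha=d^{-\ln(1/\eps)}$ is small enough that $\tau$ remains at Gaussian-tail scale. The step I expect to be the real obstacle is the construction of $A$: pushing the matched-moment degree $m$ as high as the feasibility bound allows (this is what the strength of the lower bound rests on) while retaining at least $10\eps$ mass past $\tau$, and then verifying that the transport-plus-LP scheme actually goes through and leaves $\chi^2(A,\Gauss_1)$---which is necessarily large---still inside the regime where the generic NGCA lemma applies. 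The embedding into $\R^d$ and the final invocation are comparatively routine.
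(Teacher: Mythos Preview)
Your proposal is correct and follows essentially the same approach as the paper: a mass-transportation step yielding approximate moment matching with $\geq 10\eps$ mass past $\tau=\Theta(\log(1/\eps))$ up to degree $m=\Theta(\log(1/\eps)/\log\log(1/\eps))$, followed by the LP argument of \cite{pmlr-v195-diakonikolas23b-sq-mixtures} to upgrade to exact matching, and then the generic NGCA lower bound of \cite{diakonikolas2023sq}. One minor simplification: the version of \Cref{theorem: moment matching implies lowe bound} the paper invokes requires only exact matching of the first $m$ moments and gives a tolerance bound of the form $m^{O(m)}d^{-\Omega(m)}$, so you do not need to track $\chi^2(A,\Gauss_1)$ at all---that concern can be dropped.
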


\subsection{Intersections of Two Homogeneous Halfspaces}\label{section:sq-intersections-homogeneous}

The following theorem demonstrates that, although TDS learning a single homogeneous halfspace with respect to the Gaussian distribution admits fully polynomial time algorithms (see \cite{klivans2023testable}), for intersections of two homogeneous halfspaces, there is no polynomial-time SQ algorithm. Notably, the construction corresponds to a highly unbalanced intersection, so the lower bound does not hold for the problem of TDS learning balanced intersections.

\begin{theorem}[SQ Lower Bound for TDS Learning Two Homogeneous Halfspaces]\label{theorem:sq-two-homogeneous}
    Let $\eps>0$ with $\eps\in(0,1/10)$ and let $\A$ be a TDS learning algorithm for learning intersections of $2$ homogeneous halfspaces over $\R^d$ w.r.t. $\Gauss_d$ with accuracy $\eps$ and success probability at least $0.95$. Then $\A$ either makes some query of tolerance $\tol = d^{-\omega_\eps(1)}$ to the test distribution or runs in time $d^{\omega_\eps(1)}$.
\end{theorem}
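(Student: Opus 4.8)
The plan is to reduce an appropriate Non-Gaussian Component Analysis (NGCA) distribution-testing problem --- \emph{anti-concentration detection} --- to TDS learning intersections of two homogeneous halfspaces, and then invoke the generic SQ lower bound of \cite{diakonikolas2023sq}. Concretely, fix a direction $\e\in\R^d$ and let $\paramlist\subseteq\e^\perp$ be a family of $2^{d^{\Omega(1)}}$ nearly pairwise-orthogonal unit vectors. For a hidden $\vv\in\paramlist$, let $P_\vv$ be the distribution that is $\Gauss$ in every direction orthogonal to $\vv$ and whose marginal along $\vv$ is a fixed one-dimensional distribution $A$ (depending only on $\eps$) that (i) matches the first $m_A=\Theta(\log(1/\eps))$ moments of $\Gauss_1$ and (ii) places a constant fraction of its mass on $[-\rho,\rho]$, where $\rho=\rho(\eps,d)$ is a tiny parameter fixed later. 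Distinguishing $\Gauss_d$ from $P_\vv$ for a random $\vv\in\paramlist$ is the anti-concentration detection problem, and the NGCA machinery of \cite{diakonikolas2023sq} shows it requires either $2^{d^{\Omega(1)}}$ SQ queries or tolerance roughly $d^{-m_A/4}\sqrt{\chi^2(A,\Gauss_1)+1}$.

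The reduction feeds the supposed TDS learner $\A$ training data consisting of $m$ i.i.d.\ samples from $\Gauss_d$, all labelled $-1$ --- a valid realizable labelling, since the all-negative function lies in our class (take $\w^1=-\w^2$) --- and forwards each SQ query of $\A$ to the unknown test oracle $\Dtest\in\{\Gauss_d,P_\vv\}$; the distinguisher outputs ``Gaussian'' iff $\A$ accepts. Completeness of $\A$ gives acceptance (w.p.\ $\ge 0.95$) when $\Dtest=\Gauss_d$. For the converse, the key observation is that for any $\vv\in\paramlist$ the function $g_\vv(\x)=\ind\{(\rho\e-\vv)\cdot\x\ge 0\}\wedge\ind\{(\rho\e+\vv)\cdot\x\ge 0\}$ is an intersection of two (nearly antipodal, hence highly unbalanced) homogeneous halfspaces whose positive region is the thin cone $W_\vv=\{\,\e\cdot\x\ge 0,\ |\vv\cdot\x|\le\rho\,(\e\cdot\x)\,\}$. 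One checks $\pr_{\Gauss_d}[W_\vv]=\Theta(\rho)$, whereas $\pr_{P_\vv}[W_\vv]=\E_{b\sim A}[\,\overline\Phi(|b|/\rho)\,]=\Omega(1)$ because a constant fraction of $A$'s mass lies within $\rho$ of the origin. Choosing $\rho\ll 1/m$, the all-$-1$ training sample is, up to total variation distance $O(m\rho)$, exactly what $\A$ would see if the true concept were $g_\vv$ with the test oracle still $P_\vv$. Hence if $\A$ accepted on this input when $\Dtest=P_\vv$, soundness applied to the two candidate ground truths --- the all-negative function and $g_\vv$ --- would force its output $h$ to satisfy both $\pr_{P_\vv}[h=+1]\le\eps$ and $\pr_{P_\vv}[h=+1]\ge\pr_{P_\vv}[W_\vv]-\eps=\Omega(1)$, a contradiction. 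So $\A$ must reject (w.p.\ $\ge 0.85$) when $\Dtest=P_\vv$, and accept/reject is the desired distinguisher. Balancing $m_A=\Theta(\log(1/\eps))$ against $\rho$ --- which controls both the requirement $\rho\ll 1/m$ (with $m$ at most $\A$'s running time) and $\chi^2(A,\Gauss_1)=O(1/\rho)$ --- then yields, via \cite{diakonikolas2023sq}, that $\tol\le d^{-\omega_\eps(1)}$ unless $\A$ makes $d^{\omega_\eps(1)}$ queries (equivalently runs in time $d^{\omega_\eps(1)}$).

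The technical heart, and the step I expect to be the main obstacle, is the construction of the one-dimensional distribution $A$: it must simultaneously match $\omega_\eps(1)$ moments of $\Gauss_1$, carry constant mass at the tiny scale $\rho$, and have $\chi^2$-divergence from $\Gauss_1$ small enough that the resulting NGCA bound is still super-polynomial in the exponent (as a function of $1/\eps$). I would take $A$ to be (a smoothed version of) a mixture $c_0\cdot(\text{spike of width }\rho\text{ at }0)+(1-c_0)\cdot Q$ and use the linear-programming / linear-algebra moment-matching argument of \cite{pmlr-v195-diakonikolas23b-sq-mixtures,diakonikolas2023sq} to pin down the moments of $Q$ so that $A$ matches $\Gauss_1$ exactly up to degree $m_A$; feasibility follows because the target moment vector (twice the Gaussian moments on even indices, zero on odd) has positive semidefinite Hankel matrix. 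A secondary nuisance to handle carefully is the measure-theoretic bookkeeping in the two-ground-truths argument: conditioning $g_\vv$-labelled data on the all-negative event, and ensuring the all-negative function and $P_\vv$ disagree only on a $P_\vv$-null set; both are routine once $\rho$ is small and $\w^1$ is chosen generically.
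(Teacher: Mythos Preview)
Your reduction is essentially the paper's: reduce anti-concentration detection to TDS learning of two homogeneous halfspaces by feeding the learner all-$(-1)$ training labels and observing that these are consistent with a thin, nearly-antipodal wedge $g_\vv$ whose positive region carries negligible Gaussian mass but non-negligible mass under the hidden-direction alternative. Two points of comparison are worth making.

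First, a minor difference in the distinguisher. You force rejection by a two-ground-truth contradiction (all-negative versus $g_\vv$); the paper instead lets $\A$ accept and spends one extra SQ query on $\pr[\hat f=+1]$ to separate the two cases. Both are valid; your version avoids the extra query but requires the all-negative function to be in the class and to be genuinely all-negative under $P_\vv$ (which you correctly note needs a generic choice of $\w^1=-\w^2$ when $A$ has an atom).

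Second, and more substantively: you set out to \emph{build} the one-dimensional distribution $A$ (spike plus moment-corrected $Q$) and then invoke the generic NGCA machinery of \cite{diakonikolas2023sq}. The paper does not do this. It simply invokes Theorem~1.10 of \cite{diakonikolas2023sq} --- the ready-made $d^{-\omega_\eps(1)}$ SQ lower bound for distinguishing $\Gauss_d$ from any distribution placing mass $\ge\eps$ on a hyperplane --- as a black box, and takes the wedge half-angle $\alpha$ arbitrarily (even exponentially) small so that the all-$(-1)$ training set is consistent with $H_1\cap H_2$. This bypasses exactly the step you flag as ``the main obstacle,'' and with good reason: your sketch of it is the weakest part of the proposal. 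The Hankel-PSD feasibility claim for $Q$ is not correct as written (fixing $m_0=1$ while scaling the higher moments by $1/(1-c_0)$ amounts to subtracting a rank-one term from the Gaussian Hankel matrix and can destroy positive semidefiniteness for large $m_A$), and your ``balancing'' of $m_A=\Theta(\log(1/\eps))$ against $\rho$ is underdetermined --- $\rho$ must shrink with the learner's sample budget $m$, which inflates $\chi^2(A,\Gauss_1)=O(1/\rho)$ and erodes the tolerance exponent you get from NGCA. None of this is fatal (indeed Theorem~1.10 of \cite{diakonikolas2023sq} shows a suitable $A$ exists), but you are essentially re-deriving that theorem. Citing it directly, as the paper does, is the cleaner route.
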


To prove our result, we use an SQ lower bound for detecting anti-concentration (AC) from \cite{diakonikolas2023sq}.

\begin{theorem}[SQ Lower Bound for Detecting AC, Theorem 1.10 in \cite{diakonikolas2023sq}]\label{theorem:sq-hardness-ac}
    Let $\eps \in (0,1/2)$. Any SQ algorithm with SQ access to either (1) $\Gauss_d$ or (2) some distribution $\Dgeneric'$ that assigns mass at least $\eps$ on some subspace of dimension $d-1$ and distinguishes the two cases w.p. at least $2/3$, either uses $2^{d^{\Omega(1)}}$ queries, or uses a query with tolerance at most $d^{-\omega_\eps(1)}$.
\end{theorem}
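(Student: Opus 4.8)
The plan is to prove \Cref{theorem:sq-hardness-ac} via a Non-Gaussian Component Analysis (NGCA) construction: I exhibit a large family of distributions, each of the form required in case (2), that low-precision SQ algorithms provably cannot tell apart from $\Gauss_d$. For a unit vector $\vv$, let $P_\vv$ be the distribution on $\R^d$ that has a fixed one-dimensional marginal $A$ along $\vv$ and is a standard Gaussian on the orthogonal complement $\vv^\perp$. If $A$ puts mass at least $\eps$ at the origin, then $P_\vv$ puts mass at least $\eps$ on the hyperplane $\vv^\perp$, a subspace of dimension $d-1$, so each $P_\vv$ is a legitimate instance of case (2). It therefore suffices to show that any SQ distinguisher for the decision problem ``$\Gauss_d$ versus a uniformly random $P_\vv$'', with $\vv$ drawn from a suitable packing $V\subseteq\Sphere^{d-1}$ (with $|V|=2^{d^{\Omega(1)}}$ and $|\inn{\vu,\vv}|\le d^{-\Omega(1)}$ for distinct $\vu,\vv$), must either use $2^{d^{\Omega(1)}}$ queries or a query of tolerance at most $d^{-\omega_\eps(1)}$.

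The main ingredient is the choice of $A$. I take $A=\eps\,\delta_0+(1-\eps)\mu$ for a probability measure $\mu$ chosen so that $A$ matches the first $m$ moments of $\Gauss_1$ with $m$ as large as possible. The requirement that $\mu=(A-\eps\,\delta_0)/(1-\eps)$ have a legal (positive-semidefinite Hankel) moment sequence reduces, after substituting the prescribed Gaussian moments, to the positive-semidefiniteness of $H_\gamma-\eps\,\e_0\e_0^\top$, where $H_\gamma$ is the degree-$m$ Hankel matrix of the standard Gaussian moments and $\e_0$ the first basis vector; since $H_\gamma\succ 0$ this is equivalent to $\eps\cdot(\e_0^\top H_\gamma^{-1}\e_0)\le 1$, and $\e_0^\top H_\gamma^{-1}\e_0=\sum_\ell h_\ell(0)^2=\Theta(\sqrt m)$ in terms of normalized Hermite polynomials $h_\ell$ (sum over $\ell$ up to $\sim m$), so any $m=O(1/\eps^2)$ is admissible (for the $\omega_\eps(1)$ statement even a slowly growing $m$ suffices). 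Among the measures $\mu$ satisfying the moment constraints one can pick one that is absolutely continuous with bounded, compactly supported density, so $A$ meets the mild regularity conditions the SQ machinery needs.

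With $A$ fixed, the argument follows the standard NGCA template. The key identity is $\E_{\x\sim P_\vv}[q(\x)]-\E_{\x\sim\Gauss_d}[q(\x)]=\sum_{\ell>m}\hat G_{\vv,\ell}\,\E_A[h_\ell]$, where $G_\vv(t)=\E[q(t\vv+\z)]$ for $\z$ a standard Gaussian on $\vv^\perp$ and $\hat G_{\vv,\ell}=\inn{q,\,h_\ell(\inn{\vv,\cdot})}_{\Gauss_d}$ is its $\ell$-th Hermite coefficient: the degree-$\le m$ terms cancel exactly because $A$ matches the first $m$ moments of $\Gauss_1$. For distinct $\vu,\vv\in V$ the Hermite components satisfy $\inn{h_\ell(\inn{\vu,\cdot}),h_{\ell'}(\inn{\vv,\cdot})}_{\Gauss_d}=\delta_{\ell\ell'}\inn{\vu,\vv}^\ell$, so the pairwise correlations of the $P_\vv$'s relative to $\Gauss_d$ are $O(\sum_{\ell>m}|\inn{\vu,\vv}|^\ell)=d^{-\Omega(m)}=d^{-\omega_\eps(1)}$ (here the uniform boundedness of normalized Hermite polynomials on the compact support of $A$ controls the coefficients $\E_A[h_\ell]$); plugging these bounds into the generic statistical-query-dimension lemma for decision problems then forces either $2^{d^{\Omega(1)}}$ queries or tolerance $d^{-\omega_\eps(1)}$.

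The main obstacle — and the reason this requires the refined NGCA framework of \cite{diakonikolas2023sq} rather than earlier such lower bounds — is exactly that $A$ is \emph{singular}: because of the atom, $P_\vv\not\ll\Gauss_d$ and $\chi^2(P_\vv\,\|\,\Gauss_d)=\infty$, so the correlation estimates above do not literally apply and the Hermite tail $\sum_{\ell>m}\hat G_{\vv,\ell}\E_A[h_\ell]$ is at best conditionally convergent. One way to carry the argument through is to first replace $\delta_0$ by a narrow Gaussian $\Gauss(0,\rho^2)$ (re-solving the moment-matching problem, which stays feasible for small $\rho$) and run the template above on these absolutely continuous instances — the compact-support Hermite bounds keep the off-diagonal correlations at $d^{-\Omega(m)}$ uniformly in $\rho$, while the diagonal $\chi^2(A_\rho\,\|\,\Gauss_1)=\Theta(1/\rho)$ stays negligible against $|V|=2^{d^{\Omega(1)}}$ — and then take $\rho\to0$, observing that a fixed SQ algorithm distinguishing $\Gauss_d$ from $P_\vv$ would already distinguish it from $P_\vv^{(\rho)}$ for sufficiently small $\rho$. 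Making this limiting step rigorous (or, equivalently, proving the statistical-query-dimension lemma directly for moment-matching hidden-direction components that need not be absolutely continuous) is the crux; everything else is the bookkeeping sketched above.
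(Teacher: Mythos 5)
First, note that the paper does not prove this statement at all: it is quoted verbatim as Theorem~1.10 of \cite{diakonikolas2023sq}, so there is no in-paper proof to compare against. Your task was therefore effectively to reprove a cited black-box result, and your sketch does follow the same general route that \cite{diakonikolas2023sq} takes (an NGCA instance with a hidden direction $\vv$ along which the marginal is $\eps\,\delta_0+(1-\eps)\mu$ matching $m=\Theta(1/\eps^2)$ Gaussian moments, combined with a pairwise-correlation/SQ-dimension bound over a $2^{d^{\Omega(1)}}$-size near-orthogonal packing). Your feasibility computation via the Hankel condition $\eps\cdot\e_0^\top H_\gamma^{-1}\e_0\le 1$ and $\e_0^\top H_\gamma^{-1}\e_0=\sum_\ell h_\ell(0)^2=\Theta(\sqrt m)$ is correct and matches the known $m=O(1/\eps^2)$ barrier for atomic moment matching.

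However, the proposal has a genuine gap, and it is exactly the one you flag yourself: the marginal $A$ is singular, so $\chi^2(P_\vv\,\|\,\Gauss_d)=\infty$, $\sum_\ell \E_A[h_\ell]^2$ diverges, and the standard correlation lemma does not apply. Your proposed fix --- mollify $\delta_0$ to $\Gauss(0,\rho^2)$, prove hardness for the absolutely continuous instances, and take $\rho\to 0$ --- does not work as stated. The step ``a fixed SQ algorithm distinguishing $\Gauss_d$ from $P_\vv$ would already distinguish it from $P_\vv^{(\rho)}$ for sufficiently small $\rho$'' fails because $d_{TV}(P_\vv,P_{\vv}^{(\rho)})\ge\eps$ for \emph{every} $\rho>0$ (the atom on the hyperplane never disappears in total variation), so a valid tolerance-$\tol$ oracle answer for $P_\vv^{(\rho)}$ need not be a valid answer for $P_\vv$; equivalently, $\E_{P_\vv^{(\rho)}}[q]\not\to\E_{P_\vv}[q]$ for arbitrary bounded measurable queries $q$ (weak convergence only controls queries whose discontinuity set is $P_\vv$-null, and e.g.\ $q=\ind\{\vv\cdot\x=0\}$ has a gap of exactly $\eps$). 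Also, the diagonal bound $\chi^2(A_\rho\,\|\,\Gauss_1)=\Theta(1/\rho)$ blows up as $\rho\to 0$, so you cannot take the limit uniformly in the SQ-dimension bound. Closing this gap is precisely the technical contribution of \cite{diakonikolas2023sq}: they prove the NGCA correlation/SQ-dimension lemma under weakened hypotheses that accommodate hidden-direction marginals which are not absolutely continuous (and only approximately moment-matching), rather than deducing it from the absolutely continuous case by a limiting argument. As written, your proof establishes the result only for the mollified instances, which do not satisfy the anti-concentration hypothesis of the theorem, so the statement itself remains unproven.
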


It remains to reduce the AC detection problem to the problem of TDS learning intersections of two homogeneous halfspaces. The idea is to use an intersection of two almost opposite halfspaces, whose positive region effectively coincides with half of the subspace where $\Dgeneric'$ has non negligible mass. Therefore, upon acceptance, the output function should take the value $1$ with non-negligible probability only if the unknown distribution is $\Dgeneric'$, which implies that we have solved the distinguishing problem. See \Cref{section:appendix-hardness-two-homogeneous} for a proof.

\begin{remark}
    Under the balance assumption, our algorithms achieve polynomial-time performance for learning intersections of $k=O(1)$ homogeneous halfspaces (see \Cref{corollary:homogeneous-results}). This demonstrates the importance of the balance condition on the training data.
\end{remark}

\subsection{Balanced Intersections of Two General Halfspaces}\label{section:sq-intersections-general-non-degenerate}

We now provide an SQ lower bound for TDS learning balanced (see \Cref{definition:bounded-bias-non-degeneracy}) intersections of two general halfspaces. The lower bound demonstrates that the balance condition cannot always mitigate the obstacles of TDS learning due to hard examples that are trivial for PAC learning. In particular, the hard example here is an intersection of two halfspaces, where one of them is known and the other one is orthogonal to the first and is effectively irrelevant for the intersection under the Gaussian measure. For PAC learning, this implies that the second halfspace can be safely ignored, but for TDS learning, the hidden halfspace is a source of SQ lower bounds as demonstrated below.

\begin{theorem}[SQ Lower bound for TDS Learning Halfspace Intersections]
\label{theorem: TDS learning requires d to the log even balanced}
For $\epsilon>0$, set $d = \eps^{-1/4}$. Then, for all sufficiently small $\epsilon$, the following is true. Let $\mathcal{A}$ be a TDS learning algorithm for $\frac{1}{3}$-balanced intersections of $2$ general halfspaces over $\R^{d}$ w.r.t. $\Gauss_d$, with accuracy parameter $\epsilon$ and success probability at least
$0.95$. Further, suppose that $\mathcal{A}$ obtains at most $d^{\frac{\log1/\epsilon}{\log\log1/\epsilon}}$ samples from the training distribution and accesses the testing distribution via $2^{d^{o(1)}}$ SQ queries of precision $\tol>0$ (the SQ queries
are allowed to depend on the training samples). Then, the tolerance $\tol$ has to be at most $d^{-\Omega(\frac{\log1/\epsilon}{\log\log1/\epsilon})}$.
\end{theorem}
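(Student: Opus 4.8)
The plan is to follow the proof of \Cref{theorem: TDS learning requires d to the log} essentially verbatim, the only new ingredient being that the single highly biased halfspace used there must be replaced by a $\tfrac13$-balanced intersection of two general halfspaces into which a biased halfspace is ``hidden.'' Concretely, I would prove an analog of \Cref{proposition: tds learning implies biased halfspace detection}: a TDS learner for $\tfrac13$-balanced intersections of two general halfspaces (with the sample/query/tolerance budget in the statement) can be turned into an SQ algorithm for the $(\tfrac{1}{100m},\Theta(\eps))$-biased halfspace detection problem of \Cref{definition:bias-halfspace-detection}; then \Cref{proposition: detecting biased halfspaces is hard} finishes the argument exactly as for \Cref{theorem: TDS learning requires d to the log}, since $d=\eps^{-1/4}$ and $\tfrac{1}{100m}\ge d^{-\ln(1/\eps)}$ whenever $m\le d^{\log(1/\eps)/\log\log(1/\eps)}$, so hardness of $(d^{-\ln(1/\eps)},\cdot)$-detection implies hardness of $(\tfrac{1}{100m},\cdot)$-detection.

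For the reduction, fix the coordinate direction $\e_1$ and take $T=\Theta(\sqrt{\log m})=\Theta(\log(1/\eps)/\sqrt{\log\log(1/\eps)})$ large enough that $\pr_{\x\sim\Gauss_d}[\x\cdot\vv<-T]\le\alpha:=\tfrac{1}{100m}$ for every unit vector $\vv$. For a unit vector $\vv\perp\e_1$, let $f_\vv(\x)=2(\ind\{\x\cdot\e_1\ge 0\}\wedge\ind\{\x\cdot\vv\ge -T\})-1$. Since $\pr_{\Gauss_d}[\x\cdot\vv\ge -T]\ge 1-\alpha$, we have $\pr_{\Gauss_d}[f_\vv=1]\in[\tfrac12-\alpha,\tfrac12]$, so $f_\vv$ is $\tfrac13$-balanced for small $\eps$; and because its second halfspace holds with probability $\ge 1-\alpha$ under $\Gauss_d$, a set of $m$ Gaussian points labelled by $f_\vv$ agrees, with probability $\ge 1-m\alpha=0.99$, with the same set labelled by the \emph{fixed, known} halfspace $g(\x):=2\ind\{\x\cdot\e_1\ge 0\}-1$. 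The distinguisher therefore never needs to know the hidden direction: given SQ access to $D$ (either $\Gauss_d$, or a distribution with the $(\alpha,\beta)$-bias property along some $\vv\perp\e_1$), it runs the TDS learner on training points drawn from $\Gauss_d$ and labelled by $g$, with $D$ as the test marginal; when the learner halts it makes one more SQ query estimating $q:=\pr_{\x\sim D}[\x\cdot\e_1\ge 0 \text{ and } h(\x)=-1]$ (with $h$ the output hypothesis), and it reports ``$\Gauss_d$'' iff the learner accepted and $q$ lies below a threshold of order $\eps$.

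The correctness analysis is routine given this setup. If $D=\Gauss_d$, the run is statistically $0.01$-close to a run on a fixed valid concept $f_{\e_2}$ with test marginal $\Gauss_d$, so the learner accepts and its output obeys $\pr_{\Gauss_d}[f_{\e_2}\neq h]\le\eps$; since $f_{\e_2}$ equals $-1$ on $\{\x\cdot\e_1\ge 0\}$ only on the $\le\alpha/2$-mass slab $\{\x\cdot\e_1\ge 0,\ \x\cdot\e_2<-T\}$, we get $q\le\eps+\alpha/2$, below threshold. If instead $D$ has the bias property along some $\vv\perp\e_1$, the run is $0.01$-close to one on the valid concept $f_\vv$ with test marginal $D$; if the learner accepts, $\pr_D[f_\vv\neq h]\le\eps$, and since $f_\vv\equiv -1$ on $\{\x\cdot\vv<-T\}$, a set of $D$-mass $\ge\beta$, while — using $\vv\perp\e_1$ — $\x\cdot\e_1$ is a standard Gaussian independent of $\x\cdot\vv$ under $D$, the set $\{\x\cdot\e_1\ge 0,\ \x\cdot\vv<-T,\ h=-1\}$ has $D$-mass $\ge\beta/2-\eps$, forcing $q\ge\beta/2-\eps$, above threshold once $\beta$ is a sufficiently large constant multiple of $\eps$ (and if the learner rejects, the distinguisher already outputs the correct answer). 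Tracking the $0.01$ coupling losses, the learner's failure probability, and the final query's error $\le\min(\tol,\eps)$, the distinguisher is correct with probability $\ge 2/3$.

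The main obstacle is not this bookkeeping but the design of the concept, which must at once (i) be $\tfrac13$-balanced under the Gaussian, (ii) leak no information about the hidden direction through $m$ training labels, and (iii) convert the hidden bias of $D$ into a detectable deviation of a statistic of $h$. Requirement (i) is what forces the auxiliary halfspace to be homogeneous; (ii) is what forces its partner to have threshold $-T$ with $T$ slowly growing, so that the partner is vacuous on the training sample; and (iii) is what forces the hidden direction to be orthogonal to $\e_1$, which is precisely the hypothesis that makes the independence step in the soundness case valid. The one genuinely new technical point to nail down is that \Cref{proposition: detecting biased halfspaces is hard} is unaffected by promising the witnessing direction to lie in the $(d-1)$-dimensional subspace $\e_1^\perp$ — this only removes one dimension from the underlying NGCA construction and hence does not change any of its parameters — together with choosing the constants (in particular the bias parameter $\beta\asymp\eps$ and the threshold on $q$) so that the gap in $q$ strictly exceeds the allotted SQ tolerance.
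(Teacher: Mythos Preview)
Your approach is correct and closely parallels the paper's, but the reductions differ in one key mechanism. You fix a known direction $\e_1$, constrain the hidden NGCA direction $\vv$ to lie in $\e_1^\perp$, and exploit that under the hard distribution $\x\cdot\e_1$ is a standard Gaussian independent of $\x\cdot\vv$; this lets you run the TDS learner once and detect the bias via the single statistic $q=\pr_D[\x\cdot\e_1\ge 0,\ h(\x)=-1]$. The paper instead picks a \emph{random} $\vu\in\S^{d-1}$, runs the learner twice with training labels $\sign(\vu\cdot\x)$ and $\sign(-\vu\cdot\x)$, and queries $\pr_D[h(\x)=1]$ after each run: under the Gaussian both answers are $\approx\tfrac12$, while under the biased alternative their sum is $\approx\pr_D[\x\in H_1]\le 1-\beta$, so at least one is bounded away from $\tfrac12$. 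The two-run trick buys the paper full modularity---no orthogonality, no structural assumption on $D$ beyond the bias property---so the reduction literally targets \Cref{definition:bias-halfspace-detection} and \Cref{proposition: detecting biased halfspaces is hard} applies verbatim. Your route is slightly leaner (one run, one extra query) but you should be explicit that your reduction is not to the general biased-halfspace-detection problem: the independence step requires not merely $\vv\perp\e_1$ but the NGCA product structure of $D$, so you are really reducing directly to the $(d-1)$-dimensional instance of \Cref{theorem: moment matching implies lowe bound} rather than to a restricted form of \Cref{proposition: detecting biased halfspaces is hard}. This is harmless for the final bound, but worth stating cleanly.
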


The idea is similar to the one used for the proof of \Cref{theorem: TDS learning requires d to the log}. We once more prove a general reduction of the biased halfspace detection problem to TDS learning. 

The hard instance corresponds once more (as for the proof of \Cref{theorem: TDS learning requires d to the log}) to the detection problem where the unknown distribution is either (1) the standard Gaussian or (2) some distribution $\Dgeneric'$ that assigns non-trivial mass in the negative region of a halfspace $H_1 = \{\x:\vv\cdot \x+\tau\ge 0\}$ for some appropriately large $\tau$. 

The reduction of the hard instance to TDS learning follows closely the proof of \Cref{proposition: tds learning implies biased halfspace detection} (see Appendix \ref{section:detecting-via-tds-appendix}), but we run the TDS algorithm twice, once using training data of the form $(\x,\sign(\vu\cdot \x))$ with $\x\sim \Gauss_d$ and $\vu$ some random vector in $\S^{d-1}$ and another one with training data of the form $(\x,\sign(-\vu\cdot \x))$, $\x\sim\Gauss_d$. 

For each of the executions of the TDS algorithm, the training data are consistent (w.h.p.) with the unknown intersection defined by the halfspaces $H_1=\{\x: \vv\cdot \x + \tau \ge 0\}$ and $H_2 = \{\x: \vu\cdot \x\ge 0\}$ (or $\bar H_2 =\{\x:-\vu\cdot \x\ge 0\}$). If the TDS algorithm rejects, then we have a certificate that the marginal was not the Gaussian. If the TDS algorithm accepts, then we may use one SQ query for the probability that the output function is positive. If $\Dgeneric'$ was the Gaussian, then this probability should be very close to $1/2$. Otherwise, it should be bounded away from $1/2$ for at least one of the executions ($\Dgeneric'$ assigns non-trivial mass in the negative region of $H_1$, so it must assign non-trivial mass to either $H_2\setminus H_1$ or $\bar H_2\setminus H_1$). Hence, the pair of our SQ queries (one for each execution) will indicate the answer to the biased halfspace detection problem.

\begin{remark}
    Note that the lower bound of \Cref{theorem: TDS learning requires d to the log even balanced} holds even for the problem of TDS learning $2$-non-degenerate intersections of two halfspaces (according to \Cref{definition:non-degeneracy}). Under the non-degeneracy assumption, our algorithms achieve improved performance (see \Cref{corollary:general-results}) and, in particular, the lower bound of \Cref{theorem: TDS learning requires d to the log even balanced} is essentially tight ($d^{\tilde\Theta(\log(1/\eps))}$) for TDS learning $\Theta(1)$-non-degenerate, $\poly(\eps)$-balanced intersections of $k = O(1)$ halfspaces.
\end{remark}

\bibliographystyle{alpha}
\bibliography{refs.bib}

\appendix

\crefalias{section}{appendix} 

\newpage
\section{Notation and Basic Definitions}

We let $\R^d$ be the $d$-dimensional Euclidean space. For a distribution $\Dgeneric$ over $\R^d$, we use $\E_\Dgeneric$ (or $\E_{\x\sim \Dgeneric}$) to refer to the expectation over distribution $\Dgeneric$ and for a given (multi)set $\Sunlabelled$, we use $\E_\Sunlabelled$ (or $\E_{\x\sim \Sunlabelled}$) to refer to the expectation over the uniform distribution on $\Sunlabelled$ (i.e., $\E_{\x\sim \Sunlabelled}[g(\x)] = \frac{1}{|\Sunlabelled|}\sum_{\x\in \Sunlabelled}g(\x)$, counting possible duplicates separately).
For $\x\in \R^d$ where $\x = (\x_1,\x_2,\dots,\x_d)$ and for $\mindex \in \N^d$, we denote with $\x^\mindex$ the product $\prod_{i\in[d]}\x_i^{\mindex_i}$. We denote with $\S^{d-1}$ the $d-1$ dimensional sphere on $\R^d$. For any $\vv_1,\vv_2\in\R^d$, we denote with $\vv_1\cdot \vv_2 $ the inner product between $\vv_1$ and $\vv_2$ and we let $\measuredangle (\vv_1,\vv_2)$ be the angle between the two vectors, i.e., the quantity $\theta\in [0,\pi]$ such that $\|\vv_1\|_2\|\vv_2\|_2\cos(\theta) = \vv_1\cdot \vv_2$. Let $\var_{\x}(\vv\cdot \x)$ denotes the variance of random variable $\vv\cdot \x$, for some vector $\vv\in\R^d$. For $\vv\in\R^d,\tau\in \R$, we call a function of the form $\x\mapsto \sign(\vv\cdot \x)$ a homogeneous halfspace and a function of the form $\x\mapsto \sign(\vv\cdot \x + \tau)$ a general halfspace over $\R^d$. An intersection of halfspaces is a function from $\R^d$ to $\cube{}$ of the form $\x\mapsto 2\wedge_{i\in [k]}\ind\{\w^i\cdot \x+\tau^i\ge 0\}-1$, where $\w^i$ are called the normals of the intersection and $\tau^i$ the corresponding thresholds.

\paragraph{Learning Setup.} We focus on the framework of \textbf{testable learning with distribution shift (TDS learning)} defined by \cite{klivans2023testable}. In particular, for a concept class $\C\subseteq\{\R^d\to\cube{}\}$, the learner $\A$ is given $\eps,\delta\in(0,1)$, a set $\Strain$ of labelled examples of the form $(\x,\copt(\x))$, where $\x\sim\Dtrain = \Gauss_d$ and $\copt\in\C$, as well a set $\Stest$ of unlabelled examples from an arbitrary test distribution $\Dtest$ and is asked to output a hypothesis $h:\R^d\to \{\pm 1\}$ with the following guarantees.
\begin{enumerate}[label=\textnormal{(}\alph*\textnormal{)}]
    \item (Soundness.) With probability at least $1-\delta$ over the samples $\Strain,\Stest$ we have: 
    
    If $\A$ accepts, then the output $h$ satisfies $\pr_{\x\sim\Dtest}[\copt(\x)\neq h(\x)] \leq \eps$.
    \item (Completeness.) Whenever $\Dtest = \Gauss_d$, $\A$ accepts w.p. at least $1-\delta$ over $\Strain,\Stest$.
\end{enumerate}
If the learner $\A$ enjoys the above guarantees, then $\A$ is called an $(\eps,\delta)$-TDS learner for $\C$ w.r.t. $\Gauss_d$. Since the probability of success can be amplified through repetition (see \cite[Proposition C.1]{klivans2023testable}), in what follows, we will provide algorithms with constant failure probability.

For our upper bounds, we will make use of a balanced concepts condition, whose importance we justify through appropriate lower bounds (see \Cref{section:sq-general-halfspaces,section:sq-intersections-homogeneous}). In particular, we will assume that the ground truth ($\Dtrain,\copt$) is sufficiently balanced, meaning that positive and negative examples from the training data both have sufficiently large frequency.

\begin{definition}[Balance Condition]\label{definition:bounded-bias-non-degeneracy}
    Let $\Dgeneric$ be a distribution over $\R^d$ and $\concept:\R^d\to \cube{}$. For $\bias\in(0,1/2]$, we say that $\concept$ is $\bias$-balanced with respect to $\Dgeneric$ if 
    \[
        \pr_{\x\sim\Dgeneric}[\concept(\x)=1] \in [\bias,1-\bias]
    \]
    For a concept class $\C\subseteq\{\R^d\to\cube{}\}$, we denote with $\C_\bias$ the $\bias$-balanced version of $\C$, i.e., the subset of $\C$ that contains the elements that are $\bias$-balanced.
\end{definition}

Note that the algorithm can check whether the ground truth is balanced using training data and, therefore, detect possible failure due to imbalance (i.e., the condition is testable). 

\section{Additional Tools}

Our positive results build on the dimension reduction technique of \cite{vempala2010learning} for PAC learning intersections of halfspaces and low-dimensional convex sets through principal component analysis (PCA), which is based on the following Gaussian variance reduction lemma. Note that although the first two parts of the lemma were known (see e.g., \cite{vempala2010learning}), the last part (which gives variance reduction for any vector that has some correlation with a normal) is proven here. In fact, this more general form of the lemma is important even for the results in \cite{vempala2010learning} (although it is missing from the original paper). 

\begin{lemma}[Variance Reduction, variant of Lemma 4.7 in \cite{vempala2010learning}]\label{lemma:variance-reduction}
    Let $\convset\subseteq \R^d$ be an intersection of halfspaces and let $\Gauss_d|_\convset$ be the truncation of the standard Gaussian distribution in $d$ dimensions $\Gauss_d$ to $\convset$. For any $\vu\in\S^{d-1}$, we have $\var_{\x\sim \Gauss_d|_\convset}(\vu\cdot\x) \le 1$. Moreover, if for some $\threshold\in \R$ the halfspace $\{\x: \vu\cdot \x + \threshold \ge 0\}$ is one of the defining halfspaces of the intersection then, we have variance reduction along $\vu$, i.e., $\var_{\x\sim \Gauss_d|_\convset}(\vu\cdot\x) \le 1 - \frac{1}{C}e^{-\frac{1}{2}(\max\{0,\threshold\})^2}$ for a sufficiently large universal constant $C>0$. Furthermore, for any $\eps\in(0,\frac{1}{4})$ and any $\vu'\in\S^{d-1}$ with $\vu\cdot \vu' \ge \eps$, for a sufficiently large constant $C'>0$, if $\bias = \pr_{\Gauss_d}[\x\in \convset]$ we have 
    \[  
        \var_{\x\sim \Gauss_d|_\convset}(\vu'\cdot\x) \le 1 - \bigr(\bias{e^{-{\threshold^2}}/2}\bigr)^{\frac{C'}{\eps^2}}
    \]
\end{lemma}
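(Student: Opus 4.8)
The plan is to prove the three statements in increasing order of difficulty, reducing each to a one-dimensional computation about the density of the truncated Gaussian in the relevant direction. Throughout, write $\bias = \pr_{\Gauss_d}[\x\in\convset]$ and decompose $\R^d$ along the direction $\vu$: for $\x\sim\Gauss_d|_\convset$, conditioning on the component $t = \vu\cdot\x$, the orthogonal part is Gaussian truncated to a (convex) slice, and since truncating a Gaussian to a convex set never increases the variance of any linear functional, $\var_{\x\sim\Gauss_d|_\convset}(\vu\cdot\x) = \var(t)$ where $t$ has density proportional to $e^{-t^2/2}\cdot w(t)$ with $w(t) = \pr_{\z\sim\Gauss_{d-1}}[(t\vu + \text{(orthogonal part)})\in\convset] \in [0,1]$ the slice-mass function; $w$ is log-concave because $\convset$ is convex (Prékopa), and hence unimodal.

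For part one (any $\vu\in\S^{d-1}$): the density of $t$ is $e^{-t^2/2}w(t)/Z$ with $w$ log-concave and bounded by $1$; I would invoke the standard fact that multiplying the standard Gaussian density by a log-concave factor and renormalizing produces a distribution that is more concentrated, so its variance is at most $1$. (Concretely: a Gaussian is $1$-strongly log-concave, so $e^{-t^2/2}w(t)$ is $1$-strongly log-concave, and by the Brascamp--Lieb / Bakry--Émery inequality a $1$-strongly log-concave density has variance at most $1$.) For part two (when $\{\x:\vu\cdot\x+\threshold\ge 0\}$ is one of the defining halfspaces), $w(t)$ is supported on $t\ge -\threshold$, so $t$ is a one-sided-truncated-and-reweighted Gaussian; I would lower-bound the variance deficit by comparing to the fully-conditioned case: since $w$ is additionally monotone-ish near the boundary and vanishes for $t<-\threshold$, a constant fraction of the Gaussian mass that would contribute "spread" near $t\approx -\threshold$ is deleted, and a direct second-moment estimate (using $\int_{-\threshold}^\infty e^{-t^2/2}\,dt$ versus the full integral, and that the deleted mass sits at distance $\ge$ something from the conditional mean) yields the claimed $1 - \frac{1}{C}e^{-(\max\{0,\threshold\})^2/2}$; this is essentially the known Lemma 4.7 computation.

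The main obstacle — and the genuinely new part — is the third statement, variance reduction along a vector $\vu'$ merely $\eps$-correlated with a defining normal $\vu$. Here the slice-mass $w'(t') = \pr_{\z}[t'\vu' + \z \in \convset]$ as a function of $t' = \vu'\cdot\x$ no longer has a hard cutoff, since $\vu'\ne\vu$; a translation in the $\vu'$ direction only partially translates in the $\vu$ direction. The plan is: write $\vu = \eps\,\vu' + \sqrt{1-\eps^2}\,\vu^\perp$ with $\vu^\perp\perp\vu'$, so moving distance $s$ along $\vu'$ moves distance $\eps s$ along $\vu$; then for the halfspace $\{\vu\cdot\x\ge -\threshold\}$, at a point where $\vu\cdot\x$ is near its "typical clipped" value, increasing $\vu'\cdot\x$ by $\Theta(1/\eps)$ pushes $\vu\cdot\x$ by $\Theta(1)$ and can drive the point out of the halfspace. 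So I would argue that $w'(t')$, while log-concave (again by convexity of $\convset$), decays by a constant factor over each window of length $\Theta(1/\eps)$ to the right of some point — quantitatively, $w'(t'+1/\eps) \le (1 - c)\, w'(t')$ for a window placed where the relevant slice has $\Gauss_{d-1}$-mass of the "about to exit" event at least a constant; the constant $c$ is where the $\bias$ and $e^{-\threshold^2}$ factors enter, since $w'(t')$ near the conditional mean is at least roughly $\bias$ and the halfspace constraint being active contributes the $e^{-\threshold^2}$. Iterating this geometric decay $\Theta(\eps^2 \cdot \text{stuff})$ times over a window of length $1$ forces the density $e^{-t'^2/2}w'(t')/Z'$ to be non-negligibly lighter in its right tail than the Gaussian, and a second-moment bookkeeping converts that into $\var \le 1 - (\bias e^{-\threshold^2}/2)^{C'/\eps^2}$. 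I expect the delicate points to be (i) choosing the right window location so that the "exit probability" is genuinely a constant rather than something that itself costs a $\bias$ factor per step — handled by placing the window near where the full conditional distribution $\Gauss_d|_\convset$ has constant mass, and (ii) turning "density is geometrically lighter on one side" into the exponentially-small-but-nonzero variance gap, which is a routine but careful comparison of $\int t'^2 e^{-t'^2/2}w'(t')$ against $\int t'^2 e^{-t'^2/2}$ exploiting that the deleted right-tail mass is at squared-distance $\gtrsim 1/\eps^2$ from the mean, partially cancelling the $(\cdots)^{1/\eps^2}$ smallness — so one must track the exponents honestly to land exactly on $C'/\eps^2$.
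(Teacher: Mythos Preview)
Your route to parts one and two (strong log-concavity via Brascamp--Lieb, then a direct one-dimensional tail estimate) is a sound alternative to the paper's, which dispatches both by invoking Cafarelli's contraction theorem with a suitable potential. For the third and novel part, however, your plan diverges sharply from the paper's and is much harder to carry through. The paper does \emph{not} analyze the slice-mass function $w'(t')$ directly. Instead it reduces to part two by introducing an artificial halfspace along $\vu'$: set $\convset' = \convset \cap \{\x:\vu'\cdot\x + \theta \ge 0\}$ for a parameter $\theta>0$, so that part two immediately gives $\var_{\Gauss_d|_{\convset'}}(\vu'\cdot\x) \le 1 - \tfrac{1}{C}e^{-\theta^2/2}$, and then bounds the discrepancy $\var_{\Gauss_d|_\convset}(\vu'\cdot\x) - \var_{\Gauss_d|_{\convset'}}(\vu'\cdot\x)$ by controlling the Gaussian mass of $\convset\setminus\convset'$. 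The geometric point is that $\x\in\convset\setminus\convset'$ forces both $\vu'\cdot\x < -\theta$ and $\vu\cdot\x \ge -\threshold$, which (since $\vu\cdot\vu'\ge\eps$) pushes the component of $\x$ along $\vv = \frac{\vu - (\vu\cdot\vu')\vu'}{\|\vu - (\vu\cdot\vu')\vu'\|_2}$ above roughly $\gamma \approx \theta\eps^{-1}\sqrt{1-\eps^2} - \threshold/\sqrt{1-\eps^2}$; hence $\pr_{\Gauss_d}[\convset\setminus\convset'] \lesssim e^{-\theta^2/2 - \gamma^2/2}$ is a product of Gaussian tails in the orthogonal directions $\vu',\vv$. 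Choosing $\theta \asymp (\threshold + \log^{1/2}(1/\bias))/\eps$ makes the error term dominated by $\tfrac{1}{2C}e^{-\theta^2/2}$, and $e^{-\theta^2/2}$ with this $\theta$ is exactly $(\bias e^{-\threshold^2}/2)^{C'/\eps^2}$.

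This single reduction sidesteps both of the delicate points you flag. Your direct approach is not unreasonable, but the sketch has the sign reversed (increasing $\vu'\cdot\x$ \emph{increases} $\vu\cdot\x$ and moves into the halfspace $\{\vu\cdot\x + \threshold\ge 0\}$, not out of it), and the sentence ``iterating this geometric decay $\Theta(\eps^2\cdot\text{stuff})$ times over a window of length $1$'' does not parse: decay by a constant over steps of length $1/\eps$ yields only $\eps$ iterations per unit window. The mass deficit you need is genuinely located at $|t'|\sim \threshold/\eps$ in the tail, and converting that into the precise exponent $C'/\eps^2$ is exactly the bookkeeping the paper's artificial-halfspace trick lets you avoid.
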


\begin{proof}
    The first two parts follow from Cafarelli's theorem, see e.g. Theorem 3.1 in \cite{funaki2007dynamic} where one may set the function $\psi$ to be a quadratic function within the interval $(-\threshold,\infty)$ and either $0$ outside it when $\threshold<0$ or a linear function tangent to the graph of $y = x^2$ at the point $x = \threshold$ if $\threshold\ge 0$\footnote{This choice of $\psi$ is due to Raghu Meka \cite{raghu}.}.

    For the last part, we will introduce an artificial halfspace in the direction of $\vu'$ and we will link the variance in the direction of $\vu'$ under the truncation of the Gaussian on the initial intersection to the variance under the new (artificial) truncation. In particular, let $\convset'$ be the set $\convset \cap \{\vu'\cdot \x + \theta \ge 0\}$, where $\theta>0$ is a parameter of our choice. We then have $\var_{\x\sim\Gauss_d|_{\convset'}}(\x) \le 1-\frac{1}{C}\exp(-\theta^2/2)$, by the previous part of the lemma. However, we are interested in the quantity $\var_{\x\sim\Gauss_d|_{\convset}}(\x)$. We have the following
    \begin{align*}
        \var_{\x\sim\Gauss_d|_{\convset}}(\x) = &\E_{\x\sim\Gauss_d|_{\convset}}[(\vu'\cdot \x)^2] - \E_{\x\sim\Gauss_d|_{\convset}}[\vu'\cdot \x]^2 \\
        = &\underbrace{\E_{\x\sim\Gauss_d|_{\convset}}[(\vu'\cdot \x)^2\ind\{\x\in\convset'\}]}_{s_1}+\underbrace{\E_{\x\sim\Gauss_d|_{\convset}}[(\vu'\cdot \x)^2\ind\{\x\not\in\convset'\}]}_{s_2} \\
        &- (\underbrace{\E_{\x\sim\Gauss_d|_{\convset}}[(\vu'\cdot \x)\ind\{\x\in\convset'\}]}_{\mu_1}+\underbrace{\E_{\x\sim\Gauss_d|_{\convset}}[(\vu'\cdot \x)\ind\{\x\not\in\convset'\}]}_{\mu_2})^2
    \end{align*}
    For the first term $s_1$, we have $s_1\le \E_{\x\sim\Gauss_d|_{\convset'}}[(\vu'\cdot \x)^2]$. For the second term $s_2$, we have 
    \begin{align*}
        s_2 &= \frac{\E_{\x\sim\Gauss_d}[(\vu'\cdot \x)^2\ind\{\x\in\convset\setminus\convset'\}]}{\pr_{\x\sim\Gauss_d}[\x\sim\convset]} \\
        &\le \frac{1}{\pr_{\x\sim\Gauss_d}[\x\in\convset]}\cdot \E_{\x\sim\Gauss_d}\Bigr[(\vu'\cdot \x)^2 \ind\Bigr\{\vu'\cdot \x+\theta<0, \vv\cdot \x>\underbrace{\frac{\theta}{\tan \cos^{-1}\eps} - \frac{\threshold}{\sin\cos^{-1}\eps}}_{\gamma}\Bigr\}\Bigr]\,,
    \end{align*}
    where the inequality follows from the fact that for any $\x\in\convset$ we have $\vu\cdot \x+\threshold\ge 0$ and for any $\x\not\in\convset'$ we have $\vu'\cdot \x+\theta<0$, where $\vv = \frac{\vu-(\vu\cdot \vu')\vu'}{\|\vu-(\vu\cdot \vu')\vu'\|_2}$. Hence, by bounding the Gaussian integral of the above inequality (note that $\vu'\perp\vv$), we obtain that for some sufficiently large constant $C'>0$ we have $s_2 \le \frac{1}{\pr_{\x\sim\Gauss_d}[\x\in\convset]}C'\theta^2e^{-\frac{1}{2}\theta^2 - \frac{1}{2}\gamma^2}$. For the term $\mu_1$ we have 
    \begin{align*}
        \mu_1 &= \E_{\Gauss_d|_{\convset'}}[\vu'\cdot \x] \cdot \bigr(1-{\pr_{\Gauss_d|_{\convset}}[\x\not\in\convset']}\bigr) \\
        &= \E_{\Gauss_d|_{\convset'}}[\vu'\cdot \x] \cdot \Bigr(1-\underbrace{\frac{{\pr_{\Gauss_d}[\x\in\convset\setminus\convset']}}{\pr_{\Gauss_d}[\x\in\convset]}}_{\xi}\Bigr)
    \end{align*}
    Therefore, we have that $\mu_1^2 \ge \E_{\Gauss_d|_{\convset'}}[\vu'\cdot \x]^2 - 2\xi \E_{\Gauss_d|_{\convset'}}[\vu'\cdot \x]$. Additionally, we have that $\E_{\Gauss_d|_{\convset'}}[\vu'\cdot \x] = \frac{1}{\pr_{\Gauss_d}[\x\in\convset']}\cdot \E_{\Gauss_d}[(\vu'\cdot \x )\ind\{\x\in\convset'\}] \le \frac{1}{(1-\xi)\pr_{\Gauss_d}[\x\in\convset]}(\E_{\Gauss_d}[(\vu'\cdot \x )^2\ind\{\x\in\convset'\}])^{1/2}$ which implies that $\mu_1^2\ge \E_{\Gauss_d|_{\convset'}}[\vu'\cdot \x]^2 - \frac{2\xi}{(1-\xi)\pr_{\Gauss_d}[\x\in\convset]}$. Note that the quantity $\pr_{\Gauss_d}[\x\in\convset\setminus\convset']$ is bounded by $\pr_{\Gauss_d}[\vu'\cdot \x+\theta<0, \vv\cdot \x>\gamma] \le e^{-\frac{1}{2}\theta^2-\frac{1}{2}\gamma^2}$.

    The term $2\mu_1\mu_2$ can be bounded similarly (observe that $\mu_2\le s_2^{1/2}$). Hence, overall, we have
    \[
        \var_{\x\sim{\Gauss_d|_{\convset}}}(\vu'\cdot \x) \le \var_{\x\sim{\Gauss_d|_{\convset'}}}(\vu'\cdot \x) + \Bigr(\frac{C'\theta^2}{\pr_{\x\sim\Gauss_d}[\x\in\convset]} + \frac{C'}{\pr_{\x\sim\Gauss_d}[\x\in\convset]^2} \Bigr)\cdot e^{-\frac{1}{2}\theta^2-\frac{1}{2}\gamma^2}
    \]
    Recall that $\var_{\x\sim{\Gauss_d|_{\convset'}}}(\vu'\cdot \x) \le1- \frac{1}{C}e^{-\frac{1}{2}\theta^2}$ and hence by picking $\theta = C''\frac{\threshold+{\log^{1/2}({1}/{\bias})}}{\eps}$, where $\bias = \pr_{\Gauss_d}[\x\in\convset]$ and $C''\ge 1$ some sufficiently large constant, we have $\var_{\x\sim{\Gauss_d|_{\convset}}}(\vu'\cdot \x) \le 1-\frac{1}{2C}e^{-\frac{1}{2}\theta^2}$. This concludes the proof of \Cref{lemma:variance-reduction}.
\end{proof}

We will also make use of the following lemma regarding the sample complexity of estimating the expectation and covariance matrix of a log-concave distribution. Note that the truncation of the standard Gaussian on any convex set is log-concave and has variance at most $1$ in every direction.

\begin{lemma}[Mean and Covariance Estimation, see Lemma 4.2 in \cite{vempala2010learning}]\label{lemma:mean-covariance-estimation-log-concave}
    Let $C>0$ be a sufficiently large universal constant, let $\esterror > 0, \delta\in(0,1)$, let $\Dgeneric$ be some log-concave distribution over $\R^d$ such that the variance in every direction is bounded by $1$ and let $\Sunlabelled$ be a set of i.i.d. samples from $\Dgeneric$ of size $|\Sunlabelled|\ge C\cdot \frac{d}{\esterror^2}\log^2(d/\delta)$. Then, with probability at least $1-\delta$, we have 
    \[
        \bigr\|\E_{\x\sim\Sunlabelled}[\x]-\E_{\x\sim\Dgeneric}[\x]\bigr\|_2\le \esterror\text{ and }\bigr\|\var_{\x\sim\Sunlabelled}(\x)-\var_{\x\sim\Dgeneric}(\x)\bigr\|_2\le \esterror
    \]
\end{lemma}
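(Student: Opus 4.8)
The plan is to reduce to the mean-zero case, record the sub-exponential tail estimates that log-concavity supplies, and then prove the mean bound by a net-plus-Bernstein argument and the covariance bound by truncation plus matrix Bernstein. \textbf{Reduction and basic estimates.} First I would note that $\var_{\x\sim\Sunlabelled}(\x)$, $\var_{\x\sim\Dgeneric}(\x)$, and $\|\E_{\x\sim\Sunlabelled}[\x]-\E_{\x\sim\Dgeneric}[\x]\|_2$ are all invariant under translating $\Sunlabelled$ and $\Dgeneric$ by a common fixed vector, so one may assume $\E_{\x\sim\Dgeneric}[\x]=0$; write $\Sigma=\var_\Dgeneric(\x)=\E_\Dgeneric[\x\x^\top]$, so $\|\Sigma\|_2\le1$ and $\trc(\Sigma)\le d$, and assume $\esterror\le1$. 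The estimates I would use are: (i) for every $\vu\in\S^{d-1}$, $\vu\cdot\x$ is a mean-zero log-concave variable with variance at most $1$, hence sub-exponential with $\|\vu\cdot\x\|_{\psi_1}=O(1)$ and $\E_\Dgeneric[(\vu\cdot\x)^4]=O(1)$ (Borell's lemma / reverse H\"older for log-concave laws); and (ii) $\E_\Dgeneric[\|\x\|_2^2]=\trc(\Sigma)\le d$ together with Borell's lemma give $\pr_{\x\sim\Dgeneric}[\|\x\|_2\ge t]\le 2e^{-ct/\sqrt d}$ for $t\ge C\sqrt d$, and $\E_\Dgeneric[\|\x\|_2^4]=O(d^2)$.

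\textbf{Mean.} For the mean I would fix a $\tfrac14$-net $\mathcal N$ of $\S^{d-1}$ with $|\mathcal N|\le 9^d$, apply Bernstein's inequality to the empirical average of the i.i.d.\ mean-zero $\psi_1$-bounded variables $\vu\cdot\x$ to get $\pr[\,|\E_{\x\sim\Sunlabelled}[\vu\cdot\x]|>\esterror/2\,]\le 2e^{-c'|\Sunlabelled|\esterror^2}$, and union bound over $\mathcal N$; since $|\Sunlabelled|\ge C\frac d{\esterror^2}\log^2(d/\delta)$ dominates $(d+\log(1/\delta))/\esterror^2$, this holds over all of $\mathcal N$ except with probability $\delta/3$, and the standard net-to-sphere comparison $\|\vu_0\|_2\le\tfrac43\sup_{\vu\in\mathcal N}\vu\cdot\vu_0$ applied to $\vu_0=\E_{\x\sim\Sunlabelled}[\x]$ gives the first bound. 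The same argument run at accuracy $\sqrt\esterror/2$ in place of $\esterror/2$ also yields $\|\E_{\x\sim\Sunlabelled}[\x]\|_2^2\le\esterror/4$, which is needed below.

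\textbf{Covariance.} Writing $n=|\Sunlabelled|$ and $\bar{\x}=\E_{\x\sim\Sunlabelled}[\x]$, we have $\var_{\x\sim\Sunlabelled}(\x)=\frac1n\sum_{\x\in\Sunlabelled}\x\x^\top-\bar{\x}\bar{\x}^\top$ with $\|\bar{\x}\bar{\x}^\top\|_2\le\esterror/4$, so it remains to bound $\|\frac1n\sum_{\x\in\Sunlabelled}\x\x^\top-\Sigma\|_2$. I would set $R=\Theta(\sqrt d\,\log(n/\delta))$ and $M_\x=\x\x^\top\ind\{\|\x\|_2\le R\}$; by (ii), $\pr[\exists\,\x\in\Sunlabelled:\|\x\|_2>R]\le\delta/3$, on whose complement $\frac1n\sum M_\x=\frac1n\sum\x\x^\top$, while $\|\E_\Dgeneric[M_\x]-\Sigma\|_2\le\E_\Dgeneric[\|\x\|_2^2\ind\{\|\x\|_2>R\}]\le\esterror/4$ (an integral of an exponentially small tail). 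The $M_\x$ are i.i.d.\ PSD matrices with $\|M_\x\|_2\le R^2$ and matrix-variance proxy $\|\E_\Dgeneric[M_\x^2]\|_2\le\|\E_\Dgeneric[\|\x\|_2^2\,\x\x^\top]\|_2\le\E_\Dgeneric[\|\x\|_2^4]^{1/2}\sup_{\vu}\E_\Dgeneric[(\vu\cdot\x)^4]^{1/2}=O(d)$ by (i)--(ii), so matrix Bernstein gives $\|\frac1n\sum M_\x-\E_\Dgeneric[M_\x]\|_2\le\esterror/4$ except with probability $\delta/3$ once $n\gtrsim\frac d{\esterror^2}\log(d/\delta)+\frac{R^2}{\esterror}\log(d/\delta)$, which is implied by $n\ge C\frac d{\esterror^2}\log^2(d/\delta)$ for large $C$ (using $R^2=O(d\log^2(n/\delta))$ and $\log n=O(\log(d/(\esterror\delta)))$). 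Triangle inequalities and a union bound over the three failure events of probability $\delta/3$ then finish the proof.

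\textbf{The hard part.} The delicate step is the covariance bound: because $\vu\cdot\x$ is only sub-exponential, the summands $\x\x^\top$ (equivalently the scalars $(\vu\cdot\x)^2$) have tails heavier than sub-exponential, so one cannot apply (matrix) Bernstein to $\x\x^\top$ directly and must first truncate the samples to a Euclidean ball; choosing the truncation radius correctly is precisely where estimate (ii) enters, and this truncation is what produces the extra logarithmic factor in the sample complexity. If one prefers, the covariance part can instead be quoted from known sharp bounds on empirical covariance estimation in log-concave ensembles (Adamczak--Litvak--Pajor--Tomczak-Jaegermann; Srivastava--Vershynin), which apply here since $\|\Sigma\|_2\le1$ and give the stated dependence directly.
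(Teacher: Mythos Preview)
The paper does not give its own proof of this lemma; it is quoted as a tool and attributed to \cite{vempala2010learning} (Lemma 4.2 there), so there is nothing to compare against at the level of argument. Your sketch is a standard and correct route to such a statement: centering, Borell's reverse H\"older for log-concave laws to get $\psi_1$ control on one-dimensional marginals and the $e^{-c t/\sqrt d}$ tail on $\|\x\|_2$, a net plus scalar Bernstein for the mean, and truncation plus matrix Bernstein for the covariance. The identification of the ``hard part'' is also accurate: the need to truncate because $(\vu\cdot\x)^2$ is only $\psi_{1/2}$ is exactly what generates the extra log.

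One minor caveat on constants: with $R=\Theta(\sqrt d\,\log(n/\delta))$ the ``range'' term in matrix Bernstein asks for $n\gtrsim \tfrac{R^2}{\esterror}\log(d/\delta)=\tfrac{d}{\esterror}\log^2(n/\delta)\log(d/\delta)$, and this is not literally dominated by $C\tfrac{d}{\esterror^2}\log^2(d/\delta)$ for all $\esterror\in(0,1]$ (take $\esterror$ constant and $d/\delta$ large). As written, your argument yields $n\gtrsim \tfrac{d}{\esterror^2}\log^{O(1)}(d/(\esterror\delta))$, i.e., possibly one extra log and a $\log(1/\esterror)$ inside the polylog, which is harmless for how the lemma is used in the paper. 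Your fallback of citing Adamczak--Litvak--Pajor--Tomczak-Jaegermann (or Srivastava--Vershynin) cleanly removes this slack and matches the stated bound.
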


The following lemma is a standard argument that provides a sparse cover of the $k$-dimensional sphere and will be useful in order to exhaustively search in the low-dimensional subspace.

\begin{lemma}[Sparse Cover w.r.t. Angular Distance]\label{lemma:sparse-cover-angles}
    Let $\subspace$ be a linear subspace spanned by the vectors $(\vv^1,\vv^2,\dots,\vv^k)$. For $\eps\in(0,\frac{1}{k})$, let $\subspace_\eps = \{\frac{\vu}{\|\vu\|_2}: \vu = \eps\sum_{i=1}^k j_i \vv^i, j_i\in\Z\cap[-\frac{1}{\eps},\frac{1}{\eps}]\}$. Then, for any $\vv\in \subspace$, there is $\vu\in \subspace_\eps$ such that $\measuredangle(\vv,\vu) \le 6({k\eps})^{1/4}$ and $|\subspace_\eps| \le (\frac{2}{\eps})^k$.
\end{lemma}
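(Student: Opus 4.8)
The plan is to build the cover by snapping the coordinates of a target vector (in the fixed basis $(\vv^1,\dots,\vv^k)$) to the lattice $\eps\Z$, then control the resulting angular error. First I would fix $\vv\in\subspace$ and, WLOG by scaling, take $\vv$ to have coordinates $\lambda = (\lambda_1,\dots,\lambda_k)$ in the basis, i.e. $\vv = \sum_i \lambda_i \vv^i$. The subtlety is that $(\vv^1,\dots,\vv^k)$ need not be orthonormal, so I should not assume $\|\vv\|_2^2 = \sum_i \lambda_i^2$; instead I will reduce to the orthonormal case. Since the lemma is purely about angles and the cover $\subspace_\eps$ is scale-invariant in $\vu$, I may replace $\vv^i$ by the Gram--Schmidt orthonormalization: actually the cleanest route is to observe $\measuredangle$ only depends on directions, apply a linear isomorphism $\subspace\to\R^k$ sending $\vv^i$ to $\e^i$, and note that this map distorts angles, so that does \emph{not} immediately work. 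Better: I will directly estimate, for the nearest lattice point $\vu = \eps\sum_i j_i \vv^i$ with $j_i = \mathrm{round}(\lambda_i/\eps)$ (after first rescaling $\vv$ so that $\max_i|\lambda_i|\le 1$, which forces $j_i\in\Z\cap[-1/\eps,1/\eps]$), that $\|\vu - \vv\|_2 \le \eps\sum_i \|\vv^i\|_2 \cdot \tfrac12$. Here I will use that an orthonormal basis for $\subspace$ can be assumed (replacing $(\vv^1,\dots,\vv^k)$ by an orthonormal basis changes neither $\subspace$ nor the statement's hypotheses), so $\|\vv^i\|_2 = 1$ and $\|\vu-\vv\|_2 \le \eps k/2$.

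Next I would convert this Euclidean distance bound into an angular bound. After rescaling we have $\max_i|\lambda_i|\le 1$; choosing $\vv$ with $\max_i|\lambda_i| = 1$ exactly (rescale up to the boundary) gives $\|\vv\|_2\ge 1$. A standard fact is that for unit-normalized versions, $\measuredangle(\vv,\vu) \le 2\arcsin\!\big(\tfrac{\|\vu-\vv\|_2}{2\min(\|\vv\|_2,\|\vu\|_2)}\big)$, and since $\|\vu\|_2 \ge \|\vv\|_2 - \|\vu-\vv\|_2 \ge 1 - \eps k/2 \ge 1/2$ for $\eps < 1/k$, we get $\measuredangle(\vv,\vu) \le 2\arcsin(\eps k) \le \pi\eps k$, which for $\eps<1/k$ is comfortably below $6(k\eps)^{1/4}$ (indeed $\pi\eps k \le 6(k\eps)^{1/4}$ whenever $k\eps \le 1$). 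The generous $(k\eps)^{1/4}$ in the statement suggests the authors may route through a weaker intermediate bound (e.g. bounding $\sum_i(\lambda_i - \eps j_i)^2$ crudely or controlling the normalization $\|\vu\|_2$ away from $0$ more wastefully); either way the claimed bound follows a fortiori. For the cardinality bound, $|\subspace_\eps| \le (2/\eps + 1)^k \le (2/\eps)^k$ once we absorb constants (or one notes there are at most $(2\lceil 1/\eps\rceil + 1)^k$ choices of $(j_1,\dots,j_k)$, each giving at most one normalized vector).

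The main obstacle I anticipate is the handling of the non-orthonormal spanning set: one must be careful that $\subspace_\eps$ is \emph{defined} using the given $\vv^i$, yet the clean $\|\vu-\vv\|_2\le \eps k/2$ estimate wants an orthonormal basis. The resolution is that the lemma is invoked (in \Cref{algorithm:tds-homogeneous-intersections}) with $\subspace = \spn(\vv^1,\dots,\vv^k)$ and the stated guarantee is about \emph{existence} of a close $\vu$; since replacing $(\vv^1,\dots,\vv^k)$ by any other basis of the same $\subspace$ only changes which finite set $\subspace_\eps$ is but preserves the "for all $\vv\in\subspace$ there exists a close $\vu$" form up to adjusting constants, one may as well prove it for an orthonormal basis. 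Alternatively — and this is probably what the authors do to avoid even discussing this — one keeps the given $\vv^i$, writes $\vv = \sum_i\lambda_i\vv^i$, bounds $\|\vv - \eps\sum_i j_i\vv^i\|_2 \le \eps\sum_i\|\vv^i\|_2$ using the rescaling $\max|\lambda_i|\le 1$, and then the only remaining nuisance is that $\|\vv^i\|_2$ and $\|\vv\|_2$ are not normalized, which is exactly where the loose $(k\eps)^{1/4}$ buys enough slack. Everything else is routine: the rounding step, the triangle inequality, and $\arcsin x \le \tfrac{\pi}{2}x$ on $[0,1]$.
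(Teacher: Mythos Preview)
Your approach is correct and essentially the same as the paper's: round the coordinates of $\vv$ in the given basis to the $\eps\Z$-lattice and bound the angular error. The paper, however, does not agonize over orthonormality at all; it simply writes $\vv=\sum_i\lambda_i\vv^i$ with $\|\vv\|_2=1$ and asserts $\sum_i\lambda_i^2=1$, $\lambda_i\in[-1,1]$ --- i.e.\ it silently treats $(\vv^1,\dots,\vv^k)$ as orthonormal (justified by how the lemma is used downstream, exactly as you guessed). From there it picks $j_i$ with $|\lambda_i-\eps j_i|\le\eps$, and rather than your Euclidean-distance-then-$\arcsin$ route it bounds the cosine directly: $\vv\cdot\vu\ge 1-k\eps$, $\|\vu\|_2\le 1+3\sqrt{k\eps}$, hence $\cos\measuredangle(\vv,\vu)\ge 1-4\sqrt{k\eps}$ and $\measuredangle(\vv,\vu)\le 6(k\eps)^{1/4}$. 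Your normalization by $\max_i|\lambda_i|=1$ together with the distance bound in fact yields the sharper $\measuredangle(\vv,\vu)=O(k\eps)$, confirming your suspicion that the $(k\eps)^{1/4}$ in the statement is just an artifact of the paper's looser intermediate estimate on $\|\vu\|_2$.
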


\begin{proof}of \Cref{lemma:sparse-cover-angles}, see \cite{vempala2010random}.
    Let $\vv\in\subspace$, which we assume w.l.o.g. to have unit norm (since we only focus on angular distance). We have $\vv=\sum_{i\in[k]}\lambda_i\vv^i$ with $\sum_{i\in[k]}\lambda_i^2 = 1$ and $\lambda_i\in[-1,1]$. For each $i$, there exists $j_i\in\Z\cap[-\frac{1}{\eps},\frac{1}{\eps}]$ such that $|\lambda_i-\eps j_i|\le \eps$. Therefore, if $\vu = \sum_{i\in[k]}\eps j_i \vv^i$, then we have $\vv\cdot \vu \ge 1-k\eps$ and $\|\vu\|_2 \le 1+3\sqrt{k\eps}$, which implies that $\cos(\vu,\vv)\ge \frac{1-k\eps}{1+3\sqrt{k\eps}}\ge 1- 4\sqrt{k\eps}$ and therefore $\measuredangle(\vu,\vv) \le 6(k\eps)^{1/4}$. 
\end{proof}

We will need the following result from \cite{gollakota2023efficient} which provides a tester which ensures that any homogeneous halfspace with normal that is geometrically close to some given vector $\hat\w$ has low disagreement with the halfspace corresponding to $\hat\w$ under the tested marginal.

\begin{lemma}[Tester for Local Halfspace Disagreement, see \cite{gollakota2023efficient}]\label{lemma:homogeneous-disagreement-tester}
    Let $C>0$ be a sufficiently large universal constant. There is a tester that for any $\eps,\delta\in (0,\frac{1}{2})$, any $\hat\w\in\S^{d-1}$ and any (multi)set $\Sunlabelled$ of points in $\R^d$, runs in time $O(d^3+d^2|\Sunlabelled|)$ and satisfies the following.
    \begin{enumerate}[label=\textnormal{(}\alph*\textnormal{)}]
    \item (Soundness.) If the tester accepts, then for any $\w\in\S^{d-1}$, with $\measuredangle(\w,\hat\w) \le \eps$ we have
    \[
        \pr_{\x\sim\Sunlabelled}[\sign(\w\cdot \x)\neq \sign(\hat\w\cdot \x)] \le C\cdot \eps^{\frac{2}{3}}
    \]
    \item (Completeness.) Whenever $\Sunlabelled$ consists of $m\ge C(\frac{1}{\eps^{4/3}}\log(1/\delta)+ d\log^2(d/\delta))$ independent samples from $\Gauss_d$, the tester accepts w.p. at least $1-\delta$.
\end{enumerate}
\end{lemma}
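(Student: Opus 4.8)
The plan is to reconstruct the tester of \cite{gollakota2023efficient}. On input $(\hat\w,\Sunlabelled,\eps,\delta)$ the tester computes the empirical mean $\E_{\x\sim\Sunlabelled}[\x]$ and covariance $\var_{\x\sim\Sunlabelled}(\x)$, together with the empirical anti-concentration estimate $\hat q := \pr_{\x\sim\Sunlabelled}[\,|\hat\w\cdot\x|\le\beta\,]$ at the single scale $\beta := c\,\eps^{2/3}$ for a universal constant $c$; it accepts iff $\|\E_{\x\sim\Sunlabelled}[\x]\|_2\le 1$, $\|\var_{\x\sim\Sunlabelled}(\x)\|_2\le 2$, and $\hat q\le c\,\eps^{2/3}$. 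The running-time bound is then immediate: forming the $d\times d$ empirical second-moment matrix costs $O(d^2|\Sunlabelled|)$, computing its operator norm costs $O(d^3)$, and $\hat q$ costs $O(d|\Sunlabelled|)$, for a total of $O(d^3+d^2|\Sunlabelled|)$.

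For soundness, the key geometric step is that a disagreement forces $\x$ into a thin double wedge. Fix $\w\in\S^{d-1}$ with $\theta:=\measuredangle(\w,\hat\w)\le\eps$ and write $\w=\cos\theta\,\hat\w+\sin\theta\,\vv$ with $\vv\in\S^{d-1}$, $\vv\perp\hat\w$. If $\sign(\w\cdot\x)\ne\sign(\hat\w\cdot\x)$, then $\hat\w\cdot\x$ and $\cos\theta(\hat\w\cdot\x)+\sin\theta(\vv\cdot\x)$ have opposite signs, which forces $\cos\theta\,|\hat\w\cdot\x|\le\sin\theta\,|\vv\cdot\x|$, i.e.\ $|\hat\w\cdot\x|\le 2\eps\,|\vv\cdot\x|$ (using $\tan\theta\le 2\eps$ for $\eps\le 1/2$). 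Hence, for our choice of $\beta$,
\[
\pr_{\x\sim\Sunlabelled}[\sign(\w\cdot\x)\ne\sign(\hat\w\cdot\x)]\;\le\;\pr_{\x\sim\Sunlabelled}[\,|\hat\w\cdot\x|\le\beta\,]+\pr_{\x\sim\Sunlabelled}[\,|\vv\cdot\x|>\beta/(2\eps)\,].
\]
The first term equals $\hat q\le c\,\eps^{2/3}$; the second, by Markov's inequality and $\vv\perp\hat\w$, is at most $\tfrac{4\eps^2}{\beta^2}\E_{\x\sim\Sunlabelled}[(\vv\cdot\x)^2]\le\tfrac{4\eps^2}{\beta^2}\bigl(\|\var_{\x\sim\Sunlabelled}(\x)\|_2+\|\E_{\x\sim\Sunlabelled}[\x]\|_2^2\bigr)\le\tfrac{12\eps^2}{\beta^2}$. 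With $\beta=c\,\eps^{2/3}$ both contributions are $O(\eps^{2/3})$, giving total disagreement at most $C\eps^{2/3}$ for a suitable universal $C$. The only genuinely non-routine point is recognizing that balancing these two terms is what pins the exponent to $2/3$; everything else is bookkeeping.

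For completeness, assume $\Sunlabelled$ consists of $m$ i.i.d.\ draws from $\Gauss_d$. Since $\Gauss_d$ is log-concave with variance $1$ in every direction, \Cref{lemma:mean-covariance-estimation-log-concave} (with a constant estimation error and failure probability $\delta/2$) gives $\|\E_{\x\sim\Sunlabelled}[\x]\|_2\le 1$ and $\|\var_{\x\sim\Sunlabelled}(\x)-I\|_2\le 1$, hence $\|\var_{\x\sim\Sunlabelled}(\x)\|_2\le 2$, as soon as $m\gtrsim d\log^2(d/\delta)$. For the anti-concentration check, $\hat\w\cdot\x\sim\Gauss_1$, so $\E[\hat q]=\pr_{g\sim\Gauss_1}[|g|\le\beta]\le\sqrt{2/\pi}\,\beta$, which is a constant factor below the acceptance threshold $c\,\eps^{2/3}$; a Hoeffding bound for the Bernoulli statistic $\ind\{|\hat\w\cdot\x|\le\beta\}$ then shows $\hat q\le c\,\eps^{2/3}$ with probability $\ge 1-\delta/2$ once $m\gtrsim\eps^{-4/3}\log(1/\delta)$. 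A union bound over the two events yields acceptance with probability $\ge 1-\delta$, and combining the two sample-size requirements gives the stated $m=O\bigl(\eps^{-4/3}\log(1/\delta)+d\log^2(d/\delta)\bigr)$. The main obstacle is thus entirely in the soundness geometry and the $\beta$-tradeoff above.
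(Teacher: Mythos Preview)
Your proof is correct and follows essentially the same approach as the paper: test anti-concentration of $\hat\w\cdot\x$ at scale $\Theta(\eps^{2/3})$ together with a bound on the operator norm of the empirical second-moment matrix, then split the disagreement event into a slab event and a large-$|\vv\cdot\x|$ event handled by Chebyshev/Markov. Your version is in fact slightly more careful than the paper's, since you also test $\|\E_{\x\sim\Sunlabelled}[\x]\|_2$, which is what makes the passage from the covariance bound to a bound on $\E_{\x\sim\Sunlabelled}[(\vv\cdot\x)^2]$ rigorous for an arbitrary (not necessarily centered) sample set.
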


\begin{proof}of \Cref{lemma:homogeneous-disagreement-tester}, combination of Propositions 3.2, 3.3 and 4.5 in \cite{gollakota2023efficient}.
    The tester does the following. 
    \begin{enumerate}
        \item Compute $\pr_{\x\sim\Sunlabelled}[|\hat\w\cdot \x|\le 2\eps^{2/3}]$ and \textbf{reject} if its value is greater than $5\eps^{2/3}$.
        \item Compute the largest eigenvalue of the covariance matrix $\var_{\x\sim\Sunlabelled}(\x)$ and \textbf{reject} if its value is greater than $2$.
        \item Otherwise, \textbf{accept}.
    \end{enumerate}
    \paragraph{Soundness.} If the tester accepts, then we have the following. Suppose that $\w\neq\hat\w$ (otherwise, the proof is trivial). Let $\vv =\frac{\w - (\w\cdot\hat\w)\hat\w}{\|\w - (\w\cdot\hat\w)\hat\w\|_2}$ (so $\vv$ orthogonal to $\hat\w$). Observe that for any $\x$ with $\sign(\w\cdot \x)\neq \sign(\hat\w\cdot \x)$ and $|\hat\w\cdot \x|>2\eps^{2/3}$, it holds that $|\vv\cdot \x|\ge \frac{2\eps^{2/3}}{\tan \eps}$, since we have $|\vv\cdot \x| = \frac{|\w\cdot \x|+|\w\cdot \hat\w|\cdot |\hat\w\cdot \x|}{\|\w-(\w\cdot \hat \w)\hat \w\|_2}$, where $\w\cdot \x\ge 0$, $\w\cdot \hat\w \ge \cos\eps$ and $\|\w-(\w\cdot \hat \w)\hat \w\|_2\le \sin\eps$. Therefore, we obtain the following by additionally using Chebyshev's inequality.
    \begin{align*}
        \pr_{\x\sim\Sunlabelled}[\sign(\w\cdot \x)\neq \sign(\hat\w\cdot \x)] &\le \pr_{\x\sim\Sunlabelled}[|\hat\w\cdot \x|\le 2\eps^{2/3}] + \pr_{\x\sim\Sunlabelled}[|\vv\cdot \x|\ge {2\eps^{2/3}}/{\tan \eps}] \\
        &\le 5\eps^{2/3} + \frac{(\tan\eps)^2 \E_{\x\sim\Sunlabelled}[(\vv\cdot \x)^2]}{4\eps^{4/3}} \\
        &\le 5\eps^{2/3} + 2\eps^{2 -\frac{4}{3}} = 7\eps^{2/3}
    \end{align*}
    \paragraph{Completeness.} For completeness, assume that $\Sunlabelled$ consists of $m$ i.i.d. Gaussian examples. We have that $\E_{\Sunlabelled}[\pr_{\x\sim\Sunlabelled}[|\hat\w\cdot \x|\le 2\eps^{2/3}]] = \pr_{\x\sim\Gauss_d}[|\hat\w\cdot \x|\le 2\eps^{2/3}] \le 4\eps^{2/3}$. By using a standard Hoeffding bound, we have that the first test will accept with probability at least $1-2\delta$ as long as $m\ge \frac{C}{\eps^{4/3}}\log(1/\delta)$ and $C$ is sufficiently large. Moreover, by \Cref{lemma:mean-covariance-estimation-log-concave}, as long as $m\ge C\cdot d\cdot\log^2(d/\delta)$, we have that the largest eigenvalue of $\var_{\x\sim\Sunlabelled}(\x)$ is at most $2$ (since $\|\var_{\x\sim\Gauss_d}(\x)\|_2 = 1$).
\end{proof}

We also prove the following generalization of \Cref{lemma:homogeneous-disagreement-tester} for general halfspaces.

\begin{lemma}[Tester for Local Halfspace Disagreement: General Halfspaces]\label{lemma:general-disagreement-tester}
    Let $C>0$ be a sufficiently large universal constant. There is a tester that for any $\eps,\delta\in (0,\frac{1}{2})$ and $\threshold>0$, any $\hat\w\in\S^{d-1}, \hat\tau\in [-\threshold,\threshold]$ and any (multi)set $\Sunlabelled$ of points in $\R^d$, runs in time $O(d^3+d^2|\Sunlabelled|)$ and
    \begin{enumerate}[label=\textnormal{(}\alph*\textnormal{)}]
    \item (Soundness.) If the tester accepts, then for any $\w\in\S^{d-1}$, $\tau\in \R$, with $\measuredangle(\w,\hat\w) \le \eps$ and $|\tau-\hat\tau|\le \eps$ we have
    \[
        \pr_{\x\sim\Sunlabelled}[\sign(\w\cdot \x+\tau)\neq \sign(\hat\w\cdot \x+\hat\tau)] \le C\eps \threshold + C\eps^{\frac{2}{3}}
    \]
    \item (Completeness.) Whenever $\Sunlabelled$ consists of $m\ge C((\frac{1}{\threshold^2\eps^2}+\frac{1}{\eps^{4/3}})\log(1/\delta)+ d\log^2(d/\delta))$ independent samples from $\Gauss_d$, the tester accepts w.p. at least $1-\delta$.
\end{enumerate}
\end{lemma}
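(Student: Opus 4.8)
The plan is to generalize the tester and its analysis from \Cref{lemma:homogeneous-disagreement-tester}. The tester will run three checks on the (multi)set $\Sunlabelled$: (i) a thin-slab test at scale $\eps^{2/3}$, rejecting if $\pr_{\x\sim\Sunlabelled}[\,|\hat\w\cdot\x+\hat\tau|\le c_1\eps^{2/3}\,] > c_2\eps^{2/3}$; (ii) a thin-slab test at scale $\eps\threshold$, rejecting if $\pr_{\x\sim\Sunlabelled}[\,|\hat\w\cdot\x+\hat\tau|\le c_3\eps\threshold\,] > c_4\eps\threshold$; and (iii) a moment test, rejecting unless the largest eigenvalue of the empirical second-moment matrix $\E_{\x\sim\Sunlabelled}[\x\x^\top]$ is at most $2$ (so that $\E_{\x\sim\Sunlabelled}[(\vu\cdot\x)^2]\le 2$ for every unit $\vu$). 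The two slab tests play the role of the single slab check in \Cref{lemma:homogeneous-disagreement-tester}, now around the shifted boundary $\{\hat\w\cdot\x+\hat\tau=0\}$ and at the two relevant scales; test (iii) is the analogue of the covariance check there.

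For soundness, fix $\w\in\S^{d-1}$, $\tau\in\R$ with $\measuredangle(\w,\hat\w)=\theta\le\eps$ and $|\tau-\hat\tau|\le\eps$, and take an $\x$ on which the two halfspaces disagree. With $g=\hat\w\cdot\x+\hat\tau$ and $h=\w\cdot\x+\tau$, disagreement gives $gh<0$, hence $|g|\le|h-g|=|(\w-\hat\w)\cdot\x+(\tau-\hat\tau)|$. Writing $\w-\hat\w=(\cos\theta-1)\hat\w+\sin\theta\,\vv$ with $\vv\perp\hat\w$, $\|\vv\|_2=1$, and using $|\hat\w\cdot\x|\le|g|+|\hat\tau|\le|g|+\threshold$, a one-line rearrangement yields $|g|\le \eps^2\threshold+2\eps|\vv\cdot\x|+2\eps$ for $\eps\le 1$. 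Now split on $|\vv\cdot\x|$: if $|\vv\cdot\x|\le\eps^{-1/3}$ then $|g|\le\eps\threshold+4\eps^{2/3}$, which (comparing $\eps\threshold$ with $\eps^{2/3}$) forces $|g|\le c_3\eps\threshold$ or $|g|\le c_1\eps^{2/3}$ for suitable constants, so such $\x$ lie inside one of the slabs of (i)--(ii); if $|\vv\cdot\x|>\eps^{-1/3}$ then Markov's inequality with test (iii) gives $\pr_{\x\sim\Sunlabelled}[|\vv\cdot\x|>\eps^{-1/3}]\le 2\eps^{2/3}$. A union bound over these events gives $\pr_{\x\sim\Sunlabelled}[\sign(\w\cdot\x+\tau)\neq\sign(\hat\w\cdot\x+\hat\tau)]\le C\eps\threshold+C\eps^{2/3}$ whenever the tester accepts.

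For completeness, let $\Sunlabelled$ be $m$ i.i.d.\ $\Gauss_d$ samples. Any slab $\{|\hat\w\cdot\x+\hat\tau|\le r\}$ has $\Gauss_d$-measure at most $2r/\sqrt{2\pi}<r$ regardless of $\hat\tau$, so a Hoeffding bound shows test (i) accepts w.p.\ $\ge 1-\delta$ once $m\gtrsim\eps^{-4/3}\log(1/\delta)$ and test (ii) once $m\gtrsim(\threshold\eps)^{-2}\log(1/\delta)$; test (iii) accepts once $m\gtrsim d\log^2(d/\delta)$ by \Cref{lemma:mean-covariance-estimation-log-concave} applied to the mean and covariance of $\x$ (which are $0$ and $I$), combined via $\E[\x\x^\top]=\var(\x)+\E[\x]\E[\x]^\top$. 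This matches the stated sample complexity, and the running time is dominated by forming the empirical covariance and computing its top eigenvalue, i.e.\ $O(d^3+d^2|\Sunlabelled|)$.

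The main obstacle is the soundness geometry: pinning the disagreement region inside a thin slab around $\{\hat\w\cdot\x+\hat\tau=0\}$ whose width scales correctly with $\threshold$. The new ingredient relative to the homogeneous case is the term $\eps^2\threshold$ coming from $|\cos\theta-1|\cdot|\hat\w\cdot\x|$ after substituting $|\hat\w\cdot\x|\le|g|+\threshold$; this term (crudely bounded by $\eps\threshold$) is precisely what necessitates the extra slab test at scale $\eps\threshold$ and produces both the $C\eps\threshold$ summand in the conclusion and the $(\threshold\eps)^{-2}$ summand in the sample complexity. Everything else is a routine adaptation of the proof of \Cref{lemma:homogeneous-disagreement-tester}.
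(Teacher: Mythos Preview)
Your proof is correct and follows essentially the same approach as the paper's: a slab test around the shifted boundary $\{\hat\w\cdot\x+\hat\tau=0\}$ combined with a second-moment check, with soundness established by showing that disagreement outside the slab forces $|\vv\cdot\x|$ large (you phrase this via the contrapositive, bounding $|g|$ when $|\vv\cdot\x|$ is small). The only packaging differences are that the paper uses a single slab test at the combined scale $\gamma=10(\eps\threshold+\eps^{2/3})$ rather than your two separate slab tests, and checks the centered covariance rather than the raw second-moment matrix; your choice of the uncentered matrix is in fact slightly cleaner, since it directly yields $\E_{\x\sim\Sunlabelled}[(\vv\cdot\x)^2]\le 2$ as needed for Markov.
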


\begin{proof}of \Cref{lemma:general-disagreement-tester}.
    The tester does the following for $\gamma = 10(\eps\threshold + \eps^{2/3})$. 
    \begin{enumerate}
        \item Compute $\pr_{\x\sim\Sunlabelled}[|\hat\w\cdot \x+\hat\tau|\le \gamma]$ and \textbf{reject} if its value is greater than $5\gamma$.
        \item Compute the largest eigenvalue of the covariance matrix $\var_{\x\sim\Sunlabelled}(\x)$ and \textbf{reject} if its value is greater than $2$.
        \item Otherwise, \textbf{accept}.
    \end{enumerate}
    \paragraph{Soundness.} If the tester accepts, then we have the following. Suppose that $\w\neq\hat\w$ (otherwise, the proof is trivial). Let $\vv =\frac{\w - (\w\cdot\hat\w)\hat\w}{\|\w - (\w\cdot\hat\w)\hat\w\|_2}$ (so $\vv$ orthogonal to $\hat\w$). Observe that for any $\x$ with $\sign(\w\cdot \x+\tau)\neq \sign(\hat\w\cdot \x+\hat\tau)$ and $|\hat\w\cdot \x+\hat\tau|>\gamma$, we have the following.
    \begin{align*}
        |\vv\cdot \x| & = \frac{|\w\cdot \x - (\w\cdot\hat\w)\hat\w\cdot \x|}{\|\w - (\w\cdot\hat\w)\hat\w\|_2} \\
        & = \frac{|\w\cdot \x+\tau - \tau +\hat\tau(\w\cdot\hat\w) - (\w\cdot\hat\w)(\hat\w\cdot \x+\hat\tau)|}{\|\w - (\w\cdot\hat\w)\hat\w\|_2} \\
        &\ge \frac{|\w\cdot \x+\tau|+ |(\w\cdot\hat\w)(\hat\w\cdot \x+\hat\tau)| - |\tau -\hat\tau(\w\cdot\hat\w)| }{\|\w - (\w\cdot\hat\w)\hat\w\|_2}\,,
    \end{align*}
    where for the first equality we add and subtract the terms $\tau$ and $\hat\tau(\w\cdot \hat\w)$ and for the inequality we use the fact that the signs of the halfspaces are opposite. Moreover, since we have $|\w\cdot \x+\tau|\ge 0$, $|\w\cdot \hat\w| \ge \cos\eps$, $|\hat\w\cdot \x+\hat\tau|>\gamma$ and $|\hat\tau-\tau|\le \eps$, $|\hat\tau|\le \threshold$, we obtain the following.
    \[
        |\vv\cdot \x| \ge \frac{\gamma \cos \eps - \threshold |1-\cos\eps| - \eps}{\sin\eps} \ge \frac{\gamma \cos\eps - \eps(\threshold+1)}{\sin \eps} \ge \frac{\gamma}{\tan \eps} - (\threshold+1) =: \beta
    \]
    Therefore, we obtain the following by additionally using Chebyshev's inequality.
    \begin{align*}
        \pr_{\x\sim\Sunlabelled}[\sign(\w\cdot \x+\tau)\neq \sign(\hat\w\cdot \x+\hat\tau)] &\le \pr_{\x\sim\Sunlabelled}[|\hat\w\cdot \x+\hat\tau|\le \gamma] + \pr_{\x\sim\Sunlabelled}[|\vv\cdot \x|\ge \beta] \\
        &\le 3\gamma + \frac{\E_{\x\sim\Sunlabelled}[(\vv\cdot \x)^2]}{\beta^2} \\
        &\le 3\gamma + \frac{2}{\beta^2} \le C'\gamma\,,
    \end{align*}
    for a sufficiently large constant $C'>0$, due to the choice of $\gamma$.
    \paragraph{Completeness.} For completeness, assume that $\Sunlabelled$ consists of $m$ i.i.d. Gaussian examples. We have that $\E_{\Sunlabelled}[\pr_{\x\sim\Sunlabelled}[|\hat\w\cdot \x+\hat\tau|\le \gamma]] = \pr_{\x\sim\Gauss_d}[|\hat\w\cdot \x+\hat\tau|\le \gamma] \le 2\gamma$. By using a standard Hoeffding bound, we have that the first test will accept with probability at least $1-2\delta$ as long as $m\ge \frac{C}{\gamma^2}\log(1/\delta)$ and $C$ is sufficiently large. Moreover, by \Cref{lemma:mean-covariance-estimation-log-concave}, as long as $m\ge C\cdot d\cdot\log^2(d/\delta)$, we have that the largest eigenvalue of $\var_{\x\sim\Sunlabelled}(\x)$ is at most $2$ (since $\|\var_{\x\sim\Gauss_d}(\x)\|_2 = 1$).
\end{proof}

Finally, we state the following result from \cite{klivans2023testable}, which demonstrates that any high bias halfspace behaves as a constant function with respect to any distribution that matches sufficiently many moments up to sufficiently small accuracy with the Gaussian distribution.

\begin{lemma}[Concentration via Moment Matching, see Lemma 5.6 in \cite{klivans2023testable}]\label{lemma:moment-matching-concentration}
    Let $\eps>0$. Suppose that $\Sunlabelled$ is a set of points in $\R^d$ such that the empirical moments of bounded degree the uniform distribution over $\Sunlabelled$ approximately match the corresponding moments of the standard Gaussian, i.e., $|\E_{\x\sim\Sunlabelled}[\x^\mindex] - \E_{\x\sim\Gauss_d}[\x^\mindex]| \le d^{-\log(1/\eps)}$ for any $\mindex\in\N^d$ s.t. $\|\mindex\|_1 \le \log(1/\eps)$. Then, for any $\w\in\S^{d-1}$ and $\tau\in\R$, with $|\tau| \ge 3\sqrt{\log (1/\eps)}$ we have that 
    \[
        \pr_{\x\sim\Stest}[\sign(\w\cdot \x+\tau) \neq \sign(\tau)] \le \eps
    \]
\end{lemma}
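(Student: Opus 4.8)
The plan is a polynomial-moment (Markov-type) tail bound: I will use the moment-matching hypothesis to transfer the Gaussian's high-degree moment estimate to the uniform distribution over $\Sunlabelled$. By replacing $\w$ with $-\w$ if necessary I may assume $\tau \ge 3\sqrt{\log(1/\eps)} > 0$, so that $\sign(\tau) = 1$ and $\{\x : \sign(\w\cdot\x+\tau)\neq\sign(\tau)\} \subseteq \{\x : \w\cdot\x \le -\tau\} \subseteq \{\x : |\w\cdot\x| \ge \tau\}$; it therefore suffices to show $\pr_{\x\sim\Sunlabelled}[|\w\cdot\x| \ge \tau] \le \eps$. I also assume $\eps$ is small enough that $\log(1/\eps) \ge 2$, which is harmless and makes the polynomial below well-defined.

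Let $2\ell$ be the largest even integer with $2\ell \le \log(1/\eps)$, and dominate the indicator of the tail event pointwise by the even polynomial $t \mapsto (t/\tau)^{2\ell}$, which gives $\pr_{\x\sim\Sunlabelled}[|\w\cdot\x|\ge\tau] \le \tau^{-2\ell}\,\E_{\x\sim\Sunlabelled}[(\w\cdot\x)^{2\ell}]$. Expanding by the multinomial theorem, $(\w\cdot\x)^{2\ell} = \sum_{\mindex : \|\mindex\|_1 = 2\ell}\binom{2\ell}{\mindex}\w^{\mindex}\x^{\mindex}$ (with $\binom{2\ell}{\mindex}$ the multinomial coefficient), and every exponent satisfies $\|\mindex\|_1 = 2\ell \le \log(1/\eps)$, so the hypothesis applies to each $\x^{\mindex}$. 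Combining this with $\sum_{\mindex : \|\mindex\|_1 = 2\ell}\binom{2\ell}{\mindex}|\w^{\mindex}| = \bigl(\sum_{i=1}^d |\w_i|\bigr)^{2\ell} \le \bigl(\sqrt{d}\,\|\w\|_2\bigr)^{2\ell} = d^{\ell}$ (Cauchy--Schwarz and $\|\w\|_2 = 1$) yields
\[
\Bigl|\E_{\x\sim\Sunlabelled}[(\w\cdot\x)^{2\ell}] - \E_{\x\sim\Gauss_d}[(\w\cdot\x)^{2\ell}]\Bigr| \le d^{\ell}\cdot d^{-\log(1/\eps)} = d^{\ell - \log(1/\eps)}\,.
\]
Since $\|\w\|_2 = 1$, under $\Gauss_d$ we have $\w\cdot\x\sim\Gauss(0,1)$, hence $\E_{\x\sim\Gauss_d}[(\w\cdot\x)^{2\ell}] = (2\ell-1)!! \le (2\ell)^{\ell} \le (\log(1/\eps))^{\ell}$.

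Putting these together with $\tau^{2\ell} = (\tau^2)^{\ell} \ge (9\log(1/\eps))^{\ell} \ge 1$ gives
\[
\pr_{\x\sim\Sunlabelled}[|\w\cdot\x|\ge\tau] \le \frac{(\log(1/\eps))^{\ell} + d^{\ell - \log(1/\eps)}}{(9\log(1/\eps))^{\ell}} \le 9^{-\ell} + d^{\ell - \log(1/\eps)}\,.
\]
Because $2\ell \ge \log(1/\eps)-2$, the first summand satisfies $9^{-\ell} \le O(\eps^{1+c})$ for an absolute constant $c>0$; and since $\ell \le \tfrac12\log(1/\eps)$ the second satisfies $d^{\ell-\log(1/\eps)} \le d^{-\frac12\log(1/\eps)} \le O(\eps^{1+c'})$ for an absolute constant $c'>0$, provided $d$ exceeds a fixed absolute constant (which holds in the regime where the lemma is applied). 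Hence both terms are at most $\eps/2$ for all sufficiently small $\eps$, and their sum is at most $\eps$, which is the claim.

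The one step requiring genuine care is the choice of the degree $2\ell$: it must be large, proportional to $\log(1/\eps)$, so that the ``signal'' ratio $9^{-\ell}$ falls below $\eps$ --- this is precisely what the assumption $|\tau| \ge 3\sqrt{\log(1/\eps)}$ is calibrated against --- yet small enough that (i) the degree $2\ell$ still lies in the range $\|\mindex\|_1 \le \log(1/\eps)$ covered by the moment-matching guarantee and (ii) the accumulated approximation error $d^{\ell-\log(1/\eps)}$ does not dominate. Taking $2\ell$ essentially equal to $\log(1/\eps)$ is the only window that threads this needle; the remaining ingredients (the multinomial expansion, the $\ell_1$-versus-$\ell_2$ norm comparison for $\w$, and the Gaussian double-factorial moment) are routine.
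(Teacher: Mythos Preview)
Your argument is the standard Markov--moment approach that underlies this lemma (which the paper does not prove but cites from \cite{klivans2023testable}): dominate the tail indicator by $(\w\cdot\x/\tau)^{2\ell}$, transfer the $2\ell$-th moment from $\Gauss_d$ to the empirical distribution via the multinomial expansion and the moment-matching hypothesis, and balance the degree $2\ell\approx\log(1/\eps)$ against the Gaussian moment $(2\ell-1)!!$ and the accumulated approximation error. The reasoning is correct.

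One small point worth tightening: your treatment of the second summand unnecessarily appeals to ``$d$ exceeds a fixed absolute constant,'' which is not part of the hypothesis. If instead of discarding the denominator $(9\log(1/\eps))^{\ell}$ you retain it, then since $d\ge 1$ and $\ell-\log(1/\eps)<0$ you have $d^{\ell-\log(1/\eps)}\le 1$, whence the second term is at most $(9\log(1/\eps))^{-\ell}\le 9^{-\ell}$. The total bound becomes $\le 2\cdot 9^{-\ell}$ uniformly in $d$, and your analysis of $9^{-\ell}=O(\eps^{1+c})$ then finishes the proof without any side condition on the dimension.
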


\section{Approximate Subspace Retrieval}\label{section:appendix-pca}

In this section we provide a number of subspace retrieval lemmas, originally from \cite{vempala2010learning} (see \Cref{section:appendix-recovery-general,section:appendix-recovery-non-degenerate}) and \cite{vempala2010random} (see \Cref{section:appendix-recovery-polar}). For the subspace retrieval lemma from \cite{vempala2010learning}, we provide a detailed proof here, but we incur an exponential dependence on $1/\eps^2$. In fact, it is not clear whether our analysis can be improved, since the original proof by \cite{vempala2010learning} has a gap
and, unless a stronger version of \Cref{lemma:variance-reduction} is proven, the complexity of the algorithm in \cite{vempala2010learning} should involve a term of $2^{\poly(k/\eps)}$ as well. To circumvent this obstacle, we also provide a fully polynomial upper bound, under some non-degeneracy assumption (see \Cref{section:appendix-recovery-non-degenerate}).

\subsection{Subspace Retrieval through PCA for Balanced Intersections}\label{section:appendix-recovery-general}

In this section, we will present a proof of \Cref{lemma:subspace-retrieval-general}, which was originally proven by \cite{vempala2010learning}. The idea of the proof is not novel, but we provide a detailed and complete version of it for concreteness. We restate the lemma here for convenience.

\begin{lemma}[Subspace Retrieval, modification from \cite{vempala2010learning}]\label{lemma:subspace-retrieval-general}
    Let $C\ge 1$ be a sufficiently large universal constant. Let $\C$ be the class of intersections of $k$ general halfspaces on $\R^d$, $\eps\in(0,1)$, $\threshold>0$ and $\bias\in(0,1/2]$. Let $\Slabelled$ be a set of at least $dk^4(1/\bias)^{C/\eps^2}2^{CT^2/\eps^2}\log^2(d/\delta)$ labelled examples of the form $(\x,\copt(\x))$, where $\x\sim\Gauss_d$ and $\copt\in\C_\bias$ is an $\bias$-unbiased intersection which is defined by the normal vectors $(\w^1,\dots,\w^k)$ and the corresponding thresholds $(\tau^1,\dots,\tau^k)$. Then, with probability at least $1-\delta$, the subspace $\subspace$ spanned by the $k$-smallest variance orthogonal components of the positive examples $\Slabelled^+=\{\x: (\x,1)\in \Slabelled\}$ approximately includes all of the normal vectors corresponding to bounded thresholds, i.e., for any $i\in[k]$ if $\tau^i\le\threshold$, then $\|\proj_\subspace \w^i\|_2 \ge 1-\eps$.
\end{lemma}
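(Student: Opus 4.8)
The plan is to follow the PCA approach of \cite{vempala2010learning}: first establish that the \emph{population} covariance of the positive Gaussian examples has an extremely rigid spectrum (it is the identity on a subspace of codimension at most $k$, while every relevant normal forces a genuine gap), and then transfer this picture to the empirical covariance via a matrix perturbation bound. Throughout, let $\convset\subseteq\R^d$ be the convex positive region of $\copt$, so the positive examples $\Slabelled^+$ are i.i.d.\ draws from $\Gauss_d|_\convset$; since $\copt$ is $\bias$-balanced, $\pr_{\Gauss_d}[\x\in\convset]\ge\bias$, and a Chernoff bound gives $|\Slabelled^+|\ge\tfrac{\bias}{2}|\Slabelled|$ with probability $\ge 1-\delta/2$. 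Write $\Sigma=\var_{\x\sim\Gauss_d|_\convset}(\x)$ and let $\hat\Sigma$ be the empirical covariance of $\Slabelled^+$. Because $\Gauss_d|_\convset$ is log-concave with variance at most $1$ in every direction (first part of \Cref{lemma:variance-reduction}), \Cref{lemma:mean-covariance-estimation-log-concave} shows $\|\hat\Sigma-\Sigma\|_2\le\esterror$ with probability $\ge 1-\delta/2$ once $|\Slabelled^+|$ is large enough; we take $\esterror=\tfrac13\rho$ where $\rho:=\bigl(\bias e^{-\threshold^2}/2\bigr)^{C'/\eps^2}$ and $C'$ is the constant from the last part of \Cref{lemma:variance-reduction}. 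The stated lower bound on $|\Slabelled|$ (together with $|\Slabelled^+|\ge\tfrac{\bias}{2}|\Slabelled|$) is precisely what makes this choice of $\esterror$ feasible for a large enough universal constant; this is the source of the $2^{O(\threshold^2/\eps^2)}(1/\bias)^{O(1/\eps^2)}$ blow-up.

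At the population level, let $\truesubspace=\spn\{\w^1,\dots,\w^k\}$, so $\dim\truesubspace\le k$. Since the event $\{\x\in\convset\}$ depends only on $\proj_\truesubspace(\x)$, under $\Gauss_d$ the components $\proj_\truesubspace(\x)$ and $\proj_{\truesubspace^\perp}(\x)$ are independent, and conditioning on $\convset$ leaves $\proj_{\truesubspace^\perp}(\x)$ a standard Gaussian independent of $\proj_\truesubspace(\x)$. Hence $\Sigma$ restricts to the identity on $\truesubspace^\perp$, so the eigenvalue $1$ of $\Sigma$ has multiplicity at least $d-k$; combined with $\Sigma\preceq I$ (first part of \Cref{lemma:variance-reduction}), the $d-k$ largest eigenvalues of $\Sigma$ all equal $1$. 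By Weyl's inequality, the $(d-k)$-th largest eigenvalue $\hat\sigma_{d-k}^2$ of $\hat\Sigma$ satisfies $\hat\sigma_{d-k}^2\ge 1-\esterror$.

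For the main estimate, let $\subspace$ be the span of the $k$ smallest-variance empirical components, so $\subspace^\perp$ is the span of the $d-k$ largest-variance ones. Fix $i\in[k]$ with $\tau^i\le\threshold$ and suppose, toward a contradiction, that $\|\proj_\subspace\w^i\|_2<1-\eps$, equivalently $\|\proj_{\subspace^\perp}\w^i\|_2>\eps$ (we may assume $\eps<\tfrac14$, else apply the case $\eps=\tfrac14$, which is stronger). Set $g=\proj_{\subspace^\perp}\w^i/\|\proj_{\subspace^\perp}\w^i\|_2$, a unit vector in $\subspace^\perp$ with $g\cdot\w^i>\eps$. Applying the last part of \Cref{lemma:variance-reduction} to the defining halfspace $\{\x:\w^i\cdot\x+\tau^i\ge0\}$ (using $\max\{0,\tau^i\}\le\threshold$ and $g\cdot\w^i>\eps$) gives $g^\top\Sigma g=\var_{\Gauss_d|_\convset}(g\cdot\x)\le 1-\rho$. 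On the other hand, $g\in\subspace^\perp$ forces $g^\top\hat\Sigma g\ge\hat\sigma_{d-k}^2\ge 1-\esterror$, while $\|\hat\Sigma-\Sigma\|_2\le\esterror$ gives $g^\top\hat\Sigma g\le g^\top\Sigma g+\esterror\le 1-\rho+\esterror$. Hence $1-\esterror\le1-\rho+\esterror$, i.e.\ $\rho\le2\esterror=\tfrac23\rho$, a contradiction. Therefore $\|\proj_{\subspace^\perp}\w^i\|_2\le\eps$, so $\|\proj_\subspace\w^i\|_2=\sqrt{1-\|\proj_{\subspace^\perp}\w^i\|_2^2}\ge\sqrt{1-\eps^2}\ge1-\eps$ for every relevant $i$; a union bound over the two events of the first step finishes the proof.

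I expect the only genuinely delicate ingredient to be the last (``approximate'') part of \Cref{lemma:variance-reduction} itself --- precisely the statement absent from \cite{vempala2010learning} --- since, given it, the argument above is little more than Weyl's inequality plus the rank-$\le k$ structure of $I-\Sigma$. The rest is bookkeeping: matching $\esterror\asymp\rho$ to the stated sample size through \Cref{lemma:mean-covariance-estimation-log-concave}, checking that $|\Slabelled^+|$ concentrates, and (a minor point) using $\max\{0,\tau^i\}$ in place of $\tau^i$ when $\tau^i<0$, where variance reduction is only stronger.
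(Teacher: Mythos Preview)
Your argument is correct and is, in fact, cleaner than the paper's. Both proofs rest on the same two facts --- that $\Sigma=\var_{\Gauss_d|_\convset}(\x)$ equals the identity on $\truesubspace^\perp$ (so its top $d-k$ eigenvalues are all $1$), and that any unit vector with inner product at least $\eps$ with a relevant normal has population variance at most $1-\rho$ by the last part of \Cref{lemma:variance-reduction} --- but they transfer these facts to the empirical side differently. The paper invokes the Davis--Kahan $\sin\Theta$ theorem (\Cref{proposition:davis-kahan}): it must first locate an index $j\in[\ell,k]$ with a spectral gap $\lambda_{j+1}-\lambda_j>\gamma/k$, then align the bottom-$j$ empirical eigenspace with the corresponding population eigenspace, and finally argue that each $\w^i$ is nearly contained in the population low-variance subspace $\truesubspace_\gamma$. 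This gap-hunting is the source of the $k^4$ factor in the stated sample bound. Your route sidesteps subspace perturbation entirely: Weyl's inequality gives $\hat\sigma_{d-k}^2\ge 1-\esterror$ directly, so every unit vector in the empirical top-$(d-k)$ eigenspace $\subspace^\perp$ has \emph{empirical} variance at least $1-\esterror$; if $\w^i$ had a component of norm $>\eps$ in $\subspace^\perp$, the normalized residual $g$ would simultaneously have empirical variance $\ge 1-\esterror$ and population variance $\le 1-\rho$, contradicting $\|\hat\Sigma-\Sigma\|_2\le\esterror=\rho/3$. This is more elementary, avoids the eigengap issue altogether, and actually proves the lemma with a slightly smaller sample requirement (no $k^4$ needed). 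The only imprecision is the word ``equivalently'' --- $\|\proj_\subspace\w^i\|_2<1-\eps$ implies but is not equivalent to $\|\proj_{\subspace^\perp}\w^i\|_2>\eps$ --- though your final step correctly derives the conclusion from $\|\proj_{\subspace^\perp}\w^i\|_2\le\eps$ anyway, so the logic is sound.
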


\begin{algorithm2e}
\caption{Subspace Retrieval through PCA}\label{algorithm:pca}
\KwIn{Labelled set $\Strain$, parameter $k$}
\KwOut{Orthonormal basis $(\vv^1,\dots,\vv^k)$}
Let $\Strain^+$ be the subset of $\Strain$ corresponding to positive examples. \\
Run Principal Component Analysis on $\Strain^+ = \{\x: (\x,1)\in\Strain\}$ and let $\vv^1,\dots,\vv^k$ be the $k$ smallest-variance orthogonal components (i.e., the right singular vectors corresponding to the $k$ smallest singular values of the $(|\Strain^+|\times d)$-dimensional sample matrix). \\
\textbf{Output} $(\vv^1,\dots,\vv^k)$ and terminate.
\end{algorithm2e}

For the proof, we will use the following strong theorem which ensures that the subspace retrieved by PCA on the empirical distribution will be geometrically close to the true corresponding subspace, as long as there is a spectral gap in the covariance matrix of the true distribution.

\begin{proposition}[Davis-Kahan, modification of Theorem 2 in \cite{yu2015useful}]\label{proposition:davis-kahan}
    Let $\covar\in\R^{d\times d}$ and $\hat\covar\in\R^{d\times d}$ be symmetric matrices such that for some $k\in[d]$, the gap between the $k$-th smallest eigenvalue of $\covar$ and the $(k+1)$-th smallest eigenvalue of $\covar$ is positive, i.e., $\lambda_{k+1}-\lambda_k > 0$. Let $\vv^1,\dots,\vv^k$ be the eigenvectors of $\covar$ corresponding to the $k$ smallest eigenvalues and, similarly, $\vu^{1},\dots,\vu^{k}$ the $k$ smallest eigenvectors of $\hat\covar$. Then we have that
    \[
        \sum_{i\in[k]}\sin^2(\measuredangle(\vv^i,\vu^i)) \le \frac{4k \|\covar-\hat\covar\|_2^2}{(\lambda_{k+1}-\lambda_k)^2}
    \]
\end{proposition}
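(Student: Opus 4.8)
The plan is to derive this from the Davis--Kahan $\sin\Theta$ theorem in the statistician-friendly form of Yu, Wang and Samworth \cite{yu2015useful}, together with two elementary reductions. Write $\subspace=\spn(\vv^1,\dots,\vv^k)$ and $\hat\subspace=\spn(\vu^1,\dots,\vu^k)$ for the bottom-$k$ eigenspaces of $\covar$ and $\hat\covar$, and let $0\le\theta_1\le\cdots\le\theta_k\le\pi/2$ be the principal angles between $\subspace$ and $\hat\subspace$, so that the diagonal matrix $\sin\Theta := \mathrm{diag}(\sin\theta_1,\dots,\sin\theta_k)$ has $\|\sin\Theta\|_F^2=\sum_{i\in[k]}\sin^2\theta_i$.

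\emph{Step 1 (align the eigenbases).} Within each invariant subspace of a symmetric matrix the orthonormal eigenbasis is free to rotate, so I may take $\vv^1,\dots,\vv^k$ and $\vu^1,\dots,\vu^k$ to be the canonical (principal) vectors of the pair $(\subspace,\hat\subspace)$ coming from the CS decomposition, for which $\vv^i\cdot\vu^j=(\cos\theta_i)\,\delta_{ij}$. Then $\measuredangle(\vv^i,\vu^i)=\theta_i$ and
\[
\sum_{i\in[k]}\sin^2\bigl(\measuredangle(\vv^i,\vu^i)\bigr)=\sum_{i\in[k]}\sin^2\theta_i=\|\sin\Theta\|_F^2 .
\]
\emph{Step 2 (apply Davis--Kahan).} The block of the $k$ smallest eigenvalues $\lambda_1\le\cdots\le\lambda_k$ of $\covar$ is separated from the rest of the spectrum by a gap of exactly $\lambda_{k+1}-\lambda_k>0$ (the other side imposes no requirement, corresponding in \cite{yu2015useful} to the convention $\lambda_0=+\infty$). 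Theorem~2 of \cite{yu2015useful}, applied to this $k$-dimensional block, gives $\|\sin\Theta\|_F \le 2\sqrt{k}\,\|\covar-\hat\covar\|_2/(\lambda_{k+1}-\lambda_k)$. Squaring and combining with Step~1 yields the stated inequality.

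The substance is entirely outsourced to \cite{yu2015useful}, so I expect no genuine obstacle; the only care needed is (i) matching their eigengap convention so that the denominator comes out as the population gap $\lambda_{k+1}-\lambda_k$ rather than a mixed quantity such as $\lambda_{k+1}-\hat\lambda_k$, and (ii) the basis-alignment of Step~1, which is substantive in the sense that for an arbitrary pairing of the two eigenbases the left-hand sum need not equal $\|\sin\Theta\|_F^2$. If one prefers to avoid the citation, the classical Davis--Kahan argument is self-contained: with $P,\hat P$ the orthogonal projections onto $\subspace,\hat\subspace$ one has $\|\sin\Theta\|_F^2=\trc\bigl((I-P)\hat P\bigr)=\|(I-P)\hat P\|_F^2=\tfrac12\|P-\hat P\|_F^2$, and $\|(I-P)\hat P\|_F$ is controlled by writing $(I-P)(\covar-\hat\covar)\hat P=\covar\,(I-P)\hat P-(I-P)\hat P\,\hat\covar$ and noting that on the relevant spaces $\covar$ acts only on eigenvalues $\ge\lambda_{k+1}$ while $\hat\covar$ acts only on eigenvalues $\le\hat\lambda_k\le\lambda_k+\|\covar-\hat\covar\|_2$ (Weyl), so the associated Sylvester operator is invertible with norm at least $\lambda_{k+1}-\hat\lambda_k$; this reproduces the bound up to the slightly weaker gap $\lambda_{k+1}-\lambda_k-\|\covar-\hat\covar\|_2$, and removing that slack in favour of the pure population gap is exactly what the extra work in \cite{yu2015useful} buys (and is harmless in our application, where $\|\covar-\hat\covar\|_2$ is driven far below the gap).
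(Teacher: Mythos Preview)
The paper does not give its own proof of this proposition; it is stated as a direct modification of Theorem~2 of \cite{yu2015useful} and then used as a black box in the proof of \Cref{lemma:subspace-retrieval-general}. Your derivation from that reference is the intended one, and Step~2 is exactly right: applying the Frobenius bound $\|\sin\Theta\|_F\le 2\sqrt{k}\,\|\covar-\hat\covar\|_2/(\lambda_{k+1}-\lambda_k)$ to the bottom-$k$ block and squaring reproduces the claimed inequality.

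Your caveat about Step~1 is well taken and worth making explicit. As literally stated, the proposition speaks of ``the eigenvectors'' $\vv^i,\vu^i$; if the bottom-$k$ eigenvalues of $\covar$ and of $\hat\covar$ are all simple, these vectors are determined up to sign and you are \emph{not} free to rotate them into principal-angle alignment. In that regime the quantity $\sum_i\sin^2(\measuredangle(\vv^i,\vu^i))$ can genuinely exceed $\|\sin\Theta\|_F^2$ (e.g.\ a tiny perturbation that swaps two nearly-equal bottom eigenvalues leaves the subspace fixed, so $\|\sin\Theta\|_F=0$, while the index-matched angles are near $\pi/2$). The proposition is therefore to be read in the sense you use it --- $\vv^1,\dots,\vv^k$ and $\vu^1,\dots,\vu^k$ are orthonormal bases of the bottom-$k$ invariant subspaces, chosen compatibly --- which is also the only sense in which it is applied later in the paper (the subsequent argument only uses closeness of the subspaces $\subspace_\ell$ and $\truesubspace_\gamma$, never a coordinate-wise matching of eigenvectors). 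So your Step~1 is the correct reading, not a sleight of hand, and your proof is complete.
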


Let $\truesubspace$ be the span of $(\w^1,\dots,\w^k)$ and note that every direction orthogonal to $\truesubspace$ has variance $1$ under $\Gauss_d|_{\convset}$. Let $\gamma = (1/\bias)^{C/\eps^2}2^{CT^2/\eps^2}$ and let $\truesubspace_\gamma$ be the subspace of $\truesubspace$ such that for every direction $\vu$ orthogonal to $\truesubspace_\gamma$, we have $\var_{\x\sim\Gauss_d|_{\convset}}(\vu\cdot \x) >1-\gamma$ and $\truesubspace_\gamma$ is spanned by an orthonormal basis $(\z^1,\dots,\z^\ell)$ with $\var_{\x\sim\Gauss_d|_{\convset}}(\z^i\cdot \x)\le 1-\gamma$. In other words, $\truesubspace_\gamma$ is the span of the eigenvectors of the covariance matrix $\covar$ of $\Gauss_d|_{\convset}$ whose corresponding eigenvalues are at most $1-\gamma$. Note that since $\dim(\truesubspace)\le k$ and $\truesubspace_\gamma \subseteq \truesubspace$, we have $\ell\le k$. Let $0\le \lambda_1\le \dots\le \lambda_\ell\le 1-\gamma < \lambda_{\ell+1} \le \dots\le \lambda_k \le 1 = \lambda_{k+1}$ be the $k+1$ smallest eigenvalues of the covariance matrix of $\Gauss_d|_{\convset}$. Since there is a $\gamma$ gap between $\lambda_\ell$ and $\lambda_{k+1}$, there is some $j\in[\ell,k]$ such that $\lambda_{j+1}-\lambda_j > \frac{\gamma}{k}$.

Let $\subspace$ be the subspace corresponding to the $k$ smallest eigenvectors of the empirical covariance matrix $\hat\covar$ of the set of positive examples $\Slabelled^+$. Since $|\Slabelled| \ge \frac{1}{\bias^2}\log(1/\delta)$, due to a Hoeffding bound, we have that with probability at least $1-\delta/10$, $|\Slabelled^+| \ge \frac{\eta}{2}|\Slabelled| \ge dk^4(1/\bias)^{C/\eps^2}2^{CT^2/\eps^2}\log^2(d/\delta)$. We can therefore apply \Cref{lemma:mean-covariance-estimation-log-concave} to $\Gauss_d|_{\convset}$ (which is log-concave) to obtain that $\|\covar-\hat\covar\|_2 \le \frac{\gamma\eps}{2C'k^{2}}$. Let $\subspace_\ell$ be the subspace of $\subspace$ corresponding to the $\ell$ smallest eigenvalues of $\hat\covar$, and let $(\vv^1,\dots,\vv^\ell)$ be the corresponding eigenvectors. By \Cref{proposition:davis-kahan}, we have that 
\begin{equation}\label{equation:sin-theta-theorem-result}
    \sum_{i\in[\ell]}\sin^2(\measuredangle(\vv^i,\z^i)) \le \eps/(C'\sqrt{k})
\end{equation}
    
Let $i\in[k]$ such that $\tau^i\le \threshold$. We analyze $\w^i$ in two orthogonal components, $\w$ and $\w'$, where $\w$ is the normalized projection of $\w^i$ on $\truesubspace_\gamma$ and $\w'$ is therefore orthogonal to $\truesubspace_\gamma$. Since $\w'$ is orthogonal to $\truesubspace_\gamma$, by the definition of $\truesubspace_\gamma$, we have $\var_{\x\sim\Gauss_d}(\w'\cdot \x) > 1-\gamma$. By \Cref{lemma:subspace-retrieval-general}, this implies that $\w^i\cdot \w' < C''\frac{1+\threshold+\log^{1/2}(1/\bias)}{\log^{1/2}(1/\gamma)}$. Therefore, $\measuredangle(\w^i,\w) \le 2C''\frac{1+\threshold+\log^{1/2}(1/\bias)}{\log^{1/2}(1/\gamma)}$. Moreover, by \Cref{equation:sin-theta-theorem-result}, we have that $\measuredangle(\w,\proj_{\subspace_\ell}\w) \le \eps/10$. Since $2C''\frac{1+\threshold+\log^{1/2}(1/\bias)}{\log^{1/2}(1/\gamma)} \le \eps/10$ by the choice of $\gamma$, we obtain the desired result.

\subsection{Subspace Retrieval through PCA under a Non-Degeneracy Assumption}\label{section:appendix-recovery-non-degenerate}

In the previous subsection we provided a detailed proof of the subspace retrieval lemma which was originally proven in \cite{vempala2010learning}, incurring, however, an exponential dependence on $1/\eps^2$. Here, we define a technical assumption on the concept class considered which is sufficient to provide a fully polynomial result for subspace retrieval. Despite its technicality, the non-degeneracy condition is satisfied by the constructions we use for our lower bounds, which implies that under the non-degeneracy condition, our upper and lower bounds are directly comparable (and tight in some regimes).

\begin{definition}[Non-Degeneracy Condition]\label{definition:non-degeneracy}
    Let $\convset$ be an intersection of halfspaces in $\R^d$ and $\Gauss_d|_{\convset}$ be the truncation of the standard Gaussian to $\convset$. For $\nondegen\ge 1$, we say that $\convset$ is $\nondegen$-non-degenerate if the following is true. For every subspace $\truesubspace$ spanned by some of the normals of $\convset$ and for every vector $\w\in\S^{d-1}$ that is a normal to $\convset$ with non-zero projection $\w'\in\R^{d}\setminus\{0\}$ onto the subspace orthogonal to $\truesubspace$ we have
    \[
         \var_{\x\sim\Gauss_d}(\hat{\vec \w}'\cdot \x)-\var_{\x\sim\Gauss_d|_{\convset}}(\hat{\vec \w}'\cdot \x) \ge \bigr( \var_{\x\sim\Gauss_d}(\vec \w\cdot \x)-\var_{\x\sim\Gauss_d|_{\convset}}(\vec \w\cdot \x) \bigr)^\nondegen\,, \text{ where }\hat{\w}' = \w'/\|\w'\|_2
    \]
    For any class $\C$ of halfspace intersections on $\R^d$, we denote with $\C^\nondegen$ the $\nondegen$-non-degenerate version of $\C$, i.e., the subset of $\C$ that contains the elements that are $\nondegen$-non-degenerate.
\end{definition}

The condition defined above states that each normal $\w$ of the intersection has either zero or non-trivial relative influence on subspaces orthogonal to the span $\truesubspace'$ of any subset of the normals. The influence is measured in terms of the variance reduction along the residual direction $\w-\proj_{\truesubspace'}(\w)$. In particular, in light of the third part of \Cref{lemma:variance-reduction}, for intersections of two halfspaces, the non-degeneracy condition is satisfied whenever the two halfspaces of the intersection have normals either pointing to the exact same direction or have sufficiently large angular distance (but nothing in between). This enables one to circumvent the need for a strong quantitative statement relating (1) the angle between some vector $\vu$ and a normal with (2) the variance reduction along $\vu$, which is the source of the exponential dependence of $2^{1/\eps^2}$. With an analysis similar to the one of \Cref{section:appendix-recovery-general}, we obtain the following subspace retrieval result.

\begin{lemma}[Subspace Retrieval under Non-Degeneracy, see \cite{vempala2010learning}]\label{lemma:subspace-retrieval-non-degenerate-appendix}
    Let $C\ge 1$ be a sufficiently large universal constant. Let $\C$ be the class of intersections of $k$ general halfspaces on $\R^d$, $\eps\in(0,1)$, $\threshold\ge 0$ and $\nondegen\ge 1, \bias\in(0,1/2]$. Let $\Slabelled$ be a set of at least $\frac{Cdk^4}{\eps^2\bias^2}e^{\nondegen T^2}\log^2(d/\delta)$ labelled examples of the form $(\x,\copt(\x))$, where $\x\sim\Gauss_d$ and $\copt\in\C_\bias^\nondegen$ is an $\bias$-unbiased and $\nondegen$-non-degenerate intersection which is defined by the normal vectors $(\w^1,\dots,\w^k)$ and the corresponding thresholds $(\tau^1,\dots,\tau^k)$. Then, with probability at least $1-\delta$, the subspace $\subspace$ spanned by the $k$-smallest variance orthogonal components of the positive examples $\Slabelled^+=\{\x: (\x,1)\in \Slabelled\}$ approximately includes all of the normal vectors corresponding to bounded thresholds, i.e., for any $i\in[k]$ if $\tau^i\le\threshold$, then $\|\proj_\subspace \w^i\|_2 \ge 1-\eps$.
\end{lemma}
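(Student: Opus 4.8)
The plan is to follow the proof of \Cref{lemma:subspace-retrieval-general} in broad strokes --- PCA on the positive examples, transferred to the population covariance via Davis--Kahan --- with the one substantive change that the quantitatively weak third part of \Cref{lemma:variance-reduction} (the source of the $2^{\poly(k/\eps)}$ blow-up there) is replaced by the non-degeneracy hypothesis of \Cref{definition:non-degeneracy}. Write $\convset=(\copt)^{-1}(1)$ for the ground-truth intersection, $\covar=\var_{\x\sim\Gauss_d|_\convset}(\x)$, and $\hat\covar=\var_{\x\sim\Slabelled^+}(\x)$. Let $\truesubspace$ be the span of all $k$ normals; since $\convset$ constrains only the $\truesubspace$-coordinates, $\Gauss_d|_\convset$ factors as a product across $\truesubspace\oplus\truesubspace^\perp$, so $\covar=\covar|_\truesubspace\oplus I|_{\truesubspace^\perp}$, $\truesubspace$ is $\covar$-invariant with $\dim\truesubspace\le k$, and the $(k+1)$-st smallest eigenvalue of $\covar$ equals $1$. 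Let $\truesubspace^*=\spn\{\w^i:\tau^i\le\threshold\}$ be the span of the ``relevant'' normals, $r^*=\dim\truesubspace^*\le k$, and fix a maximal linearly independent subset $\w^{i_1},\dots,\w^{i_{r^*}}$ of them.

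The core is a population spectral gap: there is $\gamma=\gamma_0^{\nondegen}/\poly(k)$, with $\gamma_0=\Omega(e^{-\threshold^2/2})$ the variance reduction guaranteed for each relevant normal by the second part of \Cref{lemma:variance-reduction}, such that every relevant normal lies within angle $\eps/2$ of $\truesubspace_\gamma$, the span of the $\covar$-eigenvectors with eigenvalue at most $1-\gamma$. I would first Gram--Schmidt the $\w^{i_j}$ into an orthonormal basis $\e_1,\dots,\e_{r^*}$ of $\truesubspace^*$, with $\e_j$ the normalized component of $\w^{i_j}$ orthogonal to $\spn\{\w^{i_1},\dots,\w^{i_{j-1}}\}$; applying non-degeneracy with $\truesubspace'=\spn\{\w^{i_1},\dots,\w^{i_{j-1}}\}$ and $\w=\w^{i_j}$ --- crucially subtracting \emph{all} previously processed normals at once, so that the exponent $\nondegen$ is incurred only once rather than compounding into a tower in $k$ --- gives $1-\var_{\Gauss_d|_\convset}(\e_j\cdot\x)\ge\gamma_0^{\nondegen}$ for every $j$. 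Combining these diagonal bounds with the off-diagonal inequality $\cov_{\Gauss_d|_\convset}(\vu\cdot\x,\vv\cdot\x)^2\le(1-\var(\vu\cdot\x))(1-\var(\vv\cdot\x))$, which is merely $0\preceq I-\covar\preceq I$ in quadratic form, I would conclude that the largest eigenvalue of $\covar|_{\truesubspace^*}$ is at most $1-\gamma_1$ for some $\gamma_1=\gamma_0^{\nondegen}/\poly(k)$. Since moreover $\truesubspace^*$ is $(\eps/2)$-close to a genuine $\covar$-invariant subspace of $\truesubspace$ --- the only coupling between $\truesubspace^*$ and the complementary directions of $\truesubspace$ is through the high-threshold normals, whose correlations with the relevant directions are controlled by the same inequality and are negligible next to $\gamma$ --- that invariant subspace sits inside $\truesubspace_{\gamma_1}\subseteq\truesubspace_\gamma$, and hence so does $\truesubspace^*$ up to angle $\eps/2$. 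Finally, because $\dim\truesubspace_\gamma\le\dim\truesubspace\le k$ while the $(k+1)$-st smallest eigenvalue of $\covar$ is $1$, telescoping produces an index $j$ with $\dim\truesubspace_\gamma\le j\le k$ at which consecutive eigenvalues of $\covar$ are separated by at least $\gamma/k$.

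The rest is bookkeeping as in \Cref{section:appendix-recovery-general}. By $\bias$-balance of $\copt$ and a Chernoff bound, $|\Slabelled^+|\ge\tfrac{\bias}{2}|\Slabelled|$ with probability at least $1-\delta/10$, which for the claimed size of $\Slabelled$ suffices to invoke \Cref{lemma:mean-covariance-estimation-log-concave} on the log-concave distribution $\Gauss_d|_\convset$ (every directional variance at most $1$) with accuracy $\gamma\eps/(Ck^2)$, giving $\|\covar-\hat\covar\|_2\le\gamma\eps/(Ck^2)$ with probability at least $1-\delta/10$. Applying \Cref{proposition:davis-kahan} at the eigenvalue gap of index $j$ (of size at least $\gamma/k$, dominating $\sqrt{k}\,\|\covar-\hat\covar\|_2$ by a $\poly(k/\eps)$ factor) shows the span $\subspace$ of the $k$ smallest eigenvectors of $\hat\covar$ --- which contains the span of the $j$ smallest, hence an $O(\eps)$-approximation to $\truesubspace_\gamma$ --- satisfies $\|\proj_\subspace\w^i\|_2\ge 1-\eps$ for every $i$ with $\tau^i\le\threshold$; a union bound over the two failure events completes the proof.

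The main obstacle is the population step. The delicate points are (i) arranging the applications of non-degeneracy so its exponent $\nondegen$ enters only once --- this is what keeps the sample bound at $e^{\nondegen\threshold^2}$ rather than a tower $e^{\nondegen^{k}\threshold^2}$, and is achieved by peeling off the span of many normals simultaneously --- and (ii) bridging between the eigenvalue-defined subspace $\truesubspace_\gamma$ that PCA recovers and the normal-defined subspace $\truesubspace^*$ on which non-degeneracy is phrased, i.e.\ showing $\truesubspace^*$ is nearly $\covar$-invariant and nearly contained in $\truesubspace_\gamma$. Both are handled via $0\preceq I-\covar\preceq I$, used to bound the off-diagonal entries of $\covar|_{\truesubspace^*}$ and the leakage of relevant normals into the high-threshold directions; verifying that the constants this produces are only $\poly(k)$, rather than doubly exponential in $k$, is where essentially all the work lies. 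Variance estimation, the Chernoff bound for balance, and Davis--Kahan are routine given the already-stated tools.
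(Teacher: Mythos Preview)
The paper itself does not spell out a proof of this lemma --- it only remarks that the analysis is ``similar to'' that of \Cref{section:appendix-recovery-general} --- so there is no detailed paper-proof to compare against line by line. Your overall architecture (PCA on positive examples, covariance concentration via \Cref{lemma:mean-covariance-estimation-log-concave}, Davis--Kahan at a pigeonholed spectral gap, with non-degeneracy replacing the quantitatively weak third clause of \Cref{lemma:variance-reduction}) is precisely what that one-line pointer suggests.

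That said, your step establishing ``the largest eigenvalue of $\covar|_{\truesubspace^*}$ is at most $1-\gamma_1$'' has a genuine gap. You argue from (i) the diagonal bounds $\e_j^\top(I-\covar)\e_j\ge\gamma_0^{\nondegen}$ in the Gram--Schmidt basis and (ii) the off-diagonal inequality $\cov(\e_i\cdot\x,\e_j\cdot\x)^2\le(1-\var(\e_i\cdot\x))(1-\var(\e_j\cdot\x))$. But (ii) is nothing more than positive semidefiniteness of $I-\covar$, and a PSD matrix with all diagonals $\ge c$ can have smallest eigenvalue zero: e.g.\ $A=\tfrac12\bigl(\begin{smallmatrix}1&1\\1&1\end{smallmatrix}\bigr)$ satisfies both (i) with $c=\tfrac12$ and (ii), yet is rank one. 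Non-degeneracy only gives you variance reduction along the finitely many residual directions $\e_j$ (for one fixed Gram--Schmidt ordering), not along every unit vector in $\truesubspace^*$; the inequality $0\preceq I-\covar\preceq I$ that you invoke cannot bridge that gap. Your closing paragraph correctly flags this as the crux, but the specific mechanism you name is insufficient. A related issue appears in the invariance step: the ``leakage'' between $\truesubspace^*$ and the high-threshold directions is controlled, via the same off-diagonal inequality, by the variance reduction along high-threshold residuals, which can be as large as $e^{-\threshold^2/2}=\gamma_0\gg\gamma_0^{\nondegen}$ when $\nondegen>1$, so ``negligible next to $\gamma$'' is not justified either. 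You would need to exploit non-degeneracy more fully --- for instance, using that it holds for \emph{every} subset of normals simultaneously, not just one chain --- to get a bound on the full quadratic form on $\truesubspace^*$.
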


\subsection{Subspace Retrieval through Polar Planes algorithm}\label{section:appendix-recovery-polar}

We now present the following lemma from \cite{vempala2010random} which provides another algorithm for approximately retrieving the relevant subspace for homogeneous intersections whose runtime is not exponential in $1/\eps$, even without making a non-degeneracy assumption. The lemma follows from combining Theorem 4 and Lemma 3 from \cite{vempala2010random}.

\begin{lemma}[Subspace Retrieval through Polar Planes, from \cite{vempala2010random}]\label{lemma:subspace-retrieval-homogeneous-polar-planes}
    Consider $\C$ to be the class of intersections of $k$ homogeneous halfspaces on $\R^d$, $\eps\in(0,1)$ and $\bias\in(0,1/2]$. Let $\Slabelled$ be a set of at least $m = d(\frac{k}{\eps \bias})^{O(k)} \log(1/\delta)$ labelled examples of the form $(\x,\copt(\x))$, where $\x\sim\Gauss_d$ and $\copt\in\C_\bias$ is an $\bias$-balanced intersection which is defined by the normal vectors $(\w^1,\dots,\w^k)$. There is an algorithm (Polar Planes from \cite{vempala2010random}) that on input $\Slabelled$, returns, w.p. at least $1-\delta$, an orthonormal basis for a subspace $\subspace$ of dimension $k$ that approximately includes all of the normal vectors, i.e., for any $i\in[k]$, we have $\|\proj_\subspace \w^i\|_2 \ge 1-\eps$, in time $(\frac{dk}{\eps\bias})^{O(k)}$.
\end{lemma}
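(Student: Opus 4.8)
This is the subspace-retrieval guarantee of the \emph{Polar Planes} algorithm of \cite{vempala2010random} specialized to the Gaussian marginal, so the plan is to invoke that analysis and do the bookkeeping needed to phrase it in the language of \Cref{definition:subspace-retrieval}. The relevant convex body is $\convset = \{\x : \w^i\cdot\x \ge 0,\ i\in[k]\}$, the positive region of $\copt$; it is a convex cone, $\dim(\spn(\w^1,\dots,\w^k))\le k$, and the $\bias$-balance hypothesis gives $\pr_{\x\sim\Gauss_d}[\x\in\convset] = \pr_{\x\sim\Gauss_d}[\copt(\x)=1]\ge\bias$. Since $\Gauss_d$ is log-concave and isotropic, its conditional $\Gauss_d|_{\convset}$ --- which is exactly the law of a positive training example --- is log-concave with covariance at most the identity, so the distributional hypotheses of \cite[Theorem 4]{vempala2010random} hold with mass parameter $\bias$ and no preprocessing into near-isotropic position is needed.

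First I would pass to the positive examples $\Slabelled^+ = \{\x : (\x,1)\in\Slabelled\}$, which are i.i.d.\ from $\Gauss_d|_{\convset}$; a Chernoff bound gives $|\Slabelled^+|\ge \tfrac{\bias}{2}|\Slabelled|$ with probability $\ge 1-\delta/2$ (using $|\Slabelled|\gtrsim \tfrac{1}{\bias}\log\tfrac{1}{\delta}$, implied by the stated bound on $m$). Then I would run Polar Planes on $\Slabelled^+$: at a high level it probes the boundary of $\convset$ along random directions, extracts from near-boundary sample points a pool of candidate directions each close to a normal of some active facet, and outputs (after pruning) the span of $k$ of them. By \cite[Theorem 4]{vempala2010random}, from these samples it returns, with probability $\ge 1-\delta/2$, a $k$-dimensional subspace $\subspace$ all of whose principal angles with $\truesubspace := \spn(\w^1,\dots,\w^k)$ are small, in time $(dk/(\eps\bias))^{O(k)}$.

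Second, I would convert ``the principal angles between $\subspace$ and $\truesubspace$ are small'' into the per-normal bound $\|\proj_\subspace \w^i\|_2 \ge 1-\eps$: since each $\w^i$ lies exactly in $\truesubspace$ and $\dim(\truesubspace)\le k = \dim(\subspace)$, small principal-angle distance forces $\|\proj_\subspace \w^i\|_2$ close to $1$ for every $i$, which is the content of \cite[Lemma 3]{vempala2010random}. Taking the internal accuracy parameter of Polar Planes to be a fixed polynomial in $\eps$ yields the $1-\eps$ guarantee while keeping the sample complexity at $d(k/(\eps\bias))^{O(k)}\log(1/\delta)$, and a union bound over the two failure events gives overall success probability $\ge 1-\delta$.

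The main difficulty is not an isolated step but the faithful translation: \cite{vempala2010random} phrases its guarantees in terms of PAC-learning the intersection (in a possibly different error metric, and in places for general log-concave distributions via an isotropic-position reduction), so I must extract from its internals the clean ``$k$-dimensional subspace close to $\truesubspace$'' conclusion, check that the body's mass enters only through the effective count of informative samples --- so that every $1/\eps$ there becomes a $1/(\eps\bias)$ of the same order --- and confirm that the stated running time already has the exhaustive-search stage stripped off, leaving only the subspace-recovery cost.
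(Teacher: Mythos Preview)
The proposal is correct and takes essentially the same approach as the paper: the paper's entire proof is the single sentence ``The lemma follows from combining Theorem 4 and Lemma 3 from \cite{vempala2010random},'' and your plan is precisely to invoke those two results with the bookkeeping (passing to positive examples, tracking the $\bias$ parameter, converting subspace closeness to per-normal projection bounds) spelled out. Your expansion is faithful to that citation and fills in exactly the translation steps one would need.
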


\section{TDS Learning Intersections of Halfspaces}\label{section:appendix-tds-upper-bounds}

We now provide full proofs for all of our upper bounds, assuming the balanced concepts condition (\Cref{definition:bounded-bias-non-degeneracy}), both with and without assuming the non-degeneracy condition (\Cref{definition:non-degeneracy}).

\subsection{Homogeneous Halfspace Intersections}\label{section:appendix-tds-homogeneous}

We prove our result on learning intersections of homogeneous halfspaces, which we restate here for convenience.

\begin{theorem}[TDS Learning Intersections of Homogeneous Halfspaces]\label{theorem:tds-homogeneous-intersections-appendix}
    Let $\C$ be a class whose elements are intersections of $k$ homogeneous halfspaces on $\R^d$, $\eps\in (0,1)$ and $C\ge 1$ a sufficiently large constant.
    \begin{itemize}
        \item Assume that there is an algorithm $\A$ that upon receiving at least $\mrec$ examples from a training distribution of the form $(\x,\copt(\x))$, where $\x\sim\Gauss_d$ and  $\copt\in\C$, outputs, with probability at least $0.99$ an orthonormal basis for a subspace $\subspace$ such that for any normal $\w$ of $\copt$ we have $\|\proj_\subspace \w\|_2 \ge 1 - (\frac{k}{C\eps})^3$.
    \end{itemize}   
    
    Then, there is an algorithm (\Cref{algorithm:tds-homogeneous-intersections-appendix}) that $(\eps,\delta=0.02)$-TDS learns the class $\C$, using $\mrec+ \tilde{O}(\frac{dk^2}{\eps^2})$ labelled training examples and $\tilde{O}(\frac{dk^2}{\eps^2})$ unlabelled test examples, calls $\A$ once and uses additional time $\tilde{O}(\frac{d^3k^2}{\eps^2})+d(k/\eps)^{O(k^2)}$.
\end{theorem}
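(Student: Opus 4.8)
The plan is to verify the three requirements of the TDS definition for \Cref{algorithm:tds-homogeneous-intersections-appendix} --- soundness, completeness, and the resource bounds --- organizing the argument around Stages~I and~II described in the main text. Throughout I would work on a single event $\goodevent$ of probability at least $1-\delta$ (splitting the failure budget between $\A$ and the concentration/uniform-convergence events below) on which: (i) $\A$ succeeds, i.e.\ the returned basis spans $\subspace$ with $\|\proj_\subspace\w^i\|_2\ge 1-\eps^3/(Ck^3)$ for every normal $\w^i$ of $\copt$; (ii) every $k$-halfspace intersection whose normals lie in the cover $\subspace_{\eps''}$ has empirical training error within $\eps/100$ of its true Gaussian error (uniform convergence over a finite class of size $(k/\eps)^{O(k^2)}$, or alternatively over the VC class of $k$-halfspace intersections); (iii) for every pair $g_1,g_2$ of $k$-halfspace intersections, $|\pr_{\x\sim\Stest}[g_1\neq g_2]-\pr_{\x\sim\Dtest}[g_1\neq g_2]|\le \eps/100$ (uniform convergence over symmetric differences of two $k$-halfspace intersections, VC dimension $\tilde O(dk)$); and, in the completeness case $\Dtest=\Gauss_d$, (iv) the empirical covariance of $\Stest$ has operator norm at most $2$ and each anticoncentration estimate in the cover loop is within $\eps'^{2/3}$ of its Gaussian value (by \Cref{lemma:mean-covariance-estimation-log-concave}, a Hoeffding bound, and a union bound over $|\subspace_{\eps''}|=(k/\eps)^{O(k)}$). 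All of these sample requirements are subsumed by $\mrec+\tilde O(dk^2/\eps^2)$ labeled training and $\tilde O(dk^2/\eps^2)$ unlabeled test examples.

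The heart of the proof is to isolate one \emph{anchor} intersection $\concept^\#\in\F$ that is close to $\copt$ under the training marginal and, once the tests pass, under the test marginal. Using (i), each normal $\w^i$ makes angle $O(\sqrt{\eps^3/(Ck^3)})$ with its projection onto $\subspace$; applying \Cref{lemma:sparse-cover-angles} with parameter $\eps''$ produces $\vu^i\in\subspace_{\eps''}$ with $\measuredangle(\proj_\subspace\w^i,\vu^i)\le 6(k\eps'')^{1/4}$. The parameters $\eps',\eps''$ are chosen (as polynomials in $\eps/k$, up to the universal constant $C$) precisely so that both $\sqrt{\eps^3/(Ck^3)}$ and $6(k\eps'')^{1/4}$ are at most $\eps'$, whence $\measuredangle(\w^i,\vu^i)\le\eps'$ by the triangle inequality. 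Put $\concept^\#=2\bigwedge_{i}\ind\{\vu^i\cdot\x\ge 0\}-1$. Since disagreement of the two intersections forces disagreement of some pair of halfspaces, $\{\copt\neq\concept^\#\}\subseteq\bigcup_i\{\sign(\w^i\cdot\x)\neq\sign(\vu^i\cdot\x)\}$, so under any distribution the error of $\concept^\#$ against $\copt$ is at most $\sum_i\measuredangle(\w^i,\vu^i)/\pi$; in particular its Gaussian training error is at most $k\eps'/\pi\le\eps/10$, so by (ii) its empirical training error is at most $\eps/5$ and $\concept^\#\in\F$. In particular $\F\neq\emptyset$.

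For \textbf{soundness}, suppose the algorithm accepts. Then the variance check and every iteration of the cover loop accepted, which is exactly the precondition of the local halfspace disagreement tester of \Cref{lemma:homogeneous-disagreement-tester} run with parameter $\eps'$ at each center $\vu\in\subspace_{\eps''}$; its soundness at center $\vu^i$ and point $\w^i$ (valid since $\measuredangle(\w^i,\vu^i)\le\eps'$) gives $\pr_{\x\sim\Stest}[\sign(\w^i\cdot\x)\neq\sign(\vu^i\cdot\x)]\le C'\eps'^{2/3}$, and a union bound over $i$ gives $\pr_{\x\sim\Stest}[\copt\neq\concept^\#]\le kC'\eps'^{2/3}=C'\eps/C^{2/3}\le\eps/10$ once $C$ is large relative to the tester constant $C'$ (this is why the scaling $\eps'\asymp(\eps/k)^{3/2}$ is forced, so that $k\eps'^{2/3}\asymp\eps$). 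Since the discrepancy check also accepted, $\pr_{\x\sim\Stest}[\hat\concept\neq\concept^\#]\le\eps/2$ for the output $\hat\concept\in\F$; combining and then using (iii) to pass from $\Stest$ to $\Dtest$ gives $\pr_{\x\sim\Dtest}[\copt\neq\hat\concept]\le\eps/10+\eps/2+\eps/100\le\eps$. For \textbf{completeness} with $\Dtest=\Gauss_d$: the variance check and cover loop pass by (iv) together with Gaussian anticoncentration ($\pr_{\Gauss_d}[|\vu\cdot\x|\le 2\eps'^{2/3}]\le 4\eps'^{2/3}$); $\F\neq\emptyset$ by the previous paragraph; and the discrepancy check passes because each $\concept\in\F$ has true Gaussian error at most $\eps/5+\eps/100$ by (ii), so any pair in $\F$ disagrees with Gaussian probability at most $\tfrac{21}{50}\eps$, which by (iii) is at most $\tfrac{43}{100}\eps<\tfrac12\eps$ empirically on $\Stest$. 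Hence the algorithm accepts with probability at least $1-\delta$.

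Finally, the \textbf{running time}: one call to $\A$; the variance check costs $O(d^3+d^2|\Stest|)=\tilde O(d^3k^2/\eps^2)$; the cover loop runs over $|\subspace_{\eps''}|=(k/\eps)^{O(k)}$ vectors at cost $O(d|\Stest|)$ each; constructing $\F$ enumerates $|\subspace_{\eps''}|^k=(k/\eps)^{O(k^2)}$ tuples at cost $O(dk|\Strain|)$ each; and the discrepancy check ranges over $|\F|^2=(k/\eps)^{O(k^2)}$ pairs at cost $O(dk|\Stest|)$ each, for a total of $\tilde O(d^3k^2/\eps^2)+d(k/\eps)^{O(k^2)}$. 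I expect the main obstacle to be the constant bookkeeping in the second and third paragraphs: one must pin down a single $C$ large enough to simultaneously absorb the subspace-retrieval error, the cover resolution $6(k\eps'')^{1/4}$, the tester soundness constant $C'$, and all the uniform-convergence slacks, while keeping the two competing constraints $6(k\eps'')^{1/4}\lesssim\eps'$ and $k\eps'^{2/3}\lesssim\eps$ compatible --- everything else reduces to routine concentration and union-bound arguments.
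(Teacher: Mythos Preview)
Your proposal is correct and follows essentially the same route as the paper's proof: construct the anchor intersection $\concept^\#$ from cover vectors close to the projected normals, invoke the soundness of \Cref{lemma:homogeneous-disagreement-tester} at each $\vu^i$ to bound $\pr_{\Stest}[\copt\neq\concept^\#]$, and then use the discrepancy check and VC uniform convergence to transfer the bound to the output $\hat\concept$ under $\Dtest$. Your front-loading of all concentration events into a single $\goodevent$ is a clean organizational choice, and you have correctly identified that the only nontrivial part is the constant bookkeeping tying together the subspace-retrieval accuracy, the cover resolution $\eps''$, the tester parameter $\eps'$, and the soundness constant $C'$; the paper handles this in exactly the same way, absorbing everything into a sufficiently large universal $C$.
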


\begin{algorithm2e}
\caption{Proper TDS Learner for Homogeneous Halfspace Intersections}\label{algorithm:tds-homogeneous-intersections-appendix}
\KwIn{Labelled set $\Strain$, unlabelled set $\Stest$, parameter $\epsilon$}
Set $\eps' = \frac{\eps^{3/2}}{Ck^{3/2}}$ and $\eps'' = \frac{\eps^6}{Ck^{7}}$ for some sufficiently large universal constant $C\ge 1$.\\
Run algorithm $\A$ on the set $\Strain$ and let $(\vv^1,\dots,\vv^k)$ be its output. \\
Let $\subspace$ be the subspace spanned by $(\vv^1,\dots,\vv^k)$ and consider the following sparse cover of $\subspace$: 
$\subspace_{\eps''} = \{\frac{\vu}{\|\vu\|_2}: \vu = \eps''\sum_{i=1}^k j_i \vv^i, j_i\in\Z\cap[-\frac{1}{\eps''},\frac{1}{\eps''}],\|\vu\|_2\neq 0\}$ \\
\textbf{Reject} and terminate if $\|\var_{\x\sim\Sunlabelled}(\x)\|_2 \ge 2$. \\
\For{$\vu\in \subspace_{\eps''}$}{
    \textbf{Reject} and terminate if $\pr_{\x\sim\Sunlabelled}[|\vu\cdot \x|\le 2\eps'^{2/3}] > 5\eps'^{2/3}$. \\
} 
Let $\F$ contain the concepts $\concept:\R^d\to\{\pm 1\}$ of the form $\concept(\x)=2\bigwedge_{i=1}^k \ind\{\vu^i\cdot \x \ge 0\} - 1$, where $\vu^1,\dots,\vu^k\in \subspace_{\eps''}$ and $\pr_{(\x,y)\sim \Strain}[y\neq\concept(\x)] \le \eps/5$. \\
\textbf{Reject} and terminate if $\max_{\concept_1,\concept_2\in \F} \pr_{\x\sim \Stest}[\concept_1(\x)\neq \concept_2(\x)] > \eps/2$.\\
\textbf{Otherwise,} output $\hat{\concept}:\R^d\to\cube{}$ for some $\hat\concept\in \F$.
\end{algorithm2e}

\begin{proof}of \Cref{theorem:tds-homogeneous-intersections}. Let $\Strain$ be a set of $\mtrain$ samples from the training distribution, i.e., of the form $(\x,\copt(\x))$, where $\x\sim \Dtrain=\Gauss_d$ and let $\Stest$ be a set of $\mtest$ samples from the test distribution $\Dtest$. Let $C>0$ be a sufficiently large universal constant. Let $\copt:\R^d\to\cube{}$ denote the ground truth, i.e., the intersection of $k$ homogeneous halfspaces
\[
    \copt(\x) = 2\wedge_{i\in[k]}\ind\{\vec \w^i\cdot \x \ge 0\}-1 \,,\text{ for some }\w^1,\dots,\w^k\in \S^{d-1}
\]
In the following, we will say that an event holds with high probability if it holds with probability sufficiently close to $1$ so that union bounding over all the bad events gives a probability of failure of at most $0.01$. This is possible by choosing $C$ to be a sufficiently large constant.

\paragraph{Soundness.} To prove soundness, suppose that the tests have accepted. We first use the approach of \cite{vempala2010learning} to show that using training data, we can retrieve a subspace that is geometrically close to the normal subspace of the ground truth. Let $C',C''$ be sufficiently large universal constants.

In particular, the guarantee for algorithm $\A$ implies that the retrieved subspace $\subspace$ has the property that for any $i\in[k]$ we have $\|\proj_\subspace \w^i\|_2 \ge 1-(\frac{\eps}{C'k})^3$ with high probability, as long as $\mtrain \ge \mrec$. Let $\w^i_\subspace = \frac{\proj_\subspace \w^i}{\|\proj_\subspace \w^i\|_2}$. Then, we have $\measuredangle(\w^i,\w^i_\subspace) \le \frac{4\eps^{3/2}}{C'k^{3/2}}$. Due to \Cref{lemma:sparse-cover-angles}, there is a vector $\vu^i\in\subspace_{\eps''}$ with $\measuredangle(\vu^i,\w^i_\subspace) \le \frac{\eps^{3/2}}{C'k^{3/2}}$, whenever $\eps'' \le \frac{\eps^6}{6^4C'k^{7}}$, in which case, $|\subspace_{\eps''}|\le (\frac{2\cdot 6^4 C' k^{7}}{\eps^6})^k$. Therefore, for any $i\in[k]$ we have some vector $\vu^i$ in the cover $\subspace_{\eps''}$ that is close to the normal $\w^i$, i.e., $\measuredangle(\vu^i,\w^i) \le (\frac{5\eps}{C'k})^{3/2}$. 

Consider now the hypothesis $\concept(\x) = 2\wedge_{i\in[k]}\ind\{\vu^i\cdot \x\ge 0\} - 1$. If suffices to show that $\concept$ belongs in the set $\F$ of candidate concepts and that $\concept$ has small test error $\pr_{\x\sim\Stest}[\concept(\x)\neq \copt(\x)] \le \eps/4$, because then for any other candidate concept $\concept'\in\F$, we know that it disagrees with $\concept$ only on a small fraction of test points and, hence, we will have $\pr_{\x\sim\Stest}[\concept'(\x)\neq \copt(\x)]\le 3\eps/4$. By standard VC dimension arguments, this would imply that, whenever $\mtest\ge C\frac{dk\log k}{\eps^2}$, with high probability, the test error of any element of $\F$ satisfies $\pr_{\x\sim\Dtest}[\concept'(\x)\neq \copt(\x)]\le \eps$.

We appeal to the tester for local halfspace disagreement of \Cref{lemma:homogeneous-disagreement-tester} in order to demonstrate that $\pr_{\x\sim\Stest}[\concept(\x)\neq \copt(\x)] \le \eps/4$. In particular, we have that
\begin{align*}
    \pr_{\x\sim\Stest}[\concept(\x)\neq \copt(\x)] &\le k \pr_{\x\sim\Stest}[\sign(\vu^i\cdot \x)\neq \sign(\w^i\cdot \x)] \\
    &\le C'' k(\measuredangle(\vu^i,\w^i))^{2/3} \le \eps/4
\end{align*}

Finally, we show that the hypothesis $\concept$ lies within $\F$. In particular, $\pr_{\x\sim\Dtrain}[\concept(\x)\neq \copt(\x)] \le k\pr_{\x\sim\Gauss_d}[\sign(\vu^i\cdot \x)\neq\sign(\w^i\cdot \x)] = O( k\measuredangle(\vu^i,\w^i))$, which is bounded by $\eps/10$ by choosing the constant $C'$ appropriately. By standard VC dimension arguments, we therefore have that $\pr_{\x\sim\Strain}[\concept(\x)\neq \copt(\x)] \le \eps/5$ as long as $\mtrain \ge \frac{Cdk\log k}{\eps^2}$.

\paragraph{Completeness.} To prove completeness, suppose that $\Dtest = \Gauss_d$. Since $\subspace_{\eps''},\F$ do not depend on $\Stest$, we can use Hoeffding bounds to bound the probability of rejection, as well as union bounds over $\F\times \F$ accordingly. In particular, the tester of \Cref{lemma:homogeneous-disagreement-tester} will accept with high probability as long as $\mtest \ge C\frac{1}{\eps'^{4/3}} + Cd\log^2d = O(\frac{k^2}{\eps^2} + d\log^2d)$ and the tester of the disagreement probabilities of pairs in $\F$ will accept (due to standard Hoeffding and union bounds) with high probability whenever $\mtest \ge C \frac{1}{\eps^2} \log|\F| = O(\frac{k^2}{\eps^2} \log(\frac{k}{\eps}))$ (since $|\F| = (k/\eps)^{O(k^2)}$ as we need to choose $k$ normals from $\subspace_{\eps''}$).
\end{proof}

By combining \Cref{theorem:tds-homogeneous-intersections-appendix} with \Cref{lemma:subspace-retrieval-general,lemma:subspace-retrieval-non-degenerate-appendix,lemma:subspace-retrieval-homogeneous-polar-planes} we obtain the following bounds for TDS learning homogeneous halfspace intersections.

\begin{corollary}[TDS Learning Bounds for Homogeneous Halfspace Intersections]\label{corollary:homogeneous-results}
    Let $\bias\in(0,\frac{1}{2})$, $\eps>0$, $\nondegen\ge 1$ and let $\C$ be the class of intersections of $k$ homogeneous halfspaces on $\R^d$.
    \begin{enumerate}[label=\textnormal{(}\alph*\textnormal{)}]
        \item There is an $(\eps,\delta=0.02)$-TDS learner for the class $\C_\bias$ of $\bias$-balanced intersections that uses $\tilde{O}(d)(\frac{k}{\eps\bias})^{O(\frac{k^6}{\eps^6})}$ labelled training examples, $\tilde{O}(\frac{dk^2}{\eps^2})$ unlabelled test examples and runs in time $\tilde{O}(d^3)(\frac{k}{\eps\bias})^{O(\frac{k^6}{\eps^6})}$.
        \item There is an $(\eps,\delta=0.02)$-TDS learner for the class $\C_\bias^\nondegen$ of $\bias$-balanced and $\nondegen$-non-degenerate intersections that uses $\tilde{O}(d)\cdot \frac{1}{\bias^2}\cdot (\frac{k}{\eps})^{O(\nondegen)}$ labelled training examples, $\tilde{O}(\frac{dk^2}{\eps^2})$ unlabelled test examples and runs in time $\tilde{O}(d^3)\cdot \frac{1}{\bias^2}\cdot (\frac{k}{\eps})^{O(\nondegen)} + d(k/\eps)^{O(k^2)}$.
        \item There is an $(\eps,\delta=0.02)$-TDS learner for the class $\C_\bias$ of $\bias$-balanced intersections that uses $\tilde{O}(d)(\frac{k}{\eps\bias})^{O(k)}$ labelled training examples, $\tilde{O}(\frac{dk^2}{\eps^2})$ unlabelled test examples and with time complexity $(\frac{dk}{\eps\bias})^{O(k)}+d(k/\eps)^{O(k^2)}$.
    \end{enumerate}
\end{corollary}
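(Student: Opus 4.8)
The plan is to derive each of the three bounds by plugging one of the subspace–retrieval results of \Cref{section:appendix-pca} into \Cref{theorem:tds-homogeneous-intersections-appendix} as the routine $\A$, and then simplifying. Recall that the theorem only requires of $\A$ that, with probability at least $0.99$, it output an orthonormal basis for a subspace $\subspace$ with $\|\proj_\subspace\w\|_2\ge 1-\gamma$ for every normal $\w$ of the ground truth, where $\gamma=\Theta(\eps^3/k^3)$; so in each case it suffices to run the corresponding retrieval algorithm with target accuracy $\gamma$ and failure probability $0.01$. Since the ground-truth halfspaces here are homogeneous, every threshold equals $0$, and hence in \Cref{lemma:subspace-retrieval-general} and \Cref{lemma:subspace-retrieval-non-degenerate-appendix} we may take the threshold parameter $\threshold=0$ (the normals with $\tau^i\le\threshold=0$ are all of them), which makes the $2^{\Theta(\threshold^2/\eps^2)}$ and $e^{\Theta(\nondegen\threshold^2)}$ factors in their sample bounds equal to $1$.

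For part (a), instantiate $\A$ with \Cref{lemma:subspace-retrieval-general} (with $\threshold=0$, accuracy $\gamma$, failure $0.01$): its sample requirement becomes $\mrec = dk^4(1/\bias)^{O(1/\gamma^2)}\log^2(d)=\tilde O(d)(\tfrac{k}{\eps\bias})^{O(k^6/\eps^6)}$, and the retrieval step (PCA on $\mrec$ points in $\R^d$) runs in time $\tilde O(d^3)$ times the same factor. Adding the $\tilde O(dk^2/\eps^2)$ extra labelled and unlabelled samples and the $\tilde O(d^3k^2/\eps^2)+d(k/\eps)^{O(k^2)}$ extra time from \Cref{theorem:tds-homogeneous-intersections-appendix}, all of which are dominated by the retrieval cost, yields (a). Part (b) is the identical substitution with \Cref{lemma:subspace-retrieval-non-degenerate-appendix}, whose sample bound at accuracy $\gamma$ with $\threshold=0$ simplifies to $\tilde O(d)\cdot\tfrac{1}{\bias^2}(\tfrac{k}{\eps})^{O(\nondegen)}$ (for the homogeneous case this exponent is in fact a convenient, non-tight upper bound); here the $d(k/\eps)^{O(k^2)}$ term from the theorem is no longer dominated and is retained as a separate summand in the final time bound. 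Part (c) substitutes \Cref{lemma:subspace-retrieval-homogeneous-polar-planes}, which applies directly to homogeneous intersections and needs no threshold parameter, giving retrieval sample and time complexities $(\tfrac{dk}{\eps\bias})^{O(k)}$ at accuracy $\gamma$, again combined with the theorem's $d(k/\eps)^{O(k^2)}$ term.

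The only points needing care are routine: (i) the failure probabilities of the retrieval step, of the variance/band concentration testers inside \Cref{algorithm:tds-homogeneous-intersections-appendix}, and of the VC uniform-convergence events must be union-bounded below $0.02$, which is arranged by taking the universal constant $C$ large enough; (ii) the balance parameter $\bias$ enters only through the retrieval sample bound, since \Cref{theorem:tds-homogeneous-intersections-appendix} is itself $\bias$-free; and (iii) one must track, in each of the three regimes, whether the retrieval cost or the $d(k/\eps)^{O(k^2)}$ cover-enumeration cost dominates, in order to state the final bound cleanly. There is no substantive obstacle: the content of the result lies entirely in \Cref{theorem:tds-homogeneous-intersections-appendix} and the three retrieval lemmas, and the corollary is just their composition.
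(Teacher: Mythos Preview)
Your proposal is correct and is exactly the paper's approach: the corollary is obtained by combining \Cref{theorem:tds-homogeneous-intersections-appendix} with each of \Cref{lemma:subspace-retrieval-general}, \Cref{lemma:subspace-retrieval-non-degenerate-appendix}, and \Cref{lemma:subspace-retrieval-homogeneous-polar-planes} as the subspace-retrieval routine $\A$ (with $\threshold=0$ in the homogeneous case), and then reading off the resulting sample and time complexities. Your accounting of parameters and of which terms dominate is accurate and more explicit than the paper, which simply states the combination without spelling it out.
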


\subsection{General Halfspace Intersections}\label{section:appendix-tds-general}

We now prove our positive results on learning intersections of general halfspaces.

\begin{theorem}[TDS Learning Intersections of General Halfspaces]\label{theorem:tds-general-intersections-appendix}
    Let $\C$ be a class whose elements are intersections of $k$ general halfspaces on $\R^d$, $\eps,\threshold\in (0,1)$ and $C\ge 1$ a sufficiently large constant.
    \begin{itemize}
        \item Assume that there is an algorithm $\A$ that upon receiving at least $\mrec$ examples of the form $(\x,\copt(\x))$, where $\x\sim\Gauss_d$ and  $\copt\in\C$, outputs, with probability at least $0.99$ an orthonormal basis for a subspace $\subspace$ such that for any normal $\w\in\S^{d-1}$ that corresponds to some halfspace $\{\x:\w\cdot \x + \tau \ge 0\}$ of $\copt$ with threshold $\tau \le \threshold$ we have $\|\proj_\subspace \w\|_2 \ge 1 - (\frac{k}{C\eps})^3$.
    \end{itemize} 
    
    Then, there is an algorithm (\Cref{algorithm:tds-general-intersections-appendix}) that $(\eps,\delta=0.02)$-TDS learns the class $\C$, using $\mrec+ \tilde{O}(\frac{dk^2}{\eps^2})$ labelled training examples and $d^{O(\log(k/\eps))}$ unlabelled test examples, calls $\A$ once and uses additional time $d^{O(\log(k/\eps))}(k/\eps)^{O(k^2)}$.
\end{theorem}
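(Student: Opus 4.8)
The plan is to follow the template of the homogeneous case (\Cref{theorem:tds-homogeneous-intersections-appendix}), adding a threshold grid on top of the angular cover of the retrieved subspace and, crucially, a low-degree moment-matching test to handle halfspaces whose bias is too large to be reconstructed from training data. Fix a parameter $\eps' = \poly(\eps/k)$ small enough that $k(C\eps'\threshold + C\eps'^{2/3}) \le \eps/4$, where $C$ is the constant of \Cref{lemma:general-disagreement-tester}, and set the threshold bound $\threshold = \Theta(\sqrt{\log(k/\eps')}) \ge 3\sqrt{\log(1/\eps')}$ (this is the value the subspace-retrieval algorithm $\A$ must be given). The algorithm: (1) runs $\A$ on $\Strain$ to get an orthonormal basis of a subspace $\subspace$; (2) forms an $\eps'$-angular cover $\subspace'$ of the unit sphere of $\subspace$ via \Cref{lemma:sparse-cover-angles} (size $(k/\eps)^{O(k)}$) and a grid $\taus$ of $O(\threshold/\eps')$ thresholds in $[-\threshold,\threshold]$; (3) lets $\F$ be the two constant functions together with every intersection of at most $k$ halfspaces $\x\mapsto\sign(\vu\cdot\x+\hat\tau)$, $\vu\in\subspace'$, $\hat\tau\in\taus$, whose empirical training error is at most $\eps/5$ (so $|\F| = (k/\eps)^{O(k^2)}$); (4) on the unlabeled test sample $\Stest$, runs the covariance check and the general-halfspace disagreement tester of \Cref{lemma:general-disagreement-tester} on each $(\vu,\hat\tau)$, runs the moment-matching test (empirical degree-$\le\log(1/\eps')$ moments within $d^{-\log(1/\eps')}$ of $\Gauss_d$), and checks $\max_{\concept_1,\concept_2\in\F}\pr_{\x\sim\Stest}[\concept_1(\x)\ne\concept_2(\x)]\le\eps/2$; it rejects if any test fails and otherwise outputs an arbitrary $\hat\concept\in\F$.

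For soundness, condition on $\A$ succeeding and every test accepting. Split the defining halfspaces of $\copt$ into the \emph{moderate} ones ($|\tau^i|\le\threshold$) and the \emph{high-bias} ones ($|\tau^i|>\threshold$); by \Cref{lemma:moment-matching-concentration}, followed by a VC uniform-convergence step to pass from $\Stest$ to $\Dtest$, each high-bias halfspace agrees with the constant $\sign(\tau^i)$ on $\Dtest$ up to error $\eps'$. If some high-bias halfspace has $\tau^i<-\threshold$, then $\copt$ equals $-1$ up to $\eps'$ on $\Dtest$ and (since $\Phi(-\threshold)\ll\eps'$) up to $\eps'$ on $\Gauss_d$ as well, so the constant $-1$ survives into $\F$ and is an $\eps/3$-accurate test hypothesis. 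Otherwise every high-bias halfspace is almost always true, and dropping all of them changes $\copt$ by $\le k\eps'$ on $\Dtest$ and $\le k\Phi(-\threshold)$ on $\Gauss_d$; write $\copt'$ for the resulting intersection of the moderate halfspaces. For each moderate $(\w^i,\tau^i)$ the subspace guarantee $\|\proj_\subspace\w^i\|_2\ge1-(k/(C\eps))^3$ and \Cref{lemma:sparse-cover-angles} yield $\vu^i\in\subspace'$ with $\measuredangle(\vu^i,\w^i)\le\eps'$, and there is $\hat\tau^i\in\taus$ with $|\hat\tau^i-\tau^i|\le\eps'$; set $\concept^*=\wedge_i\sign(\vu^i\cdot\x+\hat\tau^i)$. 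A direct Gaussian-measure bound shows $\concept^*$ is within $\eps/10$ of $\copt$ on $\Gauss_d$, so $\concept^*\in\F$ by a VC bound on $\Strain$, while \Cref{lemma:general-disagreement-tester} applied to each $(\vu^i,\hat\tau^i)$ gives $\pr_{\x\sim\Stest}[\concept^*(\x)\ne\copt'(\x)]\le k(C\eps'\threshold+C\eps'^{2/3})\le\eps/4$, hence (VC again) $\pr_{\Dtest}[\concept^*\ne\copt]\le\eps/4+k\eps'\le\eps/3$. In every case $\F$ contains a hypothesis $\eps/3$-close to $\copt$ on $\Dtest$; since the discrepancy test accepted, every $\concept\in\F$ satisfies $\pr_{\x\sim\Stest}[\concept\ne\concept^*]\le\eps/2$, and a final VC step over the (unlabeled-measurable but uncomputable) disagreement functions $\{x\mapsto\ind[\concept(x)\ne\copt(x)]\}_{\concept\in\F}$ upgrades this to $\pr_{\Dtest}[\hat\concept\ne\copt]\le\eps/2+\eps/3\le\eps$.

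For completeness, let $\Dtest=\Gauss_d$. Since $\F$ depends only on $\Strain$, condition on it and union-bound over its $(k/\eps)^{O(k^2)}$ elements and pairs: the covariance and disagreement tests accept with high probability by the completeness parts of \Cref{lemma:mean-covariance-estimation-log-concave,lemma:general-disagreement-tester}; the moment-matching test accepts because the $\le d^{\log(1/\eps')}$ Gaussian moments of degree $\le\log(1/\eps')$ concentrate within $d^{-\log(1/\eps')}$ once $\mtest\ge d^{O(\log(k/\eps))}$; and the discrepancy test accepts because every $\concept\in\F$ has training error $\le\eps/5$, so every pair disagrees on $\le2\eps/5$ of $\Gauss_d$, reproduced empirically with $\mtest\ge\tilde O(\log|\F|/\eps^2)$ samples. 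Adding up: $\mrec+\tilde O(dk^2/\eps^2)$ training samples (retrieval plus the VC bounds), $d^{O(\log(k/\eps))}$ test samples (dominated by moment matching), and additional time $d^{O(\log(k/\eps))}(k/\eps)^{O(k^2)}$ (running $(k/\eps)^{O(k)}$ disagreement testers, estimating $d^{O(\log(k/\eps))}$ moments, building $\F$, and brute-forcing over $|\F|^2$ pairs for the discrepancy), as claimed.

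The main obstacle, and the genuine difference from the homogeneous argument, is the treatment of high-bias halfspaces: one must choose $\eps'$, the threshold grid spacing, $\threshold$ and the moment degree jointly so that the disagreement-tester error $k(C\eps'\threshold+C\eps'^{2/3})$ stays below $\eps/4$ while simultaneously $\threshold\ge3\sqrt{\log(1/\eps')}$ is large enough for \Cref{lemma:moment-matching-concentration} to certify on the \emph{unknown} $\Dtest$ that every halfspace of bias exceeding $\threshold$ is essentially constant --- this is precisely what forces $\threshold=\Theta(\sqrt{\log(k/\eps)})$ and the resulting $d^{O(\log(k/\eps))}$ overhead (which \Cref{theorem: TDS learning requires d to the log} shows is unavoidable), and it requires the case split on the sign of the excess bias (drop-and-keep-$\copt'$ versus collapse-to-constant-$-1$). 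A secondary recurring subtlety, inherited from the homogeneous proof, is that every inequality the testers produce is a statement about the empirical test set and must be transferred to $\Dtest$ by uniform convergence over function classes that depend on $\copt$ and hence cannot be evaluated by the learner, even though they are measurable functions of the unlabeled test points.
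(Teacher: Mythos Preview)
Your proposal is correct and follows essentially the same approach as the paper: subspace retrieval, angular-plus-threshold cover, the general-halfspace local disagreement tester (\Cref{lemma:general-disagreement-tester}), a low-degree moment-matching test to neutralize high-bias halfspaces via \Cref{lemma:moment-matching-concentration}, and a final discrepancy check over the candidate set $\F$.

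The only differences are stylistic. You handle the high-bias halfspaces by an explicit case split on the sign of $\tau^i$ and by seeding $\F$ with the two constant functions, whereas the paper keeps every index $i$ with $\tau^i\le\threshold$ in the cover (so the subspace guarantee applies even to very negative thresholds) and, for $\tau^i\le -\threshold$, simply sets $\theta^i=-\threshold$ and observes that both the true and approximating halfspaces are essentially the constant $-1$ under $\Stest$ by moment matching; no constant hypotheses are needed. Second, you shuttle between $\Stest$ and $\Dtest$ with several interleaved VC steps, while the paper carries out the entire soundness chain on $\Stest$ and invokes a single VC transfer to $\Dtest$ at the end; both routes are valid, but your final paragraph conflates a bound on $\pr_{\Stest}[\hat\concept\ne\concept^*]$ with one on $\pr_{\Dtest}[\concept^*\ne\copt]$ and should be read as first passing one of the two to the other distribution before applying the triangle inequality.
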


\begin{algorithm2e}
\caption{Proper TDS Learner for General Halfspace Intersections}\label{algorithm:tds-general-intersections-appendix}
\KwIn{Labelled set $\Strain$, unlabelled set $\Stest$, parameter $\epsilon$}
Set $\threshold = 3\log^{1/2}(\frac{10k}{\eps})$, $\mdegree \ge \log(10k/\eps)$, $\mslack = d^{-\mdegree}$, $\eps' = \frac{\eps^{3/2}}{Ck^{3/2}}$ and $\eps'' = \frac{\eps^6}{Ck^{3/2}}$, where $C\ge1$ is a sufficiently large constant.\\
\textbf{Reject} and terminate if for some $\mindex\in\N^d$ with $\|\mindex\|_1\le \mdegree$ it holds $|\E_{\x\sim\Stest}[\x^\mindex]-\E_{\x\sim\Gauss}[\x^\mindex]| > \mslack$ \\
Run algorithm $\A$ on set $\Strain$ and let $(\vv^1,\dots,\vv^k)$ be its output. \\
Let $\subspace$ be the subspace spanned by $(\vv^1,\dots,\vv^k)$ and consider the following sparse cover of $\subspace$: 
$\subspace_{\eps''} = \{\frac{\vu}{\|\vu\|_2}: \vu = \eps''\sum_{i=1}^k j_i \vv^i, j_i\in\Z\cap[-\frac{1}{\eps''},\frac{1}{\eps''}],\|\vu\|_2\neq 0\}$ \\
Let $\taus_{\eps'} = \{j\eps': j\in\Z\cap[-\frac{\threshold}{\eps'},\frac{\threshold}{\eps'}]\}$ be a cover of the candidate halfspace biases. \\
\textbf{Reject} and terminate if $\|\var_{\x\sim\Sunlabelled}(\x)\|_2 \ge 2$. \\
\For{$(\vu,\theta)\in \subspace_{\eps''}\times \taus_{\eps'}$}{
    \textbf{Reject} and terminate if $\pr_{\x\sim\Sunlabelled}[|\vu\cdot \x+\theta|\le 2\eps'^{2/3}] > 5\eps'^{2/3}$. \\
} 
Let $\F$ contain the concepts $\concept:\R^d\to\{\pm 1\}$ of the form $\concept(\x)=2\bigwedge_{i=1}^k \ind\{\vu^i\cdot \x +\theta^i\ge 0\} - 1$, where $(\vu^1,\theta^1),\dots,(\vu^k,\theta^k)\in \subspace_{\eps''}\times \taus_{\eps'}$ and $\pr_{(\x,y)\sim \Strain}[y\neq\concept(\x)] \le \eps/5$. \\
\textbf{Reject} and terminate if $\max_{\concept_1,\concept_2\in \F} \pr_{\x\sim \Stest}[\concept_1(\x)\neq \concept_2(\x)] > \eps/2$.\\
\textbf{Otherwise,} output $\hat{\concept}:\R^d\to\cube{}$ for some $\hat\concept\in \F$.
\end{algorithm2e}

\begin{proof}of \Cref{theorem:tds-general-intersections}. The proof is similar to the one of \Cref{theorem:tds-homogeneous-intersections}, but since the intersections are general, there are some additional complications. Let once more $\Strain$ be a set of $\mtrain$ samples from the training distribution, i.e., of the form $(\x,\copt(\x))$, where $\x\sim \Dtrain=\Gauss_d$ and let $\Stest$ be a set of $\mtest$ samples from the test distribution $\Dtest$. Let $C>0$ be a sufficiently large universal constant. Let $\copt:\R^d\to\cube{}$ denote the ground truth, i.e., the intersection of $k$ halfspaces
\[
    \copt(\x) = 2\wedge_{i\in[k]}\ind\{\vec \w^i\cdot \x + \tau^i \ge 0\}-1 \,,\text{ for }\w^1,\dots,\w^k\in \S^{d-1}\text{ and }\tau^1,\dots,\tau^k\in\R
\]
In the following, we will say that an event holds with high probability if it holds with probability sufficiently close to $1$ so that union bounding over all the bad events gives a probability of failure of at most $0.01$. This is possible by choosing $C$ to be a sufficiently large constant.

\paragraph{Soundness.} Suppose that the tests have accepted. We will once more use the subspace retrieval lemma from \cite{vempala2010learning}, but this time we will use a version (\Cref{lemma:subspace-retrieval-general}) that works for arbitrary halfspace intersections. We pick $\threshold = 3\sqrt{\log(10 k/\eps)}$, $\mdegree \ge \log(10k/\eps)$ and $C',C''>0$ sufficiently large universal constants.

Due to \Cref{lemma:subspace-retrieval-general}, the retrieved subspace $\subspace$ has the property that, with high probability, for any $i\in[k]$ with $\tau^i\le \threshold$ we have $\|\proj_{\subspace}\w^i\|_2 \ge 1- (\frac{\eps}{C'k})^3$, as long as $\mtrain \ge \mrec$. Consider once more $\w^i_\subspace = \frac{\proj_{\subspace}\w^i}{\|\proj_{\subspace}\w^i\|_2}$. We have $\measuredangle(\w^i,\w^{i}_\subspace) \le \frac{4\eps^{3/2}}{C'k^{3/2}}$ and for some $\vu^i\in\subspace_{\eps''}$, we have $\measuredangle(\vu^i,\w^i_\subspace)\le \frac{\eps^{3/2}}{C'k^{3/2}}$, whenever $\eps'' \le \frac{\eps^6}{6^4C'k^{7}}$ (which implies $|\subspace_{\eps''}|\le (\frac{2\cdot 6^4 C' k^{7}}{\eps^6})^k$). Therefore, for any $i\in[k]$ that corresponds to a halfspace with bounded bias $\tau^i\le \threshold$, we have $\measuredangle(\vu^i,\w^i) \le (\frac{5\eps}{C'k})^{3/2}$. Moreover, for any such $i$, there is some $\theta^i\in \T_{\eps'}$ that is either close to the $i$-th threshold ($|\theta^i-\tau^i| \le \eps'$) or they are both large enough ($\tau^i\le -\threshold$ and $\theta^i = -\threshold$). Assume without loss of generality that $\{i\in[k]: \tau^i\le \threshold\} = [\ell]$ for some $\ell \le k$.

Consider now the hypothesis $\concept(\x) = 2\wedge_{i\in[\ell]}\ind\{\vu^i\cdot \x + \theta^i\ge 0\} - 1$. Once more, it suffices to show that $\concept$ belongs in the set $\F$ of candidate concepts and that $\concept$ has small test error $\pr_{\x\sim\Stest}[\concept(\x)\neq \copt(\x)] \le \eps/4$, because then for any other candidate concept $\concept'\in\F$, we know that it disagrees with $\concept$ only on a small fraction of test points and, hence, we will have $\pr_{\x\sim\Stest}[\concept'(\x)\neq \copt(\x)]\le 3\eps/4$. By standard VC dimension arguments, this would imply that, whenever $\mtest\ge C\frac{dk\log k}{\eps^2}$, with high probability, the test error of any element of $\F$ satisfies $\pr_{\x\sim\Dtest}[\concept'(\x)\neq \copt(\x)]\le \eps$.

As a first step, we will show that the ground truth is close to the intersection corresponding to the bounded bias halfspaces with respect to both the training and the test examples, i.e., that for $\tilde\copt(\x) = 2\wedge_{i\in[\ell]}\ind\{\w^i\cdot \x + \tau^i\ge 0\}-1$ we have $\pr_{\x\sim\Stest}[\copt(\x)\neq\tilde\copt(\x)] \le \eps/8$ and $\pr_{\x\sim\Strain}[\copt(\x)\neq\tilde\copt(\x)] \le \eps/10$. This is important, because we can then relate $\concept$, $\copt$ through $\tilde\copt$. Since the moment-matching test has accepted, by \Cref{lemma:moment-matching-concentration}, as long as $\mdegree \ge \log(10k/\eps)$ and $\threshold \ge 3\sqrt{\log(10 k/\eps)}$, for any $i>\ell$, we have that $\pr_{\x\sim\Stest}[\sign(\w^i\cdot \x+\tau^i)\neq 1] \le \frac{\eps}{10k}$. Therefore, $\pr_{\x\sim\Stest}[\copt(\x)\neq \tilde\copt(\x)] \le \sum_{i>\ell}\pr_{\x\sim\Stest}[\sign(\w^i\cdot \x+\tau^i)\neq 1] \le \eps/8$, due to a union bound (and the fact that the only possibility that $\copt$ and $\tilde{\copt}$ differ is if some of the omitted halfspaces in $\tilde{\copt}$ becomes negative). Similarly, for $\Strain$, the claim follows with high probability by a standard Hoeffding bound ($\copt$ and $\tilde{\copt}$ do not depend on $\Strain$), as long as $|\Strain| \ge C\frac{k^2}{\eps^2}$.

We will now bound the quantity $\pr_{\x\sim\Stest}[\concept(\x)\neq \tilde{\copt}(\x)]$ by $\eps/8$. Observe that in the case that $|\tau^i|\ge \threshold$, then, by \Cref{lemma:moment-matching-concentration} (as argued above), the corresponding halfspace is constant with probability at least $1-\eps/(10k)$ and the same is true for $\theta^i = \threshold$. Therefore, we may safely omit these terms from $\concept$ and $\tilde{\copt}$ by only incurring an error of at most $\eps/10$. For the remaining terms, we appeal to the tester for local (general) halfspace disagreement of \Cref{lemma:general-disagreement-tester} in order to show that $\pr_{\x\sim\Stest}[\concept(\x)\neq \tilde{\copt}(\x)] \le \eps/8$. In particular, we have that
\begin{align*}
    \pr_{\x\sim\Stest}[\concept(\x)\neq \tilde{\copt}(\x)] &\le k \pr_{\x\sim\Stest}[\sign(\vu^i\cdot \x+\theta^i)\neq \sign(\w^i\cdot \x+\tau^i)] \\
    &\le C'' k(\measuredangle(\vu^i,\w^i))^{2/3} + C''k (\measuredangle(\vu^i,\w^i))\log^{1/2}(1/\eps) \\
    &\le \eps/8
\end{align*}

Finally, we show that the hypothesis $\concept$ lies within $\F$. In particular, $\pr_{\x\sim\Dtrain}[\concept(\x)\neq \tilde{\copt}(\x)] \le k\pr_{\x\sim\Gauss_d}[\sign(\vu^i\cdot \x+\theta^i)\neq\sign(\w^i\cdot \x+\tau^i)] = O( k\threshold\measuredangle(\vu^i,\w^i))$, which is bounded by $\eps/20$ by choosing the constant $C'$ appropriately. By standard VC dimension arguments, we therefore have that $\pr_{\x\sim\Strain}[\concept(\x)\neq \copt(\x)] \le \eps/5$ as long as $\mtrain \ge \frac{Cdk\log k}{\eps^2}$.

\paragraph{Completeness.} To prove completeness, suppose that $\Dtest = \Gauss_d$. Since $\subspace_{\eps''},\F$ do not depend on $\Stest$, we can use Hoeffding bounds to bound the probability of rejection, as well as union bounds over $\F\times \F$ accordingly. In particular, the tester of \Cref{lemma:general-disagreement-tester} will accept with high probability as long as $\mtest \ge C\frac{1}{\eps'^{4/3}} + Cd\log^2d = O(\frac{k^2}{\eps^2} + d\log^2d)$ and the tester of the disagreement probabilities of pairs in $\F$ will accept (due to standard Hoeffding and union bounds) with high probability whenever $\mtest \ge C \frac{1}{\eps^2} \log|\F| = O(\frac{k^2}{\eps^2} \log(\frac{k}{\eps}))$ (since $|\F| = (k/\eps)^{O(k^2)}$ as we need to choose $k$ normals from $\subspace_{\eps''}$ and $k$ elements from $\taus_{\eps'}$). For the moment matching tester, we require that $\mtest \ge Cd^{4\log(k/\eps)}$, since the tester would then have to accept with high probability (see also Lemma D.1 in \cite{klivans2023testable}). 
\end{proof}

By combining \Cref{theorem:tds-general-intersections-appendix} with \Cref{lemma:subspace-retrieval-general,lemma:subspace-retrieval-non-degenerate-appendix,lemma:subspace-retrieval-homogeneous-polar-planes} we obtain the following bounds for TDS learning general halfspace intersections.

\begin{corollary}[TDS Learning Bounds for General Halfspace Intersections]\label{corollary:general-results}
    Let $\bias\in(0,\frac{1}{2})$, $\eps>0$, $\nondegen\ge 1$ and let $\C$ be the class of intersections of $k$ general halfspaces on $\R^d$.
    \begin{enumerate}[label=\textnormal{(}\alph*\textnormal{)}]
        \item There is an $(\eps,\delta=0.02)$-TDS learner for the class $\C_\bias$ of $\bias$-balanced intersections that uses $\tilde{O}(d)(\frac{k}{\eps\bias})^{O(\frac{k^6}{\eps^6})}$ labelled training examples, $d^{O(\log(k/\eps))}$ unlabelled test examples and runs in time $\tilde{O}(d^3)(\frac{k}{\eps\bias})^{O(\frac{k^6}{\eps^6})}+d^{O(\log(k/\eps))}(k/\eps)^{O(k^2)}$.
        \item There is an $(\eps,\delta=0.02)$-TDS learner for the class $\C_\bias^\nondegen$ of $\bias$-balanced and $\nondegen$-non-degenerate intersections that uses $\tilde{O}(d)\cdot \frac{1}{\bias^2}\cdot (\frac{k}{\eps})^{O(\nondegen)}$ labelled training examples, $d^{O(\log(k/\eps))}$ unlabelled test examples and runs in time $\tilde{O}(d^3)\cdot \frac{1}{\bias^2}\cdot (\frac{k}{\eps})^{O(\nondegen)} + d^{O(\log(1/\eps))}(k/\eps)^{O(k^2)}$.
    \end{enumerate}
\end{corollary}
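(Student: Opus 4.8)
The plan is to obtain \Cref{corollary:general-results} by composing \Cref{theorem:tds-general-intersections-appendix} with the subspace-retrieval guarantees of \Cref{section:appendix-pca}, taking the algorithm $\A$ in the theorem to be PCA on the positive training examples (\Cref{algorithm:pca}). Recall that \Cref{theorem:tds-general-intersections-appendix} requires an $\A$ that, with probability at least $0.99$, returns an orthonormal basis of a subspace $\subspace$ with $\|\proj_\subspace\w\|_2\ge 1-(k/(C\eps))^3$ for every normal $\w$ of the ground truth whose threshold is at most $\threshold$, where \Cref{algorithm:tds-general-intersections-appendix} fixes $\threshold=3\log^{1/2}(10k/\eps)$. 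So in both parts I would invoke a retrieval lemma with accuracy parameter $\eps_{\mathrm{rec}}:=\eps^3/(Ck^3)$, failure probability $0.01$, and this same $\threshold$, read off the resulting sample complexity $\mrec$, and substitute it into the conclusion of \Cref{theorem:tds-general-intersections-appendix}: $\mrec+\tilde O(dk^2/\eps^2)$ labelled examples, $d^{O(\log(k/\eps))}$ test examples, and $d^{O(\log(k/\eps))}(k/\eps)^{O(k^2)}$ additional time, on top of the running time of $\A$, which for PCA is $\tilde O(d^2\mrec+d^3)$ (forming the empirical covariance and taking a spectral decomposition). A union bound over the $O(1)$ failure events keeps the total failure probability below $\delta=0.02$.

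For part (a) I would apply \Cref{lemma:subspace-retrieval-general} with its $\eps$ set to $\eps_{\mathrm{rec}}$, its threshold set to $3\log^{1/2}(10k/\eps)$, $\delta\leftarrow 0.01$, and the given $\bias$ (using $\copt\in\C_\bias$). Its sample complexity is $\mrec=dk^4(1/\bias)^{C/\eps_{\mathrm{rec}}^2}2^{C\threshold^2/\eps_{\mathrm{rec}}^2}\log^2(d/\delta)$; since $1/\eps_{\mathrm{rec}}^2=O(k^6/\eps^6)$ and $\threshold^2=O(\log(k/\eps))$, both $(1/\bias)^{C/\eps_{\mathrm{rec}}^2}$ and $2^{C\threshold^2/\eps_{\mathrm{rec}}^2}$ are of the form $(k/(\eps\bias))^{O(k^6/\eps^6)}$, so $\mrec=\tilde O(d)(k/(\eps\bias))^{O(k^6/\eps^6)}$; plugging this into the theorem's bounds gives exactly the claimed training-sample, test-sample, and time complexities. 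For part (b) I would instead use \Cref{lemma:subspace-retrieval-non-degenerate-appendix}, which applies since $\copt\in\C_\bias^\nondegen$; its sample complexity $\mrec=\frac{Cdk^4}{\eps_{\mathrm{rec}}^2\bias^2}e^{\nondegen\threshold^2}\log^2(d/\delta)$ now has $1/\eps_{\mathrm{rec}}^2=\poly(k/\eps)$ and $e^{\nondegen\threshold^2}=(10k/\eps)^{9\nondegen}=(k/\eps)^{O(\nondegen)}$, so (absorbing the $\poly(k/\eps)$ factor into $(k/\eps)^{O(\nondegen)}$, since $\nondegen\ge 1$) $\mrec=\tilde O(d)\cdot\frac{1}{\bias^2}\cdot(k/\eps)^{O(\nondegen)}$, and substitution again yields the stated bounds.

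I do not expect a genuine obstacle here: this is a routine parameter-chase. The one place to be careful is the notational clash that the ``$\eps$'' appearing in the retrieval lemmas is the internal accuracy $\eps_{\mathrm{rec}}=\eps^3/(Ck^3)$ demanded by \Cref{theorem:tds-general-intersections-appendix}, not the target TDS accuracy, and that the seemingly doubly-exponential factor $2^{C\threshold^2/\eps_{\mathrm{rec}}^2}$ collapses to $(k/\eps)^{O(k^6/\eps^6)}$ only because $\threshold=\Theta(\sqrt{\log(k/\eps)})$ and not $\poly(k/\eps)$; for part (b) the analogous point is that $e^{\nondegen\threshold^2}$ is polynomial in $k/\eps$ to the power $O(\nondegen)$. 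Beyond that one should just check that the $\tilde O(dk^2/\eps^2)$ additive term from the theorem and the $\poly(k/\eps)$ prefactors inside $\mrec$ are dominated by the main terms (they are, since $\nondegen\ge 1$ and $k^6/\eps^6\ge 1$), and union-bound the $O(1)$ bad events.
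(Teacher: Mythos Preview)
Your proposal is correct and follows exactly the approach the paper takes: the paper simply states that the corollary follows ``by combining \Cref{theorem:tds-general-intersections-appendix} with \Cref{lemma:subspace-retrieval-general,lemma:subspace-retrieval-non-degenerate-appendix}'' and omits the parameter chase you have spelled out. Your identification of $\eps_{\mathrm{rec}}=\eps^3/(Ck^3)$, $\threshold=3\log^{1/2}(10k/\eps)$, and the simplifications $2^{C\threshold^2/\eps_{\mathrm{rec}}^2}=(k/\eps)^{O(k^6/\eps^6)}$ and $e^{\nondegen\threshold^2}=(k/\eps)^{O(\nondegen)}$ are exactly what is needed, and your accounting of the PCA running time is the right way to get the $\tilde O(d^3)$ factor.
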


\section{SQ Lower Bounds for TDS Learning}\label{section:appendix-sq}

\subsection{SQ Lower Bounds for TDS Learning General Halfspaces}\label{section:appendix-sq-single-halfspaces}

In this section, we provide the proof of the SQ lower bound for TDS learning general halfspaces. Recall that the proof consists of two main steps. First, we reduce the problem of biased halfspace detection of \Cref{definition:bias-halfspace-detection} to TDS learning halfspaces and then we show that the bias halfspace detection problem is hard in the SQ framework. 

\subsubsection{Detecting Biased Halfspaces through TDS Learning}\label{section:detecting-via-tds-appendix}

For the first ingredient we use the following proposition which we restate here for convenience.

\begin{proposition}[Biased Halfspace Detection via TDS Learning]
\label{proposition: tds learning implies biased halfspace detection - appendix}
Let $\mathcal{A}$ be a TDS learning algorithm for general halfspaces
over $\R^{d}$ w.r.t. $\Gauss_d$ with accuracy parameter
$\eps$ and success probability at least $0.95$. Suppose $\mathcal{A}$
obtains at most $m$ samples from the training distribution and accesses
the test distribution via $N$ SQ queries of tolerance $\tol$
(the SQ queries are allowed to depend on the training samples). Then,
there exists an algorithm $(\frac{1}{100m},10\epsilon)$-biased halfspace
detection that uses $N+1$ SQ queries of tolerance $\min\left(\tol,\epsilon\right)$
and has success probability at least $0.8$. 
\end{proposition}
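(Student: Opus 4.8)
The plan is a direct reduction. Given the TDS learner $\mathcal{A}$, the detection algorithm feeds it a \emph{synthetic} labelled training set — $m$ i.i.d.\ draws from $\Gauss_d$, all labelled $-1$ — forwards the (at most) $N$ SQ queries that $\mathcal{A}$ issues to the unknown distribution $\Dgeneric$, and spends one extra SQ query to measure the $\Dgeneric$-mass of the positive region of $\mathcal{A}$'s output. Concretely, fix once and for all a ``nearly constant'' halfspace $\copt_0(\x)=\sign(\e_1\cdot\x-\tau_0)\in\C$ with $\tau_0=\sqrt{2\ln(100m)}$, so that $\pr_{\Gauss_d}[\copt_0(\x)=1]\le e^{-\tau_0^2/2}=\tfrac{1}{100m}=:\alpha$, and set $\beta:=10\eps$. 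The algorithm draws $\x_1,\dots,\x_m\sim\Gauss_d$, simulates $\mathcal{A}$ with accuracy parameter $\eps$ on $\Strain=\{(\x_i,-1)\}_{i\in[m]}$, answering each of $\mathcal{A}$'s SQ queries using $\Dgeneric$ with tolerance $\min(\tol,\eps)\le\tol$; if $\mathcal{A}$ rejects it outputs ``$\Dgeneric$ is far'', and if $\mathcal{A}$ accepts with hypothesis $h$ it makes the query $q_h(\x)=\ind\{h(\x)=1\}\in[0,1]$ with tolerance $\min(\tol,\eps)\le\eps$, receives $v$, and outputs ``far'' iff $v\ge 5\eps$ and ``$\Dgeneric=\Gauss_d$'' otherwise. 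This uses $N+1$ SQ queries, all of tolerance $\min(\tol,\eps)$.

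For correctness I would treat the two cases of the detection problem separately. If $\Dgeneric=\Gauss_d$, couple $\Strain$ with the honest completeness input $\{(\x_i,\copt_0(\x_i))\}_i$ of $\mathcal{A}$ (same points $\x_i$, same internal and oracle randomness, test access equal to $\Gauss_d$ in both): a union bound shows the two inputs differ with probability at most $m\alpha=\tfrac1{100}$, so combining with completeness and soundness of $\mathcal{A}$, except with probability $0.05+0.05+0.01$ the run accepts and the output satisfies $\pr_{\Gauss_d}[h\ne\copt_0]\le\eps$; then $\pr_{\Dgeneric}[h=1]=\pr_{\Gauss_d}[h=1]\le\alpha+\eps$, so $v\le\alpha+2\eps<5\eps$ and the algorithm correctly reports ``Gaussian''. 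If instead $\Dgeneric$ is such that some halfspace $\{\x:\vv\cdot\x\ge\tau\}$ has $\Dgeneric$-mass at least $\beta$ while its $\Gauss_d$-mass is at most $\alpha$, then $\copt(\x)=\sign(\vv\cdot\x-\tau)\in\C$ labels all of $\x_1,\dots,\x_m$ by $-1$ except with probability $m\alpha=\tfrac1{100}$, so up to that event $\Strain$ is distributed exactly like a legitimate realizable input for $\copt$; conditioning also on soundness, if $\mathcal{A}$ accepts then $\pr_{\Dgeneric}[h=1]\ge\pr_{\Dgeneric}[\copt=1]-\eps\ge 9\eps$, hence $v\ge 8\eps\ge5\eps$ and the algorithm reports ``far'', and if $\mathcal{A}$ rejects it also reports ``far''; so it errs with probability at most $0.05+0.01$. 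In both cases the error probability is below $0.2$.

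The one place that needs care is the arithmetic relating the three scales $\alpha=\tfrac1{100m}$, $\eps$, and $\beta=10\eps$: the threshold $5\eps$ separates the two cases precisely because $\alpha+2\eps<5\eps<\beta-2\eps$, i.e.\ $\alpha<6\eps$. This holds in the only regime where $(\alpha,\beta)$-biased halfspace detection is non-degenerate — one needs $\alpha\le\beta$, hence $m\ge\tfrac1{1000\eps}$ — and, more robustly, because any TDS learner for general halfspaces with accuracy $\eps$ must take $\Omega(1/\eps)$ training samples (otherwise it cannot, e.g., locate the positive cap of a halfspace of bias $\eps$), so $\alpha=\tfrac1{100m}\ll\eps$. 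I expect this bookkeeping, together with making the coupling between the synthetic ``all $-1$'' input and an honest realizable input airtight (one should couple the unconditioned executions and pay $m\alpha$, rather than condition on the labels being $-1$, which would distort the distribution of the $\x_i$'s), to be the only subtlety; the structure of the reduction and the roles of completeness (Gaussian $\Rightarrow$ accept $\Rightarrow$ small $v$) and soundness (far $\Rightarrow$ reject, or accept with large $v$) are otherwise routine.
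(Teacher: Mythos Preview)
Your reduction is correct and matches the paper's proof almost verbatim: synthesize an all-$(-1)$ Gaussian training set, forward $\mathcal{A}$'s SQ queries to the unknown $\Dgeneric$, and spend one extra query on $\pr_{\Dgeneric}[h=1]$, thresholding at $\Theta(\eps)$. The case analysis (completeness $\Rightarrow$ accept with small $v$; soundness in the ``far'' case $\Rightarrow$ reject or accept with large $v$) and the coupling/union-bound accounting are the same.

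The only noteworthy difference is how the Gaussian case is handled. You fix a concrete reference halfspace $\copt_0$ with bias exactly $\alpha=\tfrac{1}{100m}$ and then must argue $\alpha<3\eps$ to make the threshold separate the two cases; this drags in the side claim that any TDS learner for general halfspaces needs $m=\Omega(1/\eps)$ training samples. The paper avoids this bookkeeping entirely by a limiting argument: for the Gaussian case it lets $\tau\to\infty$, so the statistical distance between the all-$(-1)$ set and the honest labelled set goes to $0$ and the ``leftover'' bias term vanishes, yielding $\pr_{\Gauss_d}[\hat f\neq -1]\le 2\eps$ directly. Your approach is more explicit but introduces an unnecessary dependence between $\tau_0$ and $m$; simply taking $\tau_0$ arbitrarily large (independently of $m$) in the Gaussian case would eliminate the arithmetic you flag at the end.
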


\begin{proof}
Without loss of generality, suppose that the algorithm $\mathcal{A}$
uses exactly $m$ samples from the training distribution. We use the following
algorithm that uses the TDS learning algorithm $\mathcal{A}$.
\begin{itemize}
\item \textbf{Given: }Statistical query access to distribution $\Dgeneric$ over
$\R^{d}$ with tolerance $\min\left(\phi,\epsilon\right)$.
\item \textbf{Output: }``Accept'' or ``Reject''.
\end{itemize}
\begin{enumerate}
\item Generate $S_{\text{train}}\subset\R^{d}\times\left\{ \pm1\right\} $,
of pairs $(\x^i,$-1), where each $\x^{i}$ is sampled from
$\Gauss_d$.
\item Run the TDS learning algorithm $\mathcal{A}$ on the training set
$S_{\text{train}}$. Every time $\mathcal{\mathcal{A}}$ makes an
SQ query to the test distribution, make the same SQ query to $\Dgeneric$,
and return $\mathcal{A}$ the result. 
\item If $\mathcal{A}$ returns ``Reject'', then our algorithm also returns
``Reject'' and terminates.
\item Otherwise, $\mathcal{A}$ outputs ``Accept'' and a classifier $\hat{f}:\R^{d}\to\left\{ \pm1\right\} $.
\item Using an SQ query, let $\hat{\lambda}$ be an estimate up to additive
error $\min\left(\phi,\epsilon\right)$ of $\pr_{\x\sim \Dgeneric}\left[\hat{f}(\x)=1\right]$.
\item If $\hat{\lambda}>4\epsilon$, then output ``Reject'' and terminate.
\item Otherwise, output ``Accept'' and terminate.
\end{enumerate}
First, we argue that if $\Dgeneric$ is $\Gauss_d$, then the algorithm
above will output ``Accept'' with probability at least $0.8$. For
arbitrarily chosen unit vector $\w$, as a parameter $\tau$
grows to infinity, the statistical distance between $S_{\text{train}}=\left\{ (\x^i,-1)\right\} $
and the set $S_{\text{train}}'=\left\{ (\x^i,\sign\left(\w\cdot\x^i-\tau\right))\right\} $
goes to zero. If $\mathcal{A}$ is given $S_{\text{train}}'$ and
$\Dgeneric=\Gauss_d$, then the definition of TDS learning requires
$\mathcal{A}$ with probability at least $0.95$ to accept and output
a hypothesis $\hat{f}$ satisfying $\pr_{\x\sim\Gauss_d}\left[\hat{f}(\x)\neq\sign\left(\w\cdot\x-\tau\right)\right]\leq\epsilon$.
Taking the parameter $\tau$ to be sufficiently large, we see that if
$\mathcal{A}$ is given $S_{\text{train}}=\left\{ (\x^i,-1)\right\} $
and $\Dgeneric=\Gauss_d$, then with probability at least $0.94$
the algorithm $\mathcal{A}$ accepts and outputs a hypothesis $\hat{f}$
satisfying $\pr_{\x\sim\Gauss_d}\left[\hat{f}(\x)\neq-1\right]\leq2\epsilon$.
Therefore, the estimate $\hat{\lambda}$ will be at most $3\epsilon$,
and we will thus output ``Accept''.

Now, suppose $\Dgeneric$ is such that for some unit vector $\vv$ and
$\tau\in\R$ we have $\pr_{\x\sim \Dgeneric}[\x\cdot\vv\geq \tau]\geq10\epsilon$
and $\pr_{\x\sim\Gauss_d}[\x\cdot\vv\geq \tau]\leq\frac{1}{100m}$.
Besed on the set $S_{\text{train}}=\left\{ (\x^i,-1)\right\} $,
define the set $S_{\text{train}}''$ as $S_{\text{train}}''=\left\{ (\x^i,\sign\left(\vv\cdot\x^i-\tau\right))\right\} $.
If the algorithm $\mathcal{A}$ were given the set $S_{\text{train}}''$
instead of $S_{\text{train}}$ as the training set, then the definition
of TDS learning would require $\mathcal{\mathcal{A}}$ with probability
at least $0.95$ either to output ``Reject'' or give a hypothesis
$\hat{f}$ satisfying $\pr_{\x\sim \Dgeneric}\left[\hat{f}(\x)\neq\sign\left(\vv\cdot\x-\tau\right)\right]\leq\epsilon$.
Since $\pr_{\x\sim\Gauss_d}[\x\cdot\vv\geq \tau]\leq\frac{1}{100m}$
and $\left|S_{\text{train}}\right|=\left|S_{\text{train}}''\right|=m$,
we see via a union bound that that the statistical distance between
$S_{\text{train}}$ and $S_{\text{train}}''$ is at most $0.01$.
Thus, in the algorithm above, the algorithm $\mathcal{\mathcal{A}}$
with probability at least $0.94$ indeed either outputs ``Reject''
or gives a hypothesis $\hat{f}$ satisfying $\pr_{\x\sim \Dgeneric}\left[\hat{f}(\x)\neq\sign\left(\vv\cdot\x-\tau\right)\right]\leq\epsilon$.
In the former case, our algorithm will also output ``Reject''. In
the latter case we will have $\hat{\lambda}>9\epsilon$, since $\Dgeneric$
is such that $\pr_{\x\sim \Dgeneric}[\x\cdot\vv\geq \tau]\geq10\epsilon$.
Therefore, in this case too our algoirthm outputs ``Reject'', which
completes the proof.
\end{proof}

\subsubsection{Lower Bounds for Detecting Biased Halfspaces}

We now provide a proof for the second ingredient, namely, that no efficient SQ algorithm can solve the problem of detecting biased halfspaces, i.e., the following proposition (restated here for convenience).

\begin{proposition}[SQ Lower Bounds for Biased Halfspace Detection]
\label{proposition: detecting biased halfspaces is hard - appendix} For $\epsilon>0$,
set $d=\frac{1}{\epsilon^{1/4}}$. Then, for all
sufficiently small $\epsilon$, the following is true. Suppose $\mathcal{A}$
is an SQ algorithm for $(d^{-\ln(1/\epsilon)},10\epsilon)$-biased
halfspace detection problem over $\R^{d}$, and $\mathcal{A}$ has
a success probability of at least $2/3$. Then, $\mathcal{A}$ either
has to use SQ tolerance of $d^{-\Omega(\frac{\log1/\epsilon}{\log\log1/\epsilon})}$,
or make $2^{d^{\Omega(1)}}$ SQ queries.
\end{proposition}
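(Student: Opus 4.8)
The plan is to derive the Proposition by instantiating the generic moment‑matching SQ lower bound for Non‑Gaussian Component Analysis of \cite{diakonikolas2023sq}: that tool guarantees that if $A$ is a distribution on $\R$ whose first $\mdegree$ moments coincide with those of $\mathcal{N}(0,1)$ and whose $\chi^2$‑divergence from $\mathcal{N}(0,1)$ is suitably bounded, then no SQ algorithm can distinguish $\Gauss_d$ from the hidden‑direction family $\{P_\vv:\vv\in\S^{d-1}\}$ (where $P_\vv$ has one‑dimensional marginal $A$ along $\vv$ and is standard Gaussian on $\vv^\perp$) with success probability $2/3$ unless it makes $2^{d^{\Omega(1)}}$ queries or poses a query of tolerance $d^{-\Omega(\mdegree)}$. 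Given such an $A$, the reduction to biased halfspace detection is immediate: set $\tau$ to be the smallest value with $\pr_{z\sim\mathcal{N}(0,1)}[z\ge\tau]\le d^{-\ln(1/\epsilon)}$, which for $d=\epsilon^{-1/4}$ is $\tau=\Theta(\log(1/\epsilon))$; if $A$ moreover places mass at least $10\epsilon$ on $[\tau,\infty)$, then every $P_\vv$ satisfies $\pr_{\x\sim P_\vv}[\x\cdot\vv\ge\tau]=A([\tau,\infty))\ge 10\epsilon$ while $\pr_{\x\sim\Gauss_d}[\x\cdot\vv\ge\tau]\le d^{-\ln(1/\epsilon)}$, so any $(d^{-\ln(1/\epsilon)},10\epsilon)$‑biased halfspace detector distinguishes $\Gauss_d$ from this family using the same queries, and the SQ lower bound transfers verbatim. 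Thus everything reduces to constructing, for each sufficiently small $\epsilon$, a one‑dimensional $A$ that (i) exactly matches the first $\mdegree=\Theta(\tfrac{\log 1/\epsilon}{\log\log 1/\epsilon})$ moments of $\mathcal{N}(0,1)$, (ii) has $\chi^2(A,\mathcal{N}(0,1))$ within the range tolerated by the NGCA tool, and (iii) carries mass at least $10\epsilon$ on $[\tau,\infty)$.

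I would build $A$ in two stages, following the technique overview. First, via a mass‑transportation argument, I would produce an absolutely continuous $\tilde A$ that equals $\mathcal{N}(0,1)$ outside a bounded window $[-R,R]$ together with a bump of mass $\approx 10\epsilon$ spread over a short interval just past $\tau$; the bump is balanced against a small redistribution of mass inside $[-R,R]$ so that the first $\mdegree$ moments of $\tilde A$ differ from those of $\mathcal{N}(0,1)$ by at most a negligible $\zeta\le d^{-\mdegree^{2}}$. The feasibility of this step is exactly the calculation that fixes the number of moments: moving mass $10\epsilon$ out to distance $\tau$ perturbs the $j$‑th moment by at most $10\epsilon\,\tau^{j}$, and this can be compensated within a bounded window precisely as long as $10\epsilon\,\tau^{\mdegree}$ stays well below $(\mdegree-1)!!$, i.e. as long as $\mdegree=O(\tfrac{\log 1/\epsilon}{\log\log 1/\epsilon})$; a larger $\mdegree$ is infeasible (the required even moments would be forced above those of $\mathcal{N}(0,1)$), while a smaller $\mdegree$ would only weaken the final tolerance bound. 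One also checks that $\chi^2(\tilde A,\mathcal{N}(0,1))$ is dominated by the bump's contribution and is bounded in a way compatible with the NGCA tool. Second, to upgrade approximate matching to \emph{exact} matching I would invoke the linear‑programming argument of \cite{pmlr-v195-diakonikolas23b-sq-mixtures}: the probability distributions supported on $[-R,R]\cup(\text{bump interval})$ whose first $\mdegree$ moments equal those of $\mathcal{N}(0,1)$ form a polytope (cut out by Hankel‑type positivity/Chebyshev‑system conditions), and since $\tilde A$ lies $\zeta$‑close to it while keeping the bump and a bounded $\chi^2$, the moment map being well‑conditioned on a bounded window lets us perturb $\tilde A$ onto that polytope, yielding a genuine probability distribution $A$ with properties (i)--(iii), at the cost of only constant‑factor changes to the bump mass and the $\chi^2$ bound.

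With $A$ in hand, applying the NGCA moment‑matching lower bound of \cite{diakonikolas2023sq} to $\{P_\vv\}$ shows that distinguishing it from $\Gauss_d$ requires $2^{d^{\Omega(1)}}$ SQ queries or a query of tolerance $d^{-\Omega(\mdegree)}=d^{-\Omega(\log(1/\epsilon)/\log\log(1/\epsilon))}$, and composing with the reduction above gives exactly the claimed statement; feeding this into \Cref{proposition: tds learning implies biased halfspace detection} then yields \Cref{theorem: TDS learning requires d to the log}. The main obstacle is the construction in the second paragraph: one must simultaneously (a) match enough moments to get the $d^{-\Omega(\log(1/\epsilon)/\log\log(1/\epsilon))}$ tolerance, (b) place $\Omega(\epsilon)$ mass at distance $\Omega(\log 1/\epsilon)$ — which is forced, since $\Gauss_d$ assigns only $d^{-\ln(1/\epsilon)}$ mass that far out, so no reduction through a lighter tail is possible — and (c) keep $\chi^2(A,\mathcal{N}(0,1))$ within the tool's hypotheses; the delicate point is extracting an \emph{exact} moment match from the LP/Chebyshev‑system step without destroying (b) or (c), and pinning down $\mdegree=\Theta(\tfrac{\log 1/\epsilon}{\log\log 1/\epsilon})$ as simultaneously achievable and optimal for this construction. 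The remaining pieces — the reduction to biased halfspace detection and the query/tolerance bookkeeping — are routine.
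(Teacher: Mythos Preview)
Your proposal is correct and follows the same three-step skeleton as the paper: (1) a mass-transportation construction giving approximate moment matching with a heavy tail, (2) an LP/duality argument from \cite{pmlr-v195-diakonikolas23b-sq-mixtures} to upgrade to exact moment matching, and (3) the NGCA moment-matching tool of \cite{diakonikolas2023sq} to conclude. Two simplifications in the paper are worth noting. First, the paper's mass transport is cruder than yours: it simply moves the $13\epsilon$ of Gaussian mass sitting in $[0,\tau]$ (with $\tau<\epsilon$) to the single point $t=\ln(1/\epsilon)$, with no compensating redistribution; the moment error incurred is at most $13\epsilon\cdot t^{10k_0}\le 12\epsilon^{0.99}$, which is already below the $k_0^{-10k_0}$ threshold needed for the LP step, so your ``balancing inside $[-R,R]$'' is unnecessary. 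Second, the version of the NGCA tool the paper invokes (their Theorem~\ref{theorem: moment matching implies lowe bound}) requires only exact moment matching up to degree $k$, not a $\chi^2$ bound, so you can drop that constraint entirely. One detail you omit that the paper needs: the LP argument as stated requires \emph{finite} support, so the paper inserts a discretization step (empirical distribution of many i.i.d.\ samples from the continuous construction) between stages (1) and (2); you should do the same or else argue the LP step directly for signed measures on an interval.
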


To prove the above claim, we first construct a one-dimensional
distribution $\Dgeneric_{1}$ that approximately matches the low-degree moments of $\Gauss_d$, while having a lot of probability mass above a certain threshold.

\begin{proposition}
\label{proposition: can match low-degree moments approximately }
For $\epsilon>0$,
let $k_{0}$ be defined as $k_{0}=\frac{\ln1/\epsilon}{100\ln\ln1/\epsilon}$.
If $\epsilon$ is sufficiently small, then there exists a distribution
$\Dgeneric_{1}$ supported on a finite subset of $\R$, satisfying 
\[
\left|\E_{x\sim \Dgeneric_{1}}\left[x^{i}\right]-\E_{x\sim\Gauss_1}\left[x^{i}\right]\right|\leq\frac{1}{k_{0}^{10k_{0}}},
\]
for every $i\in\left\{ 0,\cdots,10k_{0}\right\} $ while also satisfying
$\pr_{x\sim \Dgeneric_{1}}[x\geq t]\geq12\epsilon,$ for some $t$ for which
$\pr_{x\sim\Gauss_1}[x\geq t]\leq\epsilon^{\frac{1}{4}\ln1/\epsilon}.$
\end{proposition}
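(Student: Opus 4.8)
The plan is to realize $\Dgeneric_1$ as a small perturbation of a finitely-supported distribution that matches many low-degree moments of $\Gauss_1$ \emph{exactly}: first discretize $\Gauss_1$ by Gaussian quadrature to get a distribution $\mu_0$ with the correct moments up to degree $10k_0$, and then push a $12\epsilon$ fraction of its mass out to a single far-away point $t$. Concretely, set $t:=\ln(1/\epsilon)$ and
\[
    \Dgeneric_1 := (1-12\epsilon)\,\mu_0 + 12\epsilon\,\delta_t,
\]
where $\delta_t$ is the point mass at $t$ and $\mu_0$ is the quadrature distribution described next. The whole argument hinges on the specific shape of $k_0=\frac{\ln 1/\epsilon}{100\ln\ln 1/\epsilon}$: it is tuned so that both the target tolerance $k_0^{-10k_0}$ and the moment damage inflicted by the bump, of order $\epsilon\cdot t^{10k_0}$, are only polynomially small in $\epsilon$, with the damage carrying a strictly smaller power of $1/\epsilon$.

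First I would record two elementary facts. (i) For $t=\ln(1/\epsilon)$ the standard Gaussian tail bound gives $\pr_{x\sim\Gauss_1}[x\ge t]\le \tfrac12 e^{-t^2/2}=\tfrac12\,\epsilon^{\frac12\ln 1/\epsilon}\le \epsilon^{\frac14\ln 1/\epsilon}$ for all sufficiently small $\epsilon$, which is exactly the requirement on $t$. (ii) By Gauss--Hermite quadrature with respect to the probabilists' Hermite polynomials, for $n:=\lceil 5k_0\rceil+1$ there is a distribution $\mu_0$ supported on $n$ real points, all of magnitude $O(\sqrt n)=O(\sqrt{k_0})$, with nonnegative weights summing to $1$, such that $\E_{x\sim\mu_0}[x^i]=\E_{x\sim\Gauss_1}[x^i]$ for every integer $i\le 2n-1$, in particular for all $i\le 10k_0$. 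Since $\sqrt{k_0}\ll t$ for small $\epsilon$, the support of $\mu_0$ lies strictly inside $(-t,t)$, so $\pr_{x\sim\Dgeneric_1}[x\ge t]= 12\epsilon\ge 12\epsilon$, which is the second conclusion; $\Dgeneric_1$ is finitely supported by construction.

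It then remains to bound the moment error. For $0\le i\le 10k_0$,
\[
    \E_{x\sim\Dgeneric_1}[x^i] = (1-12\epsilon)\,\E_{x\sim\Gauss_1}[x^i] + 12\epsilon\, t^i,
\]
so $\bigl|\E_{x\sim\Dgeneric_1}[x^i]-\E_{x\sim\Gauss_1}[x^i]\bigr| = 12\epsilon\,\bigl|t^i-\E_{x\sim\Gauss_1}[x^i]\bigr| \le 12\epsilon\bigl(t^{10k_0}+(10k_0)^{5k_0}\bigr)$, using $t\ge 1$ and $|\E_{x\sim\Gauss_1}[x^i]|\le (10k_0)^{5k_0}$ for all $i\le 10k_0$. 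Taking logarithms and using $t=\ln(1/\epsilon)$ together with $\ln k_0\le \ln\ln(1/\epsilon)$ yields the key cancellation $t^{10k_0}=e^{10k_0\ln\ln(1/\epsilon)}=\epsilon^{-1/10}$, and similarly $(10k_0)^{5k_0}\le \epsilon^{-1/10}$ for small $\epsilon$; hence the moment error is at most $24\,\epsilon^{9/10}$. The same log computation gives $k_0^{10k_0}=e^{10k_0\ln k_0}\le \epsilon^{-1/10}$, i.e. $k_0^{-10k_0}\ge \epsilon^{1/10}$. Since $24\,\epsilon^{9/10}\le \epsilon^{1/10}$ for $\epsilon$ sufficiently small, the required bound $\bigl|\E_{x\sim\Dgeneric_1}[x^i]-\E_{x\sim\Gauss_1}[x^i]\bigr|\le k_0^{-10k_0}$ follows.

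The only genuinely delicate point is this last quantitative balancing: one must verify that the perturbation $12\epsilon\cdot t^i$ for $i\le 10k_0$ stays below $k_0^{-10k_0}$, which succeeds precisely because the Gaussian-tail requirement forces $t$ to be only polylogarithmic in $1/\epsilon$, while the exponent $10k_0$ is calibrated — through the $\frac{1}{100\ln\ln 1/\epsilon}$ factor in the definition of $k_0$ — so that $t^{10k_0}$ and $k_0^{10k_0}$ are both exactly $\epsilon^{-1/10}$, leaving the spare factor $\epsilon$ in front of the bump to close the gap. Everything else (the Gaussian tail estimate, the existence, positivity, and $O(\sqrt{k_0})$ support size of the quadrature distribution, and the crude bound on Gaussian moments) is standard, so I would state those briefly and devote the write-up to making the balancing inequality and the choice of constants precise.
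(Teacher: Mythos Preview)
Your argument is correct, and the quantitative core — balancing $\epsilon\cdot t^{10k_0}$ against $k_0^{-10k_0}$ with the choice $t=\ln(1/\epsilon)$ — is exactly the computation the paper does. The construction, however, is organized differently. The paper starts from $\Gauss_1$ itself, transports the mass lying in a tiny interval $[0,\tau]$ (with $\pr_{\Gauss_1}[x\in[0,\tau]]=13\epsilon$) to the point $t$, and then passes to a finitely-supported distribution only at the end by taking an empirical distribution of many i.i.d.\ samples and invoking convergence in probability. Your route instead begins with a finitely-supported object — the Gauss--Hermite quadrature measure matching the first $10k_0$ moments exactly — and mixes in the point mass directly. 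Both lead to the identical perturbation bound $O(\epsilon\cdot t^{10k_0})$; your version avoids the extra discretization step and yields an explicit support of size $O(k_0)$, at the cost of invoking the (standard) positivity of quadrature weights and the $O(\sqrt{k_0})$ bound on Hermite roots, which you should cite or state precisely in the write-up.
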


\begin{proof}
We will first construct a distribution $\Dgeneric_{1}'$ that satisfies the
conditions above, but does not have finite support. Afterwards, we
will discretize $\Dgeneric_{1}'$.

We take $t:=\ln1/\epsilon$ and observe that 
\begin{align}
\pr_{x\sim\Gauss_1}[x\geq t]&=\frac{1}{\sqrt{2\pi}}\int_{\ln1/\epsilon}^{\infty}e^{-x^{2}/2}\;d x\leq\frac{e^{-\left(\ln1/\epsilon\right)^{2}/2}}{\sqrt{2\pi}}\int_{0}^{\infty}e^{-x\ln1/\epsilon}\;d x \nonumber\\
&=\underbrace{\frac{e^{-\left(\ln1/\epsilon\right)^{2}/2}}{\sqrt{2\pi}\ln1/\epsilon}\leq\epsilon^{\frac{1}{4}\ln1/\epsilon}}_{\text{For \ensuremath{\epsilon} sufficiently small.}}.\label{eq: gaussian tail}
\end{align}
Let $\tau$ be the real number for which $\pr_{x\sim\Gauss_1}\left[x\in\left[0,\tau\right]\right]=13\epsilon$.
From Equation \ref{eq: gaussian tail}, we see that for all sufficiently
small $\epsilon$ it is the case that $\tau<\epsilon$. We define
$\Dgeneric_{1}'$ the following way: to sample $z\sim \Dgeneric_{1}'$ (i) sample $x\sim\Gauss_1$
(ii) if $x\in[0,\tau]$, then $z=t$ (iii) otherwise, $z=x$. Since
$\pr_{x\sim\Gauss_1}\left[x\in\left[0,\tau\right]\right]=13\epsilon$,
we see that $\pr_{x\sim \Dgeneric_{1}'}\left[x\geq t\right]\geq13\epsilon.$
Furthermore, we see that for every $i\in\left\{ 0,\cdots,10k_{0}\right\} $
\begin{multline*}
\left|\E_{x\sim \Dgeneric_{1}'}\left[x^{i}\right]-\E_{x\sim\Gauss_1}\left[x^{i}\right]\right|\leq t^{k_{0}}\pr_{x\sim\Gauss_1}\left[x\in\left[0,\tau\right]\right]=12\epsilon\cdot(\ln1/\epsilon)^{\frac{\ln1/\epsilon}{100\ln\ln1/\epsilon}}=\\
=12\epsilon^{0.99}\underbrace{\leq\frac{1}{2}\cdot\left(\frac{100\ln\ln1/\epsilon}{\ln1/\epsilon}\right)^{\frac{\ln1/\epsilon}{10\ln\ln1/\epsilon}}}_{\text{For \ensuremath{\epsilon} sufficiently small.}}=\frac{1}{2k_{0}^{10k_{0}}}.
\end{multline*}

Overall, we have so far shown that $\pr_{x\sim \Dgeneric_{1}'}\left[x\geq t\right]\geq13\epsilon$
and $\left|\E_{x\sim \Dgeneric_{1}'}\left[x^{i}\right]-\E_{x\sim\Gauss_1}\left[x^{i}\right]\right|<\frac{1}{2k_{0}^{10k_{0}}}$,
but $\Dgeneric_{1}'$ is not supported on a finite subset of $\R$. We will
now construct a finitely-supported distribution $\Dgeneric_{1}$ via the probabilistic
method. Obtain $\Dgeneric_{1}$ as the empirical distribution over $K$ i.i.d.
samples from $\Dgeneric_{1}'$. Since all moments of $\Dgeneric_{1}'$ are bounded,
as $K$ grows to infinity, for all $i\in\left\{ 0,\cdots,10k_{0}\right\} $
the quantity $\E_{x\sim \Dgeneric_{1}}\left[x^{i}\right]$ converges in probability
to $\E_{x\sim \Dgeneric_{1}'}\left[x^{i}\right]$, and the quantity $\pr_{x\sim \Dgeneric_{1}}\left[x\geq t\right]$
converges in probility to $\pr_{x\sim \Dgeneric_{1}'}\left[x\geq t\right]$.
Thus, for a sufficiently large $K$, we have $\pr_{x\sim \Dgeneric_{1}}\left[x\geq t\right]\geq12\epsilon$
and $\left|\E_{x\sim \Dgeneric_{1}}\left[x^{i}\right]-\E_{x\sim\Gauss_1}\left[x^{i}\right]\right|<\frac{1}{k_{0}^{10k_{0}}}$, with non-zero probability over
the choice of $\Dgeneric_{1}$,
which completes the proof.
\end{proof}

We now apply the following theorem which is implicit in \cite{pmlr-v195-diakonikolas23b-sq-mixtures} to obtain a distribution $\Dgeneric$
over $\R$ that has a lot of probability mass above a certain threshold
and whose moments match $\Gauss_1$ exactly. 

\begin{theorem}[Implicit in \cite{pmlr-v195-diakonikolas23b-sq-mixtures}]
\label{theorem: from approx to exact mom match}Let $k$ be a sufficiently
large positive integer and let $\Dgeneric_{0}$ be a distribution supported
on a finite subset of $\R$, and suppose that for every $i\in\left\{ 0,\cdots,10k\right\} $
we have 
\begin{equation}
\left|\E_{x\sim \Dgeneric_{0}}\left[x^{i}\right]-\E_{x\sim\Gauss_1}\left[x^{i}\right]\right|\leq\frac{1}{k^{10k}},\label{eq: D0 has moments close to Gaussian}
\end{equation}
then there exists a distribution $\Dgeneric_{1}$ with the same support as
$\Dgeneric_{0}$ with $\E_{x\sim \Dgeneric_{0}}\left[x^{i}\right]=\E_{x\sim\Gauss_1}\left[x^{i}\right]$
for every $i\in\left\{ 0,\cdots,k\right\} $, and also satisfying
\[
\pr_{x\sim \Dgeneric_{1}}[x=x_{0}]\geq0.9\pr_{x\sim \Dgeneric_{0}}[x=x_{0}]
\]
for every $x_{0}$ in the support of $\Dgeneric_{0}$.
\end{theorem}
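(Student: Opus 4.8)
The plan is to realize $\Dgeneric_1$ as a reweighting of $\Dgeneric_0$ by a factor $1+g$, where $g$ is a polynomial of degree at most $k$ with $|g(x)|\le 0.1$ on $\supp(\Dgeneric_0)$. Any such reweighting automatically preserves the support and gives $\pr_{x\sim\Dgeneric_1}[x=x_0]\ge 0.9\,\pr_{x\sim\Dgeneric_0}[x=x_0]$ for every atom, so the whole task reduces to choosing the $k+1$ coefficients of $g$ so that the moments of degree $0,\dots,k$ come out exactly equal to those of $\Gauss_1$. Writing $\delta_i:=\E_{x\sim\Gauss_1}[x^i]-\E_{x\sim\Dgeneric_0}[x^i]$ (so $\delta_0=0$ and $|\delta_i|\le k^{-10k}$ for $i\le 10k$ by hypothesis), this requirement is the linear system $\E_{x\sim\Dgeneric_0}[x^i g(x)]=\delta_i$ for $i=0,\dots,k$, and the $i=0$ equation also forces $\Dgeneric_1$ to have total mass $1$. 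One can equivalently phrase the existence of such a $g$ as the feasibility of a linear program and invoke LP duality, but the explicit reweighting ansatz is cleaner to push through.

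One complication has to be dispatched first: $\Dgeneric_0$ may place a negligible amount of mass at points $x_j$ with $|x_j|$ enormous, where no low-degree polynomial can be kept below $0.1$. So I would split $\supp(\Dgeneric_0)$ at radius $R:=k^2$ into a bulk part $A\subseteq[-R,R]$ and a tail part $B$, leave $\Dgeneric_0$ untouched on $B$ (atoms there satisfy the domination bound trivially), and reweight only on $A$. Markov's inequality applied to the degree-$10k$ moment of $\Dgeneric_0$ — which by hypothesis is within $k^{-10k}$ of that of $\Gauss_1$, hence at most roughly $(10k)^{5k}$ — gives $\Dgeneric_0(B)\le (10k)^{5k}/R^{10k}$, vastly smaller than $k^{-10k}$; by H\"older, the contribution of $B$ to every moment of degree at most $2k$ is correspondingly negligible. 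The moment targets for the reweighting on $A$ are then still exactly the $\delta_i$, and the moments of $\Dgeneric_0|_A$ (in particular the Gram matrix below) differ from those of the full $\Dgeneric_0$ by a negligible amount.

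The crux is solving the linear system with coefficients of $g$ small enough to guarantee $|g|\le 0.1$ on $A$, and here the choice of basis is decisive. In the monomial basis the relevant Gram matrix is (close to) the Hankel matrix $(\E_{x\sim\Gauss_1}[x^{i+j}])_{i,j\le k}$ of Gaussian moments, whose smallest eigenvalue is only of order $k^{-\Theta(k^2)}$ — far too small to be inverted reliably against a $k^{-10k}$ perturbation. Instead I would expand $g=\sum_{l\le k}\lambda_l h_l$ in the Hermite polynomials $h_l$ orthonormal with respect to $\Gauss_1$. The Gram matrix then becomes $G_{il}:=\E_{x\sim\Dgeneric_0|_A}[h_i(x)h_l(x)]$; since $h_ih_l$ has degree at most $2k\le 10k$ and $\E_{x\sim\Gauss_1}[h_ih_l]=\mathbf 1\{i=l\}$, and since Hermite coefficients inflate the $k^{-10k}$ moment errors (and the tiny tail corrections) by only a $k^{O(k)}$ factor, $G$ differs from the identity by $k^{-\Omega(k)}$ in operator norm, hence is invertible with $\|G^{-1}\|\le 2$. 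Consequently $\lambda=G^{-1}\delta'$ (with $\delta'$ the target vector rewritten in the Hermite basis, still of size $k^{-\Omega(k)}$) has $\|\lambda\|_2\le k^{-\Omega(k)}$, and on $A$, where $|x|\le R=k^2$, the coefficient bound $|h_l(x)|\le\sqrt{l!}\,(1+|x|)^l\le k^{O(k)}$ yields $|g(x)|\le\|\lambda\|_2\cdot k^{O(k)}\le k^{-\Omega(k)}\ll 0.1$ for $k$ large. Then $\Dgeneric_1:=\Dgeneric_0|_B+(1+g)\,\Dgeneric_0|_A$ is a genuine probability distribution with the same support as $\Dgeneric_0$, dominating $0.9\,\Dgeneric_0$ on every atom, and matching the degree-$\le k$ moments of $\Gauss_1$ exactly. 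The main obstacle is exactly this conditioning point, and it is why the hypothesis is stated at degree $10k$ with error $k^{-10k}$ rather than just at degree $k$: one needs enough moments, at fine enough precision, that after passing to an orthonormal basis the system is well conditioned with room to spare for both the Hermite blow-up and the tail truncation.
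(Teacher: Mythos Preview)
Your argument is correct, but it takes a genuinely different route from the paper's. The paper writes down the feasibility LP for an arbitrary reweighting $\mu_{x_j}\ge 0.9$ with $\E_{x\sim\Dgeneric_0}[\mu_x\,p(x)]=\E_{x\sim\Gauss_1}[p(x)]$ for all degree-$\le k$ polynomials $p$, and then shows the dual LP (asking for a nonnegative separating polynomial $p$) is infeasible: normalize so $\E_{\Gauss_1}[p^2]=1$, use the moment hypothesis at degrees $\le 4k$ to transfer $\E[p],\E[p^2],\E[p^4]$ between $\Gauss_1$ and $\Dgeneric_0$, and derive a contradiction from the H\"older-type bound $\E_{\Dgeneric_0}[|p|]\ge (\E_{\Dgeneric_0}[p^2])^{3/2}/\E_{\Dgeneric_0}[p^4]$.

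You instead restrict to reweightings of the special form $1+g$ with $g$ a degree-$\le k$ polynomial, and solve for its Hermite coefficients by inverting a Gram matrix that is $k^{-\Omega(k)}$-close to the identity once you truncate the tail. Your approach is more constructive (the reweighting factor is an explicit low-degree polynomial), makes the role of the $10k$ versus $k$ and the $k^{-10k}$ precision completely transparent as a conditioning issue, and avoids LP duality entirely. The paper's approach, by contrast, avoids the explicit tail-truncation step (since $\mu_{x_j}$ is unconstrained above) and packages the analytic input into a single H\"older/anticoncentration inequality rather than a matrix-norm estimate. Both arguments ultimately exploit the same fact --- that degree-$2k$ polynomial statistics under $\Dgeneric_0$ are close to those under $\Gauss_1$ --- but from opposite ends of the primal/dual picture.
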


The proof is equivalent to the proof given by \cite{pmlr-v195-diakonikolas23b-sq-mixtures}, but is provided here with slight modifications for completeness. We will need the following fact.

\begin{fact}
\label{fact: normalized polynomials have not too large coefficients}Let
$p$ be a polynomial over $\R$ of degree at most $k$, and let $\E_{x\sim\Gauss_1}\big[\big(p(x)\big)^{2}\big]\leq1$.
Then, each coefficient of $p$ has absolute value of at most $2^{k+1}$.
\end{fact}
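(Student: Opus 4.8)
The plan is to expand $p$ in the orthogonal basis of (probabilist's) Hermite polynomials with respect to $\Gauss_1$, observe that the hypothesis $\E_{x\sim\Gauss_1}[p(x)^2]\le 1$ forces every Hermite coefficient to be bounded by $1$, and then translate back to the monomial basis using the explicit coefficients of the Hermite polynomials. Write $p(x)=\sum_{j=0}^{k} a_j x^j$; the goal is to show $|a_j|\le 2^{k+1}$ for every $j$.

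First I would recall the probabilist's Hermite polynomials $He_0,He_1,\dots$, which satisfy $\E_{x\sim\Gauss_1}[He_m(x)He_n(x)] = n!\,\Ind\{m=n\}$, so that the normalized polynomials $h_n := He_n/\sqrt{n!}$ form an orthonormal system in $L^2(\Gauss_1)$. Since $\deg p\le k$ and $h_0,\dots,h_k$ have degrees $0,\dots,k$, we may write $p=\sum_{n=0}^{k} c_n h_n$, and Parseval's identity gives $\sum_{n=0}^{k} c_n^2 = \E_{x\sim\Gauss_1}[p(x)^2]\le 1$, hence $|c_n|\le 1$ for all $n$. Next I would use the standard closed form $He_n(x)=n!\sum_{m=0}^{\lfloor n/2\rfloor}\frac{(-1)^m}{2^m\,m!\,(n-2m)!}\,x^{n-2m}$ to read off that the coefficient of $x^j$ in $h_n$ equals $\frac{(-1)^m\sqrt{(j+2m)!}}{2^m\,m!\,j!}$ when $n=j+2m$ with $j+2m\le k$, and vanishes otherwise. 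Combining this with $|c_n|\le 1$ and the triangle inequality yields
\[
    |a_j| \;\le\; \sum_{m\,\ge\,0\,:\; j+2m\,\le\,k} \frac{\sqrt{(j+2m)!}}{2^m\,m!\,j!}\,.
\]

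It then remains to bound this sum by $2^{k+1}$. Using $\binom{j+2m}{j}=\frac{(j+2m)!}{j!\,(2m)!}\le 2^{j+2m}\le 2^{k}$ and $\binom{2m}{m}=\frac{(2m)!}{(m!)^2}\le 4^m$, one gets $(j+2m)!\le 2^{k}\,j!\,(2m)!\le 2^{k}\,j!\,4^m(m!)^2$, so each summand is at most $\frac{2^{k/2}\sqrt{j!}\cdot 2^m m!}{2^m\,m!\,j!}=\frac{2^{k/2}}{\sqrt{j!}}\le 2^{k/2}$. Since $m$ ranges over at most $\lfloor(k-j)/2\rfloor+1\le k/2+1$ values and $k/2+1\le 2^{k/2+1}$ for every $k\ge 0$, we conclude $|a_j|\le (k/2+1)\,2^{k/2}\le 2^{k/2+1}\cdot 2^{k/2}=2^{k+1}$, as required. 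The only mildly delicate point is calibrating these crude factorial estimates so the term-by-term bound still lands below $2^{k+1}$; the rest is routine bookkeeping with the standard Hermite expansion, so I do not anticipate a genuine obstacle. (One could alternatively replace the triangle inequality by Cauchy--Schwarz, $|a_j|\le\sqrt{\sum_n c_n^2}\cdot\sqrt{\sum_n([x^j]h_n)^2}$, which gives an even tighter constant, but this is not needed.)
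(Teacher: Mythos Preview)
Your proof is correct and follows essentially the same approach as the paper: expand $p$ in the orthonormal Hermite basis, use Parseval to get $|c_n|\le 1$, and then bound the monomial coefficients of the Hermite polynomials. The only minor difference is that the paper bounds each coefficient of $H_i$ by $2^i$ directly from the three-term recurrence and then sums $\sum_{i=0}^k 2^i<2^{k+1}$, whereas you derive the same conclusion from the explicit closed form for $He_n$; both routes are equally valid and yield the stated bound.
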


\begin{proof}
We will use the Hermite polynomials. Recall that for $i=0,1,2,\cdot$ Hermite polynomials $\{H_i\}$ are the unique collection of polynomials over $\R$ that are orthogonal with respect to Gaussian distribution. In other words $\E_{x \in \Gauss_1}[H_i(x) H_j(x)]=0$ whenever $i\neq j$. In this work, we normalize the Hermite polynomials to further satisfy $\E_{x \in \Gauss_1}[H_i(x) H_i(x)]=1$. It is a standard fact from theory of orthogonal polynomials that $H_0(x)=1$, $H_1(x)=x$ and for $i\geq 2$ Hermite polynomials satisfy the following recursive identity:
	\[
	H_{i+1}(x) \cdot \sqrt{(i+1)!}
	=
	x H_i(x)\cdot \sqrt{i!} 
	-
	i \cdot
	H_{i-1}(x)\cdot \sqrt{(i-1)!}
	\]
	It follows immediately from the recursion relation that
		Each coefficient of $H_i$ is bounded by $2^i$ in absolute value.
	We expand $P(x)$ as a sum of Hermite polynomials\footnote{Note that the expansion below is always possible for a degree $k$ polynomial because polynomials of the form $H_{i}$ have degree at most $k$ and are linearly independent, because they are orthonormal with respect to the standard Gaussian distribution.}:
		\begin{equation}
			\label{eq: multi-dimentional Hermit expansion}
			p(x)
			=
			\sum_{
				i=0
			}^k
			\alpha_{i}
			H_{i}(x)
		\end{equation}
		Due to orthogonality of Hermite polynomials, we have:
		\[
		\sum_{
				i=0
			}^k
		\alpha_{i}^2
		=
		\E_{x \in \mathcal{N}(0, 1)}[(p(x))^2]
		\leq 1    
		\]
		In particular, this implies that each coefficient $ \alpha_{i}$ is bounded by $1$ in absolute value. Combining this with Equation \ref{eq: multi-dimentional Hermit expansion}, the fact that each coefficient of $H_i$ is bounded by $2^i$ in absolute value, we see that each coefficient of $p$ is bounded by $\sum_{i=0}^k 2^i<2^{k+1}$ in absolute value.
\end{proof}

\begin{proof}of \Cref{theorem: from approx to exact mom match}, implicit in \cite{pmlr-v195-diakonikolas23b-sq-mixtures}. Provided here for completeness. 

We first restate the setting of the theorem.
Let $k$ be a sufficiently large positive integer and let $\Dgeneric_{0}$
be a distribution supported on a finite subset of $\R$, and suppose
that for every $i\in\left\{ 0,\cdots,10k\right\} $ we have 
\begin{equation}
\left|\E_{x\sim \Dgeneric_{0}}\left[x^{i}\right]-\E_{x\sim\Gauss_1}\left[x^{i}\right]\right|\leq\frac{1}{k^{10k}},\label{eq: D0 has moments close to Gaussian-1}
\end{equation}
then we would like to show that there exists a distribution $\Dgeneric_{1}$
with the same support as $\Dgeneric_{0}$ satisfying $\E_{x\sim \Dgeneric_{0}}\left[x^{i}\right]=\E_{x\sim\Gauss_1}\left[x^{i}\right]$
for every $i\in\left\{ 0,\cdots,k\right\} $, and also satisfying
\[
\Pr_{x\sim \Dgeneric_{1}}[x=x_{0}]\geq0.9\Pr_{x\sim \Dgeneric_{0}}[x=x_{0}]
\]
for every $x_{0}$ in the support of $\Dgeneric_{0}$.

Let $N$ denote the number of elements in the support
of $\Dgeneric_{0}$ and let $\{x_{1},\cdots, x_{N}\}$ be the elements in the
support of $\Dgeneric_{0}$. Consider the following linear program:

\begin{align*}
\text{Find } &  & \mu_{x_{1}},\cdots\mu_{x_{N}}\\
s.t. &  & \E_{x\sim \Dgeneric_{0}}\left[\mu_{x}p(x)\right]=\E_{x\sim\Gauss_1}\left[p(x)\right] &  & \text{for every polynomial \ensuremath{p} of degree at most \ensuremath{k}}\\
 &  & \mu_{x_{j}}\geq0.9 &  & \text{for all \ensuremath{j\in\left\{ 1,\cdots N\right\} }}
\end{align*}
If the linear program above is feasible, then the proposition will
be satisfied by a distribution $\Dgeneric_{1}$ supported on ${x_{1},\cdots x_{N}}$
that has probability $\mu_{x_{j}}\Pr_{x\sim \Dgeneric_{0}}\left[x=x_{j}\right]$
on each $x_{j}$ (note that $\Dgeneric_{1}$ is indeed a probability distribution
because the equality $\sum_{j}\mu_{x_{j}}\Pr_{x\sim \Dgeneric_{0}}\left[x=x_{j}\right]=1$
follows by the constraint in the linear program when $p$ is identically
equal to $1$).

The linear program above is feasible if and only if its dual linear
program is infeasible. The dual linear program is as follows:

\begin{align}
\text{Find } &  & \text{polynomial }p\text{ of degree at most \ensuremath{k},}\label{eq: dual linear program}\\
s.t. &  & p(x_{j})\geq0 &  & \text{for all }j\in\left\{ 1,\cdots,N\right\} ,\nonumber \\
 &  & \E_{x\sim\Gauss_1}\left[p(x)\right]<0.9\E_{x\sim \Dgeneric_{0}}\left[p(x)\right]. &  & \text{}\nonumber 
\end{align}
It is now shown that the above is indeed infeasible if $\Dgeneric_{0}$ is
such that for every $i\in\left\{ 0,\cdots,10k\right\} $ we have$\left|\E_{x\sim \Dgeneric_{0}}\left[x^{i}\right]-\E_{x\sim\Gauss_1}\left[x^{i}\right]\right|\leq\frac{1}{k^{10k}}$.
For the sake of contradiction, suppose that the linear program above
is feasible and is satisfied by some polynomial $p$. Without loss
of generality, assume that $\E_{x\sim\Gauss_1}\left[\left(p(x)\right)^{2}\right]=1$,
because otherwise one could rescale $p$ while still satisfying the
dual linear program above. By Fact \ref{fact: normalized polynomials have not too large coefficients}
each coefficient of $p$ has absolute value of at most $2^{k+1}$. This
implies that each coefficient of $p^{2}$ has an absolute value of
at most $8^{k+1}$ and each coefficient of $p^{4}$ has an absolute
value of at most $32^{k+1}$. Combining these coefficient bounds with
Equation \ref{eq: D0 has moments close to Gaussian-1}, and applying
the triangle inequality, we see that 
\begin{align}
\left|\E_{x\sim \Dgeneric_{0}}\left[p(x)\right]-\E_{x\sim\Gauss_1}\left[p(x)\right]\right| & \leq\frac{(k+1)2^{k+1}}{k^{10k}},\label{eq: p has good exp}\\
\left|\E_{x\sim \Dgeneric_{0}}\left[\left(p(x)\right)^{2}\right]-\E_{x\sim\Gauss_1}\left[\left(p(x)\right)^{2}\right]\right| & \leq\frac{(2k+1)8^{k+1}}{k^{10k}},\label{eq: p2 has good exp}\\
\left|\E_{x\sim \Dgeneric_{0}}\left[\left(p(x)\right)^{4}\right]-\E_{x\sim\Gauss_1}\left[\left(p(x)\right)^{4}\right]\right| & \leq\frac{(4k+1)32^{k+1}}{k^{10k}}.\label{eq: p4 has good exp}
\end{align}
This allows us to upper-bound $\E_{x\sim \Dgeneric_{0}}\left[\left|p(x)\right|\right]$
as follows, where the first inequality follows by Equation \ref{eq: p has good exp}, the second by the fact that $p$ satisfies the Linear Program \ref{eq: dual linear program} and the equality because $p$ is positive on the support of $\Dgeneric_{0}$ due satisfying the Linear Program \ref{eq: dual linear program}.

\begin{equation}
{\frac{(k+1)2^{k+1}}{k^{10k}}\geq\E_{x\sim \Dgeneric_{0}}\left[p(x)\right]-\E_{x\sim\Gauss_1}\left[p(x)\right]}{>0.1\E_{x\sim \Dgeneric_{0}}\left[p(x)\right]}{=0.1\E_{x\sim \Dgeneric_{0}}\left[\left|p(x)\right|\right]}\label{eq: exp of p small}
\end{equation}
However, we can also lower-bound $\E_{x\sim \Dgeneric_{0}}\left[\left|p(x)\right|\right]$
in the following way

\begin{align}
\E_{x\sim \Dgeneric_{0}}\left[\left|p(x)\right|\right] &\overbrace{\geq\frac{\left(\E_{x\sim \Dgeneric_{0}}\left[\left(p(x)\right)^{2}\right]\right)^{3/2}}{\E_{x\sim \Dgeneric_{0}}\left[\left(p(x)\right)^{4}\right]}}^{\text{By generalized Holder inequality.}}\overbrace{\geq\frac{\left(\E_{x\sim\Gauss_1}\left[\left(p(x)\right)^{2}\right]-\frac{(2k+1)8^{k+1}}{k^{10k}}\right)^{3/2}}{\E_{x\sim\Gauss_1}\left[\left(p(x)\right)^{4}\right]+
\frac{(4k+1)32^{k+1}}{k^{10k}}
}}^{\text{By Equations \ref{eq: p2 has good exp} and \ref{eq: p4 has good exp}.}}\geq \nonumber\\
&{\geq\frac{\left(1-\frac{(2k+1)8^{k+1}}{k^{10k}}\right)^{3/2}}{(4k+1)32^{k+1}k!!+\frac{(4k+1)32^{k+1}}{k^{10k}}}}{\geq\frac{1}{k^{k}}},\;{\text{ for sufficiently large \ensuremath{k}.}}\label{eq: exp of p large}
\end{align}
where the prior to last inequality follows form the fact that $\E_{x\sim\Gauss_1}\left[\left(p(x)\right)^{4}\right]\leq(4k+1)32^{k+1}k!!$, as each coefficient of $p^{4}$ is at most $32^{k+1}$ in absolute value.
Overall, we see that Equations \ref{eq: exp of p large} and \ref{eq: exp of p small}
cannot hold simultaneously for a sufficiently large $k$ , contradiction.
\end{proof}

In order to conclude the proof of \Cref{proposition: detecting biased halfspaces is hard - appendix}, we a tool from \cite{diakonikolas2023sq}.

\begin{theorem}[{Special case of \cite{diakonikolas2023sq}}]
\label{theorem: moment matching implies lowe bound} 
Let $\Dgeneric$ be a distribution
over $\R$ such that for every $i\in\left\{ 0,\cdots,k\right\} $
we have $\E_{x\sim \Dgeneric}\left[x^{i}\right]=\E_{x\sim\Gauss_1}\left[x^{i}\right]$.
For a unit vector $\vv$, let $\Dgeneric_{\vv}$ denote the distribution
over $\R^{d}$ such that for $\x\sim \Dgeneric_{\vv}$ (i) the projection
$\x\cdot\vv$ is distributed as $\Dgeneric$ (ii) the projection
of $\x$ onto the subspace orthogonal to $\vv$ is distributed
as $\Gauss_{d-1}$ independently from $\x\cdot\vv$.
Suppose $\mathcal{A}$ is an SQ algorithm that distinguishes with
success probability at least $2/3$ the distribution $\Gauss_d$
from $\Dgeneric_{\vv}$, with $\vv$ a uniformly random unit vector.
Then, $\mathcal{A}$ either needs to use SQ tolerance of $k^{10k}d^{-0.1k}$
or make $2^{d^{\Omega(1)}}$ SQ queries.
\end{theorem}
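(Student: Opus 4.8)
The plan is to obtain this statement by the standard route for turning a ``hidden one-dimensional non-Gaussian component'' into a statistical query lower bound, so that the only problem-specific work is a single pairwise-correlation estimate. Fix a unit vector $\vv\in\Sphere^{d-1}$ and write $A_\vv(\x) := \frac{\diff \Dgeneric_\vv}{\diff \Gauss_d}(\x) - 1$. Since $\Dgeneric_\vv$ is $\Gauss_{d-1}$ in the directions orthogonal to $\vv$ and is $\Dgeneric$ along $\vv$, we have $A_\vv(\x) = A(\vv\cdot\x)$ where $A := \frac{\diff\Dgeneric}{\diff\Gauss_1} - 1$ is the one-dimensional excess likelihood ratio. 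Expanding $A$ in the Hermite basis $\{H_j\}$ of $\Gauss_1$ (normalized so $\E_{x\sim\Gauss_1}[H_j(x)^2]=1$), the $j$-th coefficient is $\hat A_j = \E_{x\sim\Dgeneric}[H_j(x)]$ for $j\ge 1$ and $\hat A_0 = 0$; hence the hypothesis that $\Dgeneric$ matches the first $k$ moments of $\Gauss_1$ is precisely the statement that $\hat A_j = 0$ for all $j \le k$, i.e.\ $A = \sum_{j>k}\hat A_j H_j$.

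The main computation is the pairwise-correlation bound. Using the correlated-Gaussian identity $\E_{\x\sim\Gauss_d}[H_i(\vv\cdot\x)\,H_j(\vv'\cdot\x)] = (\vv\cdot\vv')^{i}\,\ind\{i=j\}$, and that $\E_{\x\sim\Gauss_d}[A_\vv(\x)]=0$,
\[
\chi_{\Gauss_d}(\Dgeneric_\vv,\Dgeneric_{\vv'}) \;:=\; \E_{\x\sim\Gauss_d}\big[A_\vv(\x)\,A_{\vv'}(\x)\big] \;=\; \sum_{j>k}\hat A_j^2\,(\vv\cdot\vv')^{j}\,,
\]
so $\big|\chi_{\Gauss_d}(\Dgeneric_\vv,\Dgeneric_{\vv'})\big| \le |\vv\cdot\vv'|^{k+1}\sum_{j>k}\hat A_j^2$. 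Next I would pass to a finite family: a probabilistic argument gives a set $V\subseteq\Sphere^{d-1}$ with $|V| = 2^{d^{\Omega(1)}}$ and $|\vv\cdot\vv'|\le d^{-1/5}$ for all distinct $\vv,\vv'\in V$, so the pairwise correlations within $\{\Dgeneric_\vv\}_{\vv\in V}$ are at most $\beta := d^{-(k+1)/5}\big(\sum_{j>k}\hat A_j^2\big)$. Finally I would feed the family $\{\Dgeneric_\vv\}_{\vv\in V}$, with reference $\Gauss_d$, into the generic SQ-dimension lower bound for decision problems of \cite{diakonikolas2023sq} (following \cite{feldman2017statistical}): a family of $2^{d^{\Omega(1)}}$ distributions with pairwise correlations at most $\beta$ forces any SQ algorithm that distinguishes the reference from a uniformly random member either to make $2^{d^{\Omega(1)}}$ queries or to use a query of tolerance $O(\sqrt{\beta})$; since $\sqrt{d^{-(k+1)/5}} \le d^{-0.1k}$, this is the claimed bound once the tail factor is absorbed.

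The step I expect to be the main obstacle --- and the reason the tolerance carries a $k^{10k}$ rather than being a pure power of $d$ --- is controlling the tail $\sum_{j>k}\hat A_j^2 = \chi^2(\Dgeneric\,\|\,\Gauss_1)$, which is in fact infinite for the finitely supported distributions $\Dgeneric$ produced by \Cref{theorem: from approx to exact mom match}. The resolution, carried out carefully in \cite{diakonikolas2023sq}, is that an SQ algorithm making only $2^{d^{o(1)}}$ queries with values in $[-1,1]$ cannot exploit arbitrarily high Hermite degrees, so one may truncate $A$ to Hermite degree at most $m = \poly(k)$: the truncated part $\sum_{k < j \le m}\hat A_j^2$ is bounded by $k^{O(k)}$ using that the relevant $\Dgeneric$ is supported (up to negligible modification of its moments) on a set of radius $O(\log d)$ together with the fact that a normalized Hermite polynomial of degree $\le m$ has all coefficients bounded by $2^{m+1}$ (Fact~\ref{fact: normalized polynomials have not too large coefficients}), while the degree-$>m$ part contributes negligibly to the SQ analysis because, averaged over the random direction, $\E_{\vv,\vv'}[(\vv\cdot\vv')^{j}]$ decays like $d^{-j/2}$ and beats the at-most-$(\log d)^{O(j)}$ growth of those truncated moments. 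Everything else in the argument --- the Hermite/Mehler identity, the packing, and the invocation of the generic SQ-dimension theorem --- is routine.
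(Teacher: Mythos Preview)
The paper does not give its own proof of this statement: it is quoted as a black-box tool from \cite{diakonikolas2023sq} and immediately used to finish the proof of \Cref{proposition: detecting biased halfspaces is hard - appendix}. There is therefore no in-paper argument to compare your proposal against.

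That said, your sketch is the standard route used in the NGCA literature and in \cite{diakonikolas2023sq} itself: write the excess likelihood ratio along $\vv$ in the Hermite basis, use moment matching to kill degrees $\le k$, apply the Mehler-type identity $\E_{\Gauss_d}[H_i(\vv\cdot\x)H_j(\vv'\cdot\x)]=(\vv\cdot\vv')^i\ind\{i=j\}$ to bound pairwise correlations by $|\vv\cdot\vv'|^{k+1}$ times a tail factor, pack $2^{d^{\Omega(1)}}$ directions with pairwise inner product $d^{-\Omega(1)}$, and invoke the generic SQ-dimension lower bound. You also correctly flag the real technical issue --- that $\chi^2(\Dgeneric\,\|\,\Gauss_1)$ is infinite for the finitely supported $\Dgeneric$ actually used --- and outline the correct fix (truncate the Hermite degree and use that bounded queries cannot exploit arbitrarily high degrees when averaged over the random direction). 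One small caveat: your justification of the $k^{10k}$ factor appeals to support properties of the specific $\Dgeneric$ built in \Cref{theorem: from approx to exact mom match}, but the theorem as stated places no such hypothesis on $\Dgeneric$; the factor must come from the internal parameters of the cited result rather than from downstream assumptions, so in a self-contained proof you would need to make that dependence explicit (or state the theorem with an explicit $\chi^2$-type bound as a hypothesis).
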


\subsubsection{TDS Learning General Halfspaces is Hard for SQ Algorithms}

Finally, we prove Theorem \ref{theorem: TDS learning requires d to the log}
by combining the reduction of Proposition \ref{proposition: tds learning implies biased halfspace detection}
with the SQ lower bound of Proposition \ref{proposition: detecting biased halfspaces is hard}
to obtain an SQ lower bound for TDS learning of general halfspaces. 

Recall that in the setting of Theorem \ref{theorem: TDS learning requires d to the log}
for $\epsilon>0$, we let $d$ be chosen as $d=\frac{1}{\epsilon^{1/4}}$.
Suppose Theorem \ref{theorem: TDS learning requires d to the log} is
false. Then for a sequence of $\epsilon$ approaching $0$ there is
a TDS learning algorithm $\mathcal{A}$ for general halfspaces over
$\R^{d}$ with accuracy parameter $\epsilon$ and success probability
at least $0.95$. The algorithm $\mathcal{A}$ obtains at most $d^{\frac{\log1/\epsilon}{\log\log1/\epsilon}}$
samples from the training distribution and accesses the testing distribution
via $2^{d^{o(1)}}$ SQ querries of tolerance at least $d^{-o(\frac{\log1/\epsilon}{\log\log1/\epsilon})}$.

Combining this with Proposition \ref{proposition: tds learning implies biased halfspace detection},
we see that for an infinite sequence of values of positive $\epsilon$
that approaches zero, there exists an algorithm for $(\frac{1}{100}d^{-\frac{\log1/\epsilon}{\log\log1/\epsilon}},10\epsilon)$-biased
halfspace detection that uses $2^{d^{o(1)}}$ SQ querries of tolerance
$\min(d^{-o(\frac{\log1/\epsilon}{\log\log1/\epsilon})},\epsilon)=d^{-o(\frac{\log1/\epsilon}{\log\log1/\epsilon})}$
and has success probability at least $0.8$. However, for sufficiently
small values of $\epsilon$, this directly contradicts Proposition
\ref{proposition: detecting biased halfspaces is hard}. This finishes the
proof of Theorem \ref{theorem: TDS learning requires d to the log}.

\subsection{SQ Lower Bounds for Intersections of two Homogeneous Halfspaces}\label{section:appendix-hardness-two-homogeneous}

In order to prove \Cref{theorem:sq-two-homogeneous}, it suffices to reduce the anti-concentration detection problem of \Cref{theorem:sq-hardness-ac} to TDS learning of two homogeneous halfspaces.

The reduction follows the template of the proof of \Cref{proposition: tds learning implies biased halfspace detection}. In this case, we construct a distringuisher for the AC detection problem (between the two options (1) $\Gauss_d$ and (2) $\Dgeneric'$ described in \Cref{theorem:sq-hardness-ac}) by providing training examples of the form $(\x,-1)$, $\x\sim\Gauss_d$ to the input of the TDS algorithm and the SQ oracle for the unknown distribution as an oracle to the test marginal. 

The training data are with high probability consistent with the intersection of the halfspaces $H_1 = \{\x: (\sqrt{\alpha}\vu+\sqrt{1-\alpha}\vv)\cdot \x \ge 0\}$ and $H_2 = \{\x: (\sqrt{\alpha}\vu-\sqrt{1-\alpha}\vv)\cdot \x \ge 0\}$, where $\vv,\vu\in\S^{d-1}$, $V=\{\x:\vv\cdot \x = 0\}$ is the subspace where $\Dgeneric'$ assigns non-negligible mass, $\vu \perp \vv$ and $\alpha\in(0,1/2)$ is arbitrarily small (even exponentially in $d,\frac{1}{\eps}$). Assume, also, that the mass, under $\Dgeneric'$, of $V\cap\{\x:\vu\cdot\x\ge 0\}$ is greater than the mass of $V\cap\{\x:\vu\cdot\x< 0\}$ (otherwise, note that the training data are also consistent w.h.p. with the intersection of the complement of $H_1$ with the complement of $H_2$).

Suppose that the TDS algorithm rejects. Then, we have a certificate that the test data are not Gaussian and therefore we are in the case (2) of the distinguishing problem (w.h.p.). If the TDS algorithm accepts and outputs some hypothesis $\hat f$, then we query $\pr[\hat f(\x)=1]$ to the SQ oracle for the test marginal. If the test marginal was the Gaussian, then the value of the query should be very close to $0$ (because, upon acceptance, $\hat f$ achieves low error). If the test marginal was $\Dgeneric'$, then the value of the query should be bounded away from $0$, because $\Dgeneric'$ assigns non-negligible mass to the positive region of the intersection and $\hat f$ must achieve low error. Hence, the value of the query indicates the answer to the distinguishing problem.

\subsection{SQ Lower Bounds under Non-Degeneracy Condition}\label{section:appendix-hardness-under-non-degeneracy}

In \Cref{section:appendix-recovery-non-degenerate} we define a non-degeneracy condition (\Cref{definition:non-degeneracy}) which is sufficient to obtain an exponential improvement for the problem of approximately retrieving the relevant subspace (see \Cref{lemma:subspace-retrieval-non-degenerate-appendix}). This implies improved performance for our TDS learners for halfspace intersections. Importantly, our SQ lower bounds (\Cref{theorem:sq-two-homogeneous,theorem: TDS learning requires d to the log even balanced}) hold even for under the non-degeneracy condition and this enables us to compare our upper and lower bounds under this condition.

For \Cref{theorem:sq-two-homogeneous}, the unknown intersection of the hard construction is non-degenerate, because it corresponds to an intersection of two halfspaces with normals $\w_1,\w_2$ such that $\w_1,\w_2$ are pointing almost in opposite directions. This implies that after projecting $\w_2$ on the subspace orthogonal to $\w_1$, we obtain a direction $\w'$ such that the halfspace $\{\x:\w'\cdot \x\ge 0\}$ is consistent with all of the points in the interior of the unknown intersection and therefore, by \Cref{lemma:variance-reduction}, there is significant variance reduction in the direction of $\w'$. Overall, the constructed intersection is $2$-non-degenerate.

For \Cref{theorem: TDS learning requires d to the log even balanced}, the construction corresponds to an intersection of two halfspaces with normals $\w_1,\w_2$ such that $\w_1,\w_2$ are pointing (w.h.p. as $d$ increases) in almost orthogonal directions. In this case, we do not apply \Cref{lemma:variance-reduction} directly, because the statement is not tight when the residual vector $\vu = \frac{\w_2-\proj_{\w_1}\w_2}{\|\w_2-\proj_{\w_1}\w_2\|_2}$ is very close to $\w_2$. Instead, we refer to the proof of \Cref{lemma:variance-reduction}, which implies that, if $\vu\cdot \w_2$ is sufficiently close to $1$, then we have variance reduction along $\vu$ that indeed scales proportionally to the variance reduction along $\w_2$ and, hence, the corresponding intersection is $2$-non-degenerate.

\end{document}